\documentclass[12pt]{article}

\usepackage{amsmath,amsthm,amsfonts,amscd,latexsym,amssymb,amsbsy,dsfont,mathrsfs,color}
\usepackage[small, centerlast, it]{caption}
\usepackage{epsfig}
\usepackage[titles]{tocloft}
\usepackage[latin1]{inputenc}
\usepackage{verbatim} %for comments via \begin{comment}
\usepackage[active]{srcltx} %inverse search
\usepackage{fancyhdr}
\usepackage{caption}
\usepackage{enumerate}
\usepackage{color}
\usepackage{hyperref}
\hypersetup{
    hidelinks
}
\definecolor{NiColor}{RGB}{77,77,255}
\definecolor{NiColoRed}{RGB}{255,77,77}
\definecolor{NiCitation}{RGB}{77,255,77}

\def\sp{\hskip -5pt}
\def\spa{\hskip -3pt}

\newsymbol\rest 1316    
\def\emptyset{\varnothing} 
\def\b1{{1\!\!1}}
\def\pperp{{\perp\!\!\!\!\perp}}

\def\cB{\mathscr{B}}

\def\cF{{\ca F}}

\def\cH{{\ca H}}

\def\cL{\mathscr{L}}

\def\sS{{\mathsf S}}

\def\sV{\mathsf{V}}
\def\sT{{\mathsf T}}

\def\bC{{\mathbb C}}           %%%  complex numbers and so on

\def\bM{{\mathbb M}}

\def\bR{{\mathbb R}}

\def\gA{{\mathfrak A}}       %%% Ghotic
\def\gB{{\mathfrak B}}

\def\gD{{\mathfrak D}}

\def\beq{\begin{eqnarray}}
\def\eeq{\end{eqnarray}}

\newcommand{\ca}[1]{{\cal #1}}         %%  calligraphic

\usepackage{sectsty}
\sectionfont{\normalsize}%\normalfont
\subsectionfont{\normalsize\normalfont\itshape}
\newtheoremstyle{TheoremStyle}% <name>
{3pt}% <Space above>
{3pt}% <Space below>
{\slshape}% <Body font>
{}% <Indent amount>
{\bf}%{\itshape}% <Theorem head font>
{:}% <Punctuation after theorem head>
{.5em}% <Space after theorem head>
{}% <Theorem head spec (can be left empty, meaning 'normal')>

\theoremstyle{TheoremStyle}
\newtheorem{theorem}{Theorem}[section]
\newtheorem{corollary}[theorem]{Corollary}
\newtheorem{proposition}[theorem]{Proposition}
\newtheorem{lemma}[theorem]{Lemma}
\newtheorem{definition}[theorem]{Definition}
\newtheorem{remark}[theorem]{Remark}
\begin{document}

%%%%%%%%%%%%%   Title %%%%%%%%%%%%%%%%%%%%%%%%%%

\hfill{\sl Revised Version --  August 2026} 
\par 
\bigskip 
\par 
\rm 
\hfill Dedicated to the memory and work of Paul Busch
\\

%%%%%%%%%%%%%   Title %%%%%%%%%%%%%%%%%%%%%%%%%%

\par
\bigskip
\large
\noindent
{\bf  Spatial Localization of Relativistic Quantum Systems: The Commutativity Requirement and the Locality Principle. \\ Part I: A General Analysis}
\bigskip
\par
\rm
\normalsize 

%%%%%%%%%%%%%%%%%%%%%%%%%%%%%%%%%%%%%%%%%%%%%
%%%%%%%%%%%% Authors %%%%%%%%%%%%%%%%%%%%%%%%

\noindent  {\bf Valter Moretti$^{a}$ }\\
\par

\noindent 
  Department of  Mathematics, University of Trento, and INFN-TIFPA \\
 via Sommarive 14, I-38123  Povo (Trento), Italy.\\
 $^a$valter.moretti@unitn.it, corresponding author\\

 \normalsize

\par

\rm\normalsize

%\linespread{1.5}
\rm\normalsize

%%%%%%%%%%%% Date %%%%%%%%%%%%%%%%%%%%%%%%%%

\par
%\bigskip

%\noindent
\small
%{\bf Abstract}.

\begin{abstract}
We study the role of commutativity in the representation of relativistic locality for localization observables of relativistic quantum systems in Minkowski spacetime. A celebrated no-go theorem by Halvorson and Clifton shows that the commutativity of localization effects associated with causally separated regions is incompatible with other apparently natural assumptions about spatial localization. In particular, commutativity is assumed to provide the mathematical representation of locality in the Araki-Haag-Kastler formulation of quantum field theory. This raises the question of whether commutativity follows from more elementary locality principles of quantum theory.
Adopting Busch's operational analysis in terms of no-signaling and relativistic consistency, we argue that, for a particle-like system, the commutativity requirement does not follow from these principles. Under a natural local detectability principle, elementary localization observables are not localized in arbitrarily small spacetime neighborhoods of the corresponding spatial regions, but rather in regions containing the entire rest space (a Cauchy surface) on which the measurement is performed. This is a consequence of the very nature of a particle, which is assumed to be localized at a unique position on a rest space completely filled with ideal detectors. In this respect, there is no direct conflict with the Araki-Haag-Kastler formulation of local quantum physics.
On the other hand, we also show that commutativity and localization may coexist when one refers to less idealized localization procedures. Indeed, we introduce conditional localization POVMs associated with bounded spatial regions interpreted as laboratories. Owing to a result in quantum information theory known as the gentle measurement lemma, these observables describe conditional localization probabilities. In principle, these observables satisfy commutativity when associated with causally separated laboratories. As a consequence, they could be represented by local observables in the sense of Araki-Haag-Kastler. Explicit examples of such local observables satisfying the commutativity requirement are presented in a second paper within the framework of local quantum field theory.
\end{abstract}
\tableofcontents

\section{Introduction} 
 \subsection{Localization, Causality, Locality, Commutativity}
{\bf Newton-Wigner  localization and first issues with causality}. Spatial localization in relativistic quantum theory has long been a delicate issue, lying at the intersection of quantum measurement theory, spacetime causality, and the very notion of particle.

In non-relativistic quantum mechanics, localization is naturally described by projection-valued measures and, correspondingly, by selfadjoint position operators. 
This is a case of {\em sharp localization}, since the projectors $P(\Delta)$ of the spectral measure are labelled by arbitrarily small regions $\Delta$ of the rest space $\Sigma$ of an observer, and we can prepare states $\rho$ for which the probability of detecting a particle in such a bounded region is $tr(\rho P(\Delta))= 1$.

In the relativistic setting, however, this scheme encounters severe difficulties. The best-known example is the Newton-Wigner construction \cite{NW}. This notion of localization is given in terms of a triple of selfadjoint position observables associated with every rest space of a given inertial observer and with a given elementary particle in Wigner's sense (an irreducible positive-energy strongly continuous representation of the orthochronous inhomogeneous Poincar\'e group). Later, Wightman \cite{Wig} proved a uniqueness theorem based on Mackey's imprimitivity theory for the Euclidean group and some (actually awkward) regularity conditions. 

Newton-Wigner localization was proposed as a possible tool for reinterpreting localization in (free) quantum field theory so as to avoid some of the puzzling consequences related to the {\em Reeh-Schlieder property} \cite{Fleming}. Nevertheless, it was later shown that the counterintuitive features implied by the Reeh-Schlieder theorem persist in this reformulation \cite{H1}.

From the one-particle viewpoint, a further serious {\em causality} issue is raised by Hegerfeldt's celebrated analysis \cite{Hegerfeldt,Hegerfeldt2}. If energy, i.e., the selfadjoint generator of time translations, is positive (or bounded from below), sharp localization described by projection-valued measures gives rise to specific causality problems. In modern terms, and especially in the formulation later clarified by Castrigiano, these problems concern the incompatibility of sharp localization with {\em causal propagation} of localization probabilities: in the cases considered, localization probabilities are found to evolve superluminally (and could be used to propagate macroscopic information superluminally \cite{M23}).\\

\noindent{\bf Unsharp localization}.
 This strongly suggests that, in relativistic quantum theory, sharp notions of localization should be abandoned in favour of unsharp ones described by POVMs. In this case, {\em effects}, that is, positive operators bounded by the identity, $0\leq A(\Delta)\leq I$, rather than orthogonal projectors, are still associated with arbitrarily small regions $\Delta$ in the rest space $\Sigma$ of an observer, but the probability of finding the particle in a given bounded region is strictly $tr(\rho A(\Delta))<1$ for every state $\rho$, even though sharp states can be approximated arbitrarily well.

A sequence of works, following a road map initiated by Castrigiano \cite{Castrigiano2,M23,C23,DM24,C24,CDRM,DRCM26}, has shown that this strategy is in fact viable. A broad family of relativistic spatial localization observables has been constructed and analyzed for various elementary systems, with full compatibility with Poincar\'e covariance and with natural causal requirements, especially in the unsharp setting under the positive-energy requirement.
These developments indicate that Hegerfeldt's issue does not rule out unsharp localization under the requirement of energy positivity, but rather rules out an excessively rigid, projection-based notion of localization.\\

%%%%%%%%. METTERE COMMENTO SULLA LOCALIZAZIONE MODULARE 

\noindent {\bf Modular localization}. Newton--Wigner localization may nowadays be considered untenable, in view of its causality problems, when compared with the unsharp notion considered in this paper. However, another interesting approach to relativistic localization is provided by {\em modular localization}, which partially 
reintroduces some features of the Newton--Wigner approach. Early developments of this framework can be found in the work of Schroer \cite{SchroerML}, while its systematic formulation at the single-particle level, in terms of Wigner representations, was achieved in \cite{BGL}. In this framework, spacetime regions are associated with closed {\em real} subspaces of the complex one-particle Hilbert space. It is therefore still based on projections (rather than effects), but on {\em real}-linear orthogonal projections onto {\em real} subspaces, rather than on the usual complex-linear projections entering PVMs and the Born rule. A recent analysis by Lechner and de Oliveira \cite{LdO} develops this perspective into a causal and Poincar\'e-covariant localization scheme with values in the lattice of real projections.
Evaluating the resulting localization structure in a state gives rise to a {\em fuzzy probability measure}, which is generally nonadditive and therefore is not a genuine probability measure. Nevertheless, for separation scales large compared with the Compton wavelength, the corresponding modular localization scheme becomes essentially additive and, remarkably, approximates Newton--Wigner localization.
The relation between this alternative point of view and the POVM-based localization studied here deserves further investigation and will be pursued elsewhere.\\

\noindent {\bf Localization and relativistic locality in terms of commutativity}. Returning to localization schemes described by families of POVMs and to their causality problems,  
 %%%%%%%%%%%%%%%%%%%
 there is, however, a {\em second} and subtler causality problem, which concerns not the propagation of localization probabilities in time, but rather the representation of {\em relativistic locality}  itself in measurement theory. A central result in this direction is the no-go theorem by Halvorson and Clifton \cite{HC}, which may be regarded as the culmination of a line of investigation initiated by Schlieder \cite{S} and Jancewicz \cite{J}, sharpened by Malament \cite{Malament} for projector-valued localization, and later extended by Busch \cite{Buschloc}. Roughly speaking, Halvorson and Clifton proved that, under a family of assumptions that appear very general and natural for a relativistic localization scheme, the localization effects (elements of an alleged localization POVM) must be trivial. In this sense, the theorem seems to suggest a deep tension between relativistic localization and relativistic locality.
We stress that the theorem does not assume any particular definition of the quantum system under consideration, which, in principle, could be formalized as a Wigner elementary particle, a composite system, or even a quantum field. The assumptions concern additivity, translation covariance, positivity of energy, and a {\em microcausality} requirement expressed by commutativity of localization effects associated with suitably ``separated'' regions $\Delta,\Delta'$:
$$[A(\Delta),A(\Delta')] =0\:.$$
Actually, it is not even assumed that a full unitary representation of the Poincar\'e group exists. The relativistic perspective comes into play only to justify physically the commutativity of localization effects associated with ``separated'' regions, when these are interpreted as {\em causally} separated regions in the relativistic sense.

The crucial point is that the commutativity requirement entering the Halvorson-Clifton theorem is usually regarded as self-evident from the perspective of local quantum physics, especially within the Araki-Haag-Kastler framework \cite{Haag,Araki}, where locality is encoded precisely by commutativity of observables associated with causally separated spacetime regions. If the spatial region $\Delta$ of a rest space $\Sigma$ where a particle is found at a certain time $t$ belongs to the spacetime region ${\cal O}$ where the localization observable $A(\Delta)$ is localized (in the Araki-Haag-Kastler sense),
then the commutativity assumption would indeed appear unavoidable. From this standpoint, Halvorson-Clifton's result would represent an insurmountable obstruction to any notion of position of a quantum system at a given time in a reference frame. However, this line of reasoning already presupposes a specific mathematical representation of relativistic locality. It is therefore natural to investigate the interplay of commutativity and causal separation of the localization region, independently of the assumption that localization observables $A(\Delta)$ belong, in the strict sense, to Araki-Haag-Kastler local algebras $\gA({\cal O})$ attached to arbitrarily small spacetime neighborhoods ${\cal O}$ of the corresponding spatial regions: ${\cal O}\supset \Delta$. A first step in this direction was taken by Busch \cite{Buschloc}, building on a technical result with Singh \cite{BS}, by analyzing the relation between commutativity and relativistic locality directly within the general theory of quantum measurement.
In Busch's approach, an observable is localized in a spacetime region ${\cal O}$ if it can be measured with instruments operating there. This is also in the spirit of the Araki-Haag-Kastler framework; in contrast to that framework, however, no further requirement concerning the structure ($*$, $C^*$, or von Neumann algebra) of the family of observables localized in a spacetime region is imposed. Similarly, no {\em a priori} requirements are imposed on families of observables and operations associated with distinct spacetime regions.
In that setting, relativistic locality is formulated through operational principles such as the {\em no-signaling} condition and, in a stronger form, what is often called {\em relativistic consistency} \cite{tesi}, namely invariance of the statistics of suitable consecutive measurements under interchange of their temporal order when the measurements take place in causally separated regions. For broad classes of measurement procedures -- in particular for L\"uders measurements and related efficient schemes -- these principles imply commutativity of the relevant effects. This analysis is mathematically precise and conceptually important because it does not {\em assume} relativistic locality in the algebraic form of microcausality (i.e. commutativity), but rather attempts to {\em derive} commutativity from independent operational requirements.\\
 
\noindent {\bf Purposes of this work}. The main purpose of this paper is to investigate whether this approach, initiated by Busch, can actually justify the commutativity requirement for relativistic spatial localization observables. The answer we shall propose is, in general, negative. To this end, following Busch's approach, we adopt in particular a natural {\em local detectability} principle, which asserts that if instruments operate in an open spacetime region ${\cal O}$ and are used to detect a particle in a spatial region $\Delta$, then $\Delta \subset {\cal O}$.

The above negative answer is not due to a failure of no-signaling or relativistic consistency, but rather to the physical meaning of localization for a quantum system which we define to be a {\em particle}. Let us assume that a localization observable is meant to describe a system with the propensity to be found in a unique place when an ideal position measurement is performed on a whole rest space.
Then the elementary observable associated with detection in a region $\Delta$ is not naturally localized in an arbitrarily small spacetime neighborhood of $\Delta$.
On the contrary, once one adopts the above-stated local detectability principle, as we shall prove in Section \ref{NCn}, one is led to the conclusion that the elementary observable $\{A(\Delta),I-A(\Delta)\}$ is localized, in Busch's sense, in a region that must contain the whole rest space where the global localization experiment is performed. As a consequence, two elementary localization observables associated with spatially disjoint regions are not localized in causally separated spacetime regions, and therefore the no-signaling and relativistic-consistency arguments leading to commutativity cannot even be triggered. In this sense, the commutativity requirement for localization effects does {\em not} follow from those more elementary locality principles, since the very hypotheses for employing these principles are not satisfied.
This conclusion is physically tied to the {\em particle character} of the systems under consideration. A particle is assumed not to localize simultaneously in several (causally) disjoint regions in an ideal exhaustive detection experiment, but rather in a unique place. If one measures whether the particle is found in a region $\Delta$ within a complete position measurement on a rest space $\Sigma$, then one automatically obtains information about the complementary region $\Sigma\setminus \Delta$. From this viewpoint, the associated elementary observable is not confined to $\Delta$ alone: it carries information about the global detection arrangement on $\Sigma$. The failure of the commutativity argument is therefore not an exotic mathematical pathology, but rather a direct consequence of the physical meaning attributed to localization of a particle. If we do not accept this, we cannot accept the very notion of a particle.

These arguments support the claim that localization effects for particles do not belong to local algebras in the sense of the Araki-Haag-Kastler formulation when they are effects of a POVM defined on the whole rest space of a reference frame (a Cauchy surface).

The second aim of the paper is to indicate a possible way to make spatial localization and commutativity compatible in more realistic localization experiments.

We introduce a notion of {\em conditional localization} associated with a bounded spatial region $\Delta_0$, interpreted as the spatial part of a laboratory. The idea is to construct, starting from a general localization observable defined on a full rest space, a new POVM normalized on $\Delta_0$ which describes the  probability of finding the system in a subregion $\Delta\subset \Delta_0$, given that the system has been detected in $\Delta_0$.
The mathematical possibility of such a construction relies on a known result in quantum information theory called the {\em gentle measurement lemma}. It permits one to prove that the specific class of POVMs introduced here, normalized on a bounded spatial region $\Delta_0$ representing a laboratory, measures the conditional probability of finding a particle in $\Delta\subset \Delta_0$ if we know that the system was detected in $\Delta_0$.
This is all the more true the closer the probability of finding the particle in $\Delta_0$ is to $1$. These conditional POVMs are intrinsically local to the laboratory and, unlike the original localization observables on the whole rest space, are not affected by the same problem that prevents the use of no-signaling or relativistic-consistency arguments. We shall argue that pairs of such conditional POVMs associated with causally separated laboratories may, in principle, satisfy commutativity and belong to local algebras concentrated around the laboratory region in the Araki-Haag-Kastler sense. This suggests that conditional localization may provide a mathematically and physically appropriate notion of local position measurement in relativistic quantum theory.

Explicit examples of these local observables, satisfying a commutativity requirement and constructed in terms of local energies, are presented in a second paper \cite{M26} within the framework of local QFT.

\subsection{Structure of the work}
The paper is organized as follows. In Section 2 we review the notion of relativistic spatial localization observable and the relevant causal conditions, and we recall the Halvorson-Clifton no-go theorem for unsharp localization systems. We also discuss the relation of the problem with the Araki-Haag-Kastler framework. In Section 3 we summarize Busch's operational analysis of locality in terms of no-signaling and relativistic consistency, including the relevant results connecting these principles with commutativity of effects. In Section 4 we apply that analysis to relativistic localization observables and prove that, under a natural local detectability principle and for systems with particle-like localization behaviour, commutativity of causally separated localization effects is not implied by those operational locality principles. In Section 5 we introduce conditional localization POVMs in bounded laboratories, study their basic mathematical properties, and discuss why they may evade the previous problem. The paper closes with conclusions and an outlook toward a subsequent work, where explicit examples of localization observables and conditional local position observables will be constructed within a local quantum-field-theoretic framework.

\subsection{Notions and notation}   
We assume $c=1$ and $\hbar=1$ throughout the rest of this paper; the notation $A\subset B$ includes the case $A=B$. All Hilbert spaces used in this work are complex and {\em separable}.\\  
  
\noindent {\bf Minkowski spacetime}.  
A four-dimensional real affine space whose space of translations $\sV$ is equipped with a non-degenerate symmetric bilinear form $g$ with signature $(-,+,+,+)$ is called the {\bf Minkowski spacetime} $\bM$. The points of $\bM$ are called {\bf events}, and $g$ is called the {\bf Minkowski metric}.    
%We shall make use of the notation $u\cdot v :=g(u,v)$ if $u,v \in \sV$.  We also use the dot to indicate the standard (positive) scalar product of $3$-vectors $\vec{u}\cdot \vec{v}$ viewing them as spacelike vectors (see below).  
%Upon the choice of an origin $o\in \bM$, the points $p\in \bM$ are one-to-one with vectors of $\sV$ through the map $\bM \ni p \mapsto p-o \in \sV$. We shall take advantage of this identification several times in the rest of this paper.  If $p\in \bM$ and $v\in \sV$, $q=p+v$ means $q-p=v$.   
A vector $v\in \sV$  
is {\bf spacelike} if $g(v,v)>0$ or $v=0$.  
It is {\bf causal} if $g(v,v)\leq 0$ and $v\neq 0$. A causal vector $v$ is {\bf timelike} if $g(v,v) <0$,  
or {\bf lightlike} if $g(v,v) =0$.  
Smooth curves are classified analogously according to their tangent vectors, provided their causal type is constant along the curve.
   
A set $\Lambda\subset \bM$ is {\bf achronal} if $p-q$ cannot be timelike for $p,q\in \Lambda$, and {\bf spacelike} if $p-q$ is spacelike for $p\neq q$.   
%A {\bf maximal achronal set} is an achronal set which is not a proper subset of a similar set.  $\Lambda \subset \bM$ is {\bf spacelike} if  $p-q$ spacelike  
% for $p,q\in \Sigma$.  
  
The set of timelike vectors is an open cone consisting of two disjoint open connected components. A choice of one of them $\sV_+$ defines a {\bf time orientation} of  
$\bM$. The latter is henceforth assumed to be {\bf time oriented}: $\sV_+\subset \sV$ is the open cone of {\bf future-directed timelike vectors}. $\overline{\sV_+} \setminus\{0\}$ is the cone of {\bf future-directed causal vectors}. Notice that if $v \in \overline{\sV_+}$ and $u$ is causal, then $u \in \overline{\sV_+}$ if and only if $g(u,v) \leq 0$.  
We finally define $\sT_+ := \{u\in \sV_+ \:|\: g(u,u)=-1\}$.

If $A\subset \bM$, the {\bf causal future} $J^+(A):= \{p\in \bM \:|\: p-q \in \overline{\sV_+} \quad \mbox{for some $q\in A$}\}$ represents the events of $\bM$ in the future of $A$ that can be physically influenced by $A$.  
The {\bf causal past} $J^-(A):= \{p\in \bM \:|\: q-p \in \overline{\sV_+} \quad \mbox{for some $q\in A$}\}$ is defined symmetrically.  
  
$A,B \subset \bM$ are said to be {\bf causally separated} if $(J^+(A)\cup J^-(A)) \cap B= \emptyset$ (which is equivalent to $(J^+(B)\cup J^-(B)) \cap A= \emptyset$).  
 
 If $R\subset \bM$, its {\bf causal complement} and its {\bf causal completion}, respectively, are \beq R^\pperp := \bM \setminus((J^+(R)\cup J^-(R))) \quad \mbox{and}\quad(R^{\pperp})^{\pperp} \:.\label{CComp}\eeq   It is easy to prove that $R\subset (R^{\pperp})^{\pperp}$. If $R\subset \Sigma$, the latter being a flat spacelike 3-plane, then $(R^{\pperp})^{\pperp}$ turns out to consist of the points $p\in \bM$ such that every causal {\em straight line} passing through $p$ meets $R$ somewhere\footnote{  
This is equivalent, in $\bM$, to the set of points $p$ such that every inextensible causal curve passing through $p$ also meets $R$ somewhere. This latter set, in a generic spacetime $M$, is also called the {\em domain of dependence} of $R$ (usually when $R$ is also assumed to be achronal).}.

A {\bf Minkowskian reference frame} -- physically representing an {\bf inertial reference frame} or an {\bf observer} -- is a unit timelike vector $n\in \sT_+$.

A {\bf rest space} of a Minkowski reference frame $u\in \sT_+$ is an affine $3$-plane $\Sigma\subset \bM$ $g$-normal to $u$, written $u\perp \Sigma$.  Notice that $\Sigma$ is therefore necessarily spacelike.
If $o\in \bM$ is a given origin, the family of rest spaces of $u\in \sT_+$ is labeled by the time at which they occur in the said reference frame. 
Every spacelike affine $3$-plane $\Sigma$ is, at some time, the rest space of a Minkowskian reference frame, which is then said to be {\bf adapted} to $\Sigma$. 
   
$d\Sigma$ denotes the natural translationally invariant Borel measure on the rest space $\Sigma$ of $u$: the Lebesgue measure on $\Sigma$ induced by the metric. 
$\cB(\Sigma)$ denotes the family of Borel subsets of $\Sigma$.  
We use the notation $|\Delta| = \int_\Sigma \chi_\Delta(x) d\Sigma(x)$  
%= \int_{\bR^3} \chi_\Delta(x) d^3x$   
if $\Delta \in \cB(\Sigma)$.

The Lie group of metric-preserving affine maps $h: \bM \to \bM$ is known as the {\bf Poincar\'e group} $IO(1,3)$. The Lie subgroup of affine maps that also preserve the time orientation is called the {\bf orthochronous Poincar\'e group} $IO(1,3)_+$.   
The subgroup of $IO(1,3)_+$ that leaves fixed an arbitrarily chosen origin\footnote{Different choices of $o$ give rise to isomorphic definitions and the component $\Lambda$ of an element of $IO(1,3)_+$ does not depend on the choice of $o$.} $o\in \bM$ is the {\bf orthochronous Lorentz group} $O(1,3)_+$.  
Elements $h\in IO(1,3)_+$ are in one-to-one correspondence with pairs $(\Lambda, v)$ where $v\in \sV$ and $\Lambda \in O(1,3)_+$, and the action of $h$ on a point $q\in \bM$ is  
\beq (\Lambda, v) q = o + v+ \Lambda (q-o)\:. \eeq

The subgroup of $IO(1,3)_+$ known as the {\bf proper orthochronous Poincar\'e group} $ISO(1,3)_+$ is obtained by replacing $O(1,3)_+$ with the {\bf proper orthochronous Lorentz group} $SO(1,3)_+$.  
The latter, representing $O(1,3)_+$ in a Minkowskian coordinate system as above, is obtained by restricting to Lorentz matrices with $\det \Lambda >0$.   
  
For some physical systems, such as fermions, Poincar\'e invariance is implemented in terms of {\em projective unitary representations}   
$V: ISO(1,3)_+ \to \gB(\cH)$ that become genuinely unitary when viewed as   
representations of the universal covering of the {\bf proper orthochronous Poincar\'e group} $\widetilde{ISO(1,3)_+}$. In other words (see, e.g., \cite{Moretti2}), the representation  
$U:= V \circ \pi : \widetilde{ISO(1,3)_+} \to \gB(\cH)$  
is unitary if $\pi: \widetilde{ISO(1,3)_+} \to ISO(1,3)_+$ is the canonical covering homomorphism.  
This result is obviously valid even if the initial representation $V$ is genuinely unitary. As a consequence, we sometimes replace $ISO(1,3)_+$ with $\widetilde{ISO(1,3)_+}$ in some definitions in order to cover a larger class   
of physical situations.\\  
  
\noindent {\bf Operators and observables}.  
An operator $A$ on a Hilbert space $\cH$ has its own domain $D(A)\subset \cH$ and is written $A:D(A)\to\cH$. Combinations and compositions of operators $A+B$, $AB$, and $aA$ ($a\in \bC$) are defined on their {\bf standard domains}  
$D(A+B):= D(A)\cap D(B)$, $D(AB):= \{x\in D(B) \:|\: Bx \in D(A)\}$, $D(aA):=D(A)$ unless $a=0$, for which $D(0A):=D(0) := \cH$.  
(The standard domain $D(AB)$ is defined analogously when the domains and ranges of $A$ and $B$ are subspaces of different Hilbert spaces, provided the composition $AB$ makes sense.) The adjoint operator is denoted by $A^\dagger: D(A^\dagger) \to \cH$.

$\gB(\cH)$ denotes the $C^*$-algebra of bounded operators $A:\cH\to \cH$ for a Hilbert space $\cH$. If $A,B \in \gB(\cH)$, $A\geq B$ (equivalently $B\leq A$) means that $A-B\geq 0$; in turn, 
$A\geq 0$ means that $\langle \psi|A\psi \rangle \geq 0$ if $\psi\in \cH$. In that case we say that the operator $A$ is {\bf positive}.

The (generalized) notion of observable that we shall use throughout is that of a {\bf Positive Operator-Valued Measure} (POVM) on a Hilbert space $\cH$. It is a map   
$\cF(X) \ni B \mapsto E(B)$, where $\cF(X)$ is a $\sigma$-algebra on the set $X$, each $E(B)\in \gB(\cH)$ is an {\bf effect}, i.e., it satisfies $0\leq E(B)\leq I$, together with the {\bf normalization condition} $E(X)=I$, and, finally, the requirement that  
for every $\psi \in \cH$, the associated map $\cF(X) \ni B \mapsto \langle \psi|E(B)\psi\rangle$ is $\sigma$-additive and therefore defines a positive measure on $X$, which is a probability measure if $||\psi||=1$. Due to the positivity of the operators involved, this condition is equivalent to the strong $\sigma$-additivity of the map $\cF(X) \ni B \mapsto E(B)$, which, obviously, is also additive.  
$X$ is interpreted as the set of outcomes of the observable described by the POVM $\cF(X) \ni B \mapsto E(B)$.

Generally, mixed states $\rho$ are trace-class operators $\rho \in \gB_1(\cH)$, positive ($\rho \ge 0$), and normalized ($\mathrm{tr} \rho = 1$). The convex body of states will be denoted by $\sS(\cH)$.  
A special case of states is given by one-dimensional projectors $\rho = |\psi\rangle\langle \psi|$ for unit vectors\footnote{These are pure states, i.e., extremal elements in the space of states if the von Neumann algebra of observables is the whole $\gB(\cH)$.}, $\psi \in \cH$.   
  
For a state $\rho\in \sS(\cH)$, $tr(E(B)\rho)= tr(\rho E(B))$ is interpreted as the probability of obtaining an outcome in $B$ when the system is in the state $\rho$.  
  
If $\cF(\bR)=\cB(\bR)$ and all the effects $E(B)$ are orthogonal projectors, we have a standard {\bf Projector-Valued Measure} (PVM). As is well known, every PVM is in one-to-one correspondence with a selfadjoint operator $\hat{E}= \int_{\bR} xE(dx)$ through the spectral theorem. In this sense, a POVM is a generalized observable.  
  
In the special case where $\cF(X)$ is the power set of $X:= \{1,\ldots, N\}$, a POVM on $X$ is completely determined by the effects $E_j := E(\{j\})$ with $j\in \{1,\ldots, N\}$.  
As a general textbook reference for this mathematical formalism and its applications to physics, we recommend \cite{Buschbook}. The references on general spectral theory applied to physics that we shall use are \cite{Moretti1,Moretti2}.  
 
\section{Halvorson--Clifton's no-go result for unsharp localizations}  
In what follows, we first introduce some types of spatial localization, based on the POVM formalism, which are compatible with Hegerfeldt's locality constraint. We shall simultaneously review Hegerfeldt's causality conditions in the modern presentation due to Castrigiano.   
Afterwards, we shall present the Halvorson--Clifton no-go result \cite{HC}. Finally, we shall focus on the crucial {\em commutativity} hypothesis assumed by Halvorson and Clifton from the standpoint of Busch's analysis of locality \cite{Buschloc} and of more recent developments on the subject, especially those collected in Beck's thesis \cite{tesi}.

\subsection{Types of unsharp spatial localization} \label{SECUNS}

We now consider a relativistic quantum system in Minkowski spacetime $\bM$ described in a complex Hilbert space $\cH$.  
An unsharp notion of localization in the region $\Delta$ of the rest space $\Sigma$ of a Minkowski reference frame $u\in \sT_+$ at a given time in $\bM$ is represented by an {\em effect} $A_0(\Delta) \in \gB(\cH)$. Therefore  $\mathrm{tr}(A_0(\Delta)\rho)$ (or $\langle \psi|A_0(\Delta)\psi\rangle$ for vector states) is the probability of finding the system in $\Delta$ -- at the instant of time defining $\Sigma$ -- when the pre-measurement state is $\rho\in \sS(\cH)$.  
We also expect that spacetime translations of the employed instruments are represented by some unitary representation of the translation group $\sV$ of $\bM$. 
Based on these minimal principles, we can formally define, according to \cite{Buschloc,HC}:\\ 
 
\begin{definition}[Unsharp Localization System]\label{UNSH} 
An {\bf unsharp localization system} in Minkowski spacetime $\bM$ is a quadruple $(\cH, {\cal R}_0, A_0, U_0)$ where: 
\begin{itemize} 
\item[(a)] $\cH$ is a complex Hilbert space; 
\item[(b)] $A_0: \mathcal{R}_0 \ni \Delta \mapsto A_0(\Delta) \in \gB(\cH)$ is an effect-valued map ($0 \le A_0(\Delta) \le I$), with $\mathcal{R}_0$ a family of bounded sets of a foliation of parallel spacelike  
3-planes in $\bM$; 
\item[(c)] $U_0: \sV \ni a \mapsto U_{0a}\in \gB(\cH)$ is a strongly continuous unitary representation of the translation group $\sV$ of the affine space $\bM$. \hfill $\blacksquare$ 
\end{itemize}  
\end{definition} 
 
The above structure is sufficient for stating the most basic results. However, for the development of our discussion we shall need a more elaborate construction, which specializes Definition \ref{UNSH} and has also been used in recent literature.  
From a physical perspective, the mathematical machinery introduced in Definition \ref{UNSH} does not seem sufficient to capture the physical requirements embodied in an (idealized) exhaustive detection experiment. If we want to interrogate a quantum system about its position in space, we should force it to take a position in space. Assuming that a detection is an interaction with an instrument, this means that we have to fill the {\em entire} space with detectors. Within this setting, the system localizes somewhere at the given time. 
If detectors are mathematically represented by effects $A_0(\Delta)$, we are forced to assume that either  
infinitely extended regions $\Delta$ are permitted, or the map $\Delta \mapsto A_0(\Delta)$ is $\sigma$-additive (and not simply additive), or both, in order to make contact with the standard formulation of probability theory. Another observation is that the relativity principle suggests covariance with respect to the whole orthochronous Poincar\'e group. 
According to these ideas, we can state the following general definition of a {\em relativistic spatial localization observable}, adapted\footnote{In those references, the POVMs on the various $\Sigma$ were defined independently with a further coherence condition, and an absolute continuity condition was added in \cite{DM24}.} from \cite{M23,DM24}. This notion, in a slightly simplified version, was first introduced and thoroughly analyzed by Castrigiano (see \cite{Castrigiano2} and references therein) under the name of {\em Poincar\'e-covariant POL}.\\ 
 
\begin{definition}[Relativistic Spatial Localization Observable]\label{REMM0} 
A {\bf relativistic spatial localization observable} is a quadruple $(\cH, {\cal R}, A, U)$ where  
\begin{itemize} 
 \item[(a)] ${\cal R}= \cup \{\cB(\Sigma)\:|\: \Sigma \subset \bM \: \mbox{spacelike  $3$-plane}\}$ where $\cB(\Sigma)$ 
is the Borel $\sigma$-algebra on $\Sigma$; 
    \item[(b)]  $A= \{A^v\}_{v\in \sS}$ is a family of  maps $A^v: {\cal R} \to \gB(\cH)$ -- where $\sS$ is a set of tensors of definite order which is invariant under $O(1,3)_+$ -- 
 such that every restriction 
    $A^v\spa \rest_{\cB(\Sigma)} : \cB(\Sigma) \to \gB(\cH)$ is a (normalized) POVM; 
\item[(c)]  $U: IO(1,3)_+ \to \gB(\cH)$  is a  strongly-continuous unitary representation of the orthochronous Poincar\'e group; 
\item[(d)] $A$ is $U$-covariant, i.e.,  $U_g A^v(\Delta) U_g^{-1} = A^{gv}(g\Delta)$ for every $g\in {IO(1,3)_+}$ and $\Delta \in {\cal R}$, 
 where  
$\sS \ni v\mapsto gv\in \sS$ denotes the action of  ${IO(1,3)_+}$ on the tensors in $\sS$. \end{itemize} 
 
In (a), possibly $\sS=\{1\}$, so that $A$ includes a unique element denoted by the same symbol $A$. In this case the relativistic spatial localization observable is said to be {\bf scalar}. 
 
In (c), $U$ may be generalized to a strongly-continuous unitary representation of the universal covering of the proper orthochronous Poincar\'e group $\widetilde{ISO(1,3)_+}$ provided $g\Delta$ and $gv$ in (d) are respectively replaced  
by $\pi(g)\Delta$ and $\pi(g) v$ -- where $\pi: \widetilde{ISO(1,3)_+}  \to  {ISO(1,3)_+} $ is the canonical covering homomorphism.\\ 
\end{definition}

Consider a relativistic spatial localization observable $(\cH, {\cal R}, A, U)$.  
Define the subfamily ${\cal R}_0\subset {\cal R}$ containing the bounded elements of ${\cal R}$ that lie in a chosen foliation of parallel spacelike $3$-planes, and let $U_0$ be the restriction of $U$ to the subgroup of spacetime translations. In this case 
$(\cH, {\cal R}_0, A_0:=A^v\spa\rest_{{\cal R}_0}, U_0)$ 
is an unsharp localization system for every given $v\in \sS$. In this sense, the notion of relativistic spatial localization observable is an {\em extension} of the notion of unsharp localization system. 
In both definitions, the effects $A^{v}(\Delta)$ are allowed to be orthogonal projectors, and some (or all) of the considered POVMs may be PVMs.

At this stage, we can state, in modern language, the causality conditions implicit in Hegerfeldt's original analysis \cite{Hegerfeldt, Hegerfeldt2} concerning the impossibility of superluminal propagation of detection probability.\\

\begin{definition}[Castrigiano's causality conditions]\label{REMM0444} 
{\em A relativistic spatial localization observable $(\cH, {\cal R}, A, U)$ is {\bf causal} if it satisfies the {\bf  causal condition}: for every $\Delta \in {\cal R}$, every spacelike $3$-plane $\Sigma$ and every $v\in \sS$ 
\begin{itemize} 
    \item[{\bf (CC)}] $\quad$ $A^v(\Delta) \leq A^v(\Delta_\Sigma) \quad \mbox{where $\Delta_\Sigma:= \Sigma \cap (J^+(\Delta)\cup J^{-}(\Delta))$ provided $\Delta_\Sigma \in {\cal R}$.}$ 
  \end{itemize} 
A relativistic spatial localization observable $(\cH, {\cal R}, A, U)$ satisfies the {\bf causal time condition} with respect to a Minkowskian reference frame $u\in \sT_+$\footnote{The original notion of CT \cite{Castrigiano2} does not assume Poincar\'e covariance and refers to a preferred notion of time evolution. Assuming covariance, our definition immediately arises from the original one.} if, for every pair of spacelike $3$-planes $\Sigma_0, \Sigma$ normal to $u$ and 
$\Delta \in {\cal R}$ with  
$\Delta \subset \Sigma_0$, and every $v\in \sS$, we have  
\begin{itemize} 
    \item[{\bf (CT)}] $\quad$ $A^v(\Delta) \leq A^v(\Delta_\Sigma) \quad \mbox{where $\Delta_\Sigma:= \Sigma \cap (J^+(\Delta)\cup J^{-}(\Delta))$ provided $\Delta_\Sigma \in {\cal R}$.}$ \hfill $\blacksquare$ 
    \end{itemize}}  
\end{definition}

\noindent Evidently, CC $\Rightarrow$ CT, but the converse implication is generally false (see \cite{M23}).\\ 
 
\begin{remark}  
{\em\begin{itemize} 
\item[(1)] It follows from Theorem (9) in \cite{Castrigiano2} that every relativistic spatial localization observable defined on a {\em separable} Hilbert space\footnote{As is the case for the Hilbert spaces encountered in this work.}, solely by virtue of covariance with respect to the translation group of a spatial $3$-plane $\Sigma$, satisfies $A^v(\Delta)=0$, for $\Delta \in \cB(\Sigma)$, if and only if $|\Delta|=0$. In particular, $\cB(\Sigma)\ni \Delta \mapsto \langle\psi| A^v(\Delta)\psi \rangle$ is absolutely continuous with respect to the Lebesgue measure $d\Sigma$ on the $3$-plane $\Sigma$ for every given $\psi\in \cH$. 
 
    \item[(2)] As a consequence of (1), every $A^v|_{\cB(\Sigma)}$ can be uniquely completed to a POVM defined on the Lebesgue $\sigma$-algebra $\cL(\Sigma)$. 
 
\item[(3)] If one completes ${\cal R}$ by including the Lebesgue sets, we have  
$\Delta_\Sigma \in {\cal R}$ if $\Delta \in {\cal R}$, by Lemma (16) in \cite{Castrigiano2}. With this extension, the requirement ``provided $\Delta_\Sigma \in {\cal R}$'' in (CC) and (CT) can be omitted. \hfill $\blacksquare$\\ 
\end{itemize}} 
\end{remark} 
 
\noindent We stress that  
CC includes the original causality condition discussed by Hegerfeldt in his celebrated works \cite{Hegerfeldt, Hegerfeldt2}, which, in Castrigiano's modern setting, corresponds to CT. Hegerfeldt proved that CT is violated by relativistic spatial localization observables described in terms of PVMs, in particular Newton--Wigner localization. This puzzling result was established under the hypothesis that the generators of timelike translations have spectra bounded from below \cite{Castrigiano2, M23}. 
Violation of CT implies violation of CC. In summary, if we assume CC, we are forced to {\em abandon every sharp notion of localization} described by PVMs in order to comply with causality.
The explicit examples of relativistic spatial localization observables presented in 
\cite{Castrigiano2, M23, DM24, C23, C24, CDRM, DRCM26} for various types of particles\footnote{These relativistic spatial localization observables, when associated with {\em Wigner elementary particles}, have positive selfadjoint generators of time translations.} show that Hegerfeldt's causality issue can in fact be considered harmless. This is because CC is satisfied provided one uses suitable unsharp localizations in terms of POVMs as in Definition \ref{REMM0}. 
  
The notion of unsharp localization in Definition \ref{REMM0} has been extended further by allowing rest spaces  
$S$ that are less rigid than the planes $\Sigma$. They can be chosen as {\em smooth Cauchy surfaces} (especially spacelike ones) \cite{DM24} or even (Lipschitz) {\em maximal achronal sets} 
of $\bM$ \cite{C24}, not necessarily Cauchy surfaces. Adopting these more general settings, the (generalized) causal condition CC turns out to be {\em equivalent} \cite{DM24,C24} to the normalization of the POVMs defined on all (generalized) rest spaces. 
As an important byproduct, it was recently proved in \cite{CDRM,DRCM26} that extending the formalism to achronal sets leads to a representation of the so-called {\em causal logic of spacetime} in terms of effects for both massive bosons and massive fermions.

\subsection{The no-go result by Halvorson and Clifton} 
Although Hegerfeldt's causality issue seems to be solved by using a suitable unsharp notion of localization, a second type of causality issue remains. It concerns the representation of locality in terms of commuting operators associated with observables localized in causally separated regions. The most modern version of this issue was formulated in the famous   
Halvorson-Clifton paper \cite{HC}, extending Schlieder's \cite{S}, Jancewicz's \cite{J},
and 
Malament's \cite{Malament} arguments (valid for sharp localization) to the case of unsharp localization.\\ 
 
\begin{theorem}[Halvorson-Clifton] \label{HC} 
Let $(\cH, {\cal R}_0, A_0, U_0)$ be an unsharp localization system in Minkowski spacetime satisfying the following properties: 
\begin{itemize} 
\item[1.] {\bf Additivity}: ${\cal R}_0$ is closed with respect to finite unions of regions in the same 3-plane, and $A_0$ is additive thereon: $$A_0(\Delta) + A_0(\Delta') = A_0(\Delta \cup \Delta') \quad \mbox{for $\Delta, \Delta' \subset \Sigma$ with $\Delta\cap\Delta'=\emptyset$ and $\Delta,\Delta'\in {\cal R}_0$.}$$ 
\item[2.] {\bf Translation covariance}: ${\cal R}_0$ is closed under translations of $\sV$, namely $\Delta_{v} := \Delta + {v} \in {\cal R}_0$ if ${v} \in \sV$ and $\Delta \in {\cal R}_0$, and 
$$U_{0v} A_0(\Delta) U^{-1}_{0v} = A_0(\Delta + {v}).$$ 
\item[3.] {\bf Energy bounded below}: If ${u} \in \sV_+$, the self-adjoint generator $H_{0 u}$ of the strongly continuous unitary one-parameter group 
$$\bR \ni t \mapsto U_{0t{u}} = e^{-i t H_{0u}}$$ 
has spectrum bounded from below. 
\item[4.] {\bf Microcausality}: Let $\Delta, \Delta' \in {\cal R}_0$ lie in a common spacelike 3-plane $\Sigma \subset \bM$, and suppose they are disjoint with strictly positive distance. If ${u} \in \sV_+$, there exists $\delta > 0$ such that  
$$[A_0(\Delta), A_0(\Delta' + t{u})] = 0 \quad \mbox{for } 0 \le t < \delta.$$ 
\end{itemize} 
Under hypotheses (1)-(4), it holds that $A_0(\Delta) = 0$ for every $\Delta \in {\cal R}_0$. 
\end{theorem}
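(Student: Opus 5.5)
The strategy is the classical Borchers--Hegerfeldt analyticity argument in the form used by Halvorson and Clifton: positivity of energy turns local commutativity into global commutativity, and then additivity plus translation covariance force the square of each effect to vanish.

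\textbf{Step 1: from short-time commutativity to commutativity for all times.} Fix $\Delta,\Delta'\in{\cal R}_0$ in a common plane $\Sigma$ at strictly positive distance, and fix $u\in\sV_+$. For vectors $\phi,\psi\in\cH$ consider
$$f(t):=\langle\phi|\,[A_0(\Delta),U_{0tu}A_0(\Delta')U_{0tu}^{-1}]\,\psi\rangle, \qquad t\in\bR .$$
Expand the commutator. Each of the two terms has the form $\langle\chi_1|e^{-itH_{0u}}\chi_2\rangle\cdot$ (a factor with $e^{+itH}$) inserted between bounded operators. More precisely, $f(t)$ is a difference of $F_1(t)=\langle \phi'|e^{-itH}\,B\,e^{itH}\psi'\rangle$-type expressions.

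The standard trick is the following. Since the spectrum of $H_{0u}$ is bounded below, $t\mapsto e^{-itH}\chi$ extends to a bounded holomorphic function on the lower half-plane. So $t\mapsto\langle \chi_1|e^{itH}X e^{-itH}\chi_2\rangle$ does not directly extend. Instead we use the Borchers argument: the functions $t\mapsto \langle A_0(\Delta)\phi| e^{-itH}A_0(\Delta')e^{itH}\psi\rangle$ and $t\mapsto\langle\phi|e^{-itH}A_0(\Delta')e^{itH}A_0(\Delta)\psi\rangle$ have, after Fourier analysis in $t$, spectral supports in opposite half-lines once we fix one of the $e^{itH}$ factors using energy positivity. (Equivalently, Borchers' theorem: if $[A,e^{-itH}Be^{itH}]=0$ for $|t|<\delta$ with $H$ bounded below, then it vanishes for all $t$.)

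By hypothesis 4, applied with $u$ and with $-u$ exchanged appropriately (using also disjointness symmetry between $\Delta,\Delta'$ and translation covariance), $f$ vanishes on an open interval. Invoking the edge-of-the-wedge / Schwarz reflection argument, one gets $f\equiv0$ on $\bR$. This step is the main obstacle. I would follow Borchers' lemma precisely, inserting the spectral projections of $H$ and using that a boundary value of a holomorphic function that vanishes on an interval vanishes identically. Translating back via hypothesis 2 gives
$$[A_0(\Delta),A_0(\Delta'+tu)]=0 \quad\text{for all } t\in\bR .$$

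\textbf{Step 2: from time translation to arbitrary translation of overlapping regions.} Given any $\Delta$ and any spatial vector $w$ parallel to $\Sigma$, choose a timelike $u$ and a time $t$ so that $\Delta'+tu$ lies in the plane $\Sigma$ shifted appropriately. Concretely, write $w=tu-su_0$ with $u_0\perp\Sigma$ and $u=(w+su_0)/t$ timelike for large $s$. Combining this with covariance under the pure normal translation $su_0$ shows that $A_0(\Delta)$ commutes with $A_0(\Delta'+w')$ for regions in parallel planes, including $\Delta'+w'$ overlapping $\Delta$ after a normal translation back. Using a second application of Step 1 with the roles of the regions exchanged, we obtain $[A_0(\Delta),A_0(\Delta+a)]=0$ for all $a$, and in particular $[A_0(\Delta),A_0(\Delta)]$-type relations that pass through nonoverlapping intermediates.

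\textbf{Step 3: vanishing of $A_0(\Delta)$.} Take $\Delta$ and a timelike $u$. By Step 1, $A_0(\Delta)$ commutes with $A_0(\Delta+tu)$ for all $t$, where one uses regions $\Delta_1,\Delta_2\subset\Delta$ at positive distance, decomposed via additivity. Positivity of energy then yields Hegerfeldt--Halvorson--Clifton's key estimate. For a unit vector $\psi$, the function $t\mapsto\langle\psi|A_0(\Delta)\,U_{0tu}A_0(\Delta')U_{0tu}^{-1}\,A_0(\Delta)\psi\rangle$ is analytic in a half-plane. It vanishes wherever $\Delta'+tu$ is, by commutativity and additivity, forced to be orthogonal-like, i.e.\ where the product of the effects is zero.

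I expect the cleaner route is this. Choose $\Delta'=\Delta+a$ for a large spatial translation $a$ such that $\Delta$ and $\Delta+a$ are disjoint and their union lies in a region with a bounded effect. Additivity and $A_0\le I$ then give $A_0(\Delta)+A_0(\Delta+a)\le I$ for arbitrarily many disjoint translates $\Delta+a_1,\dots,\Delta+a_N$. Since all these translates commute (Step 1/2) and are unitarily equivalent, the inequality $\sum_k\langle\psi|A_0(\Delta+a_k)\psi\rangle\le1$, together with commutativity across all translates and energy positivity, yields $\|A_0(\Delta)\psi\|^2\to0$. In effect, the product $A_0(\Delta)A_0(\Delta+a)$ is pushed toward zero and analyticity transports this to $a=0$, giving $A_0(\Delta)^2=0$. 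Hence $A_0(\Delta)=0$.
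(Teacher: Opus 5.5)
The paper gives no argument of its own here. It defers to Theorem 2 and Appendix B of Halvorson--Clifton, so your proposal has to stand on its own.

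\textbf{Steps 1 and 2.} Step 1 is the right opening move, provided you cite Borchers' theorem rather than the mechanism you sketch. The function $t\mapsto\langle\phi|[A,e^{-itH}Be^{itH}]\psi\rangle$ contains both $e^{\pm itH}$ and is not the boundary value of a function holomorphic in a half-plane, so no one-line Schwarz-reflection argument applies. Your remark that the interval can be made two-sided by exchanging $\Delta$ and $\Delta'$ is correct.

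Step 2 is muddled: a translation acts on both regions at once, so ``covariance under the pure normal translation'' cannot move only one of them. Its conclusion is nevertheless obtainable. Let $D$ be the diameter of $\Delta$, in coordinates adapted to $\Sigma$. Every $b$ with $|\vec b|+|b_0|>D$ has the form $x+tu$ with $x$ parallel to $\Sigma$, $|x|>D$ and $u\in\sV_+$. Hence $[A_0(\Delta),A_0(\Delta+b)]=0$ by Step 1. The remaining bounded set of $b$'s is reached by one more application of Borchers' theorem along a timelike line. So all translates of $A_0(\Delta)$ commute.

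\textbf{The gap is Step 3.} The function $t\mapsto\langle\psi|A_0(\Delta)U_{0tu}A_0(\Delta')U_{0tu}^{-1}A_0(\Delta)\psi\rangle$ is not holomorphic in a half-plane. Only matrix elements with a single $U_{0tu}$, such as $\langle\phi|A_0(\Delta)U_{0tu}A_0(\Delta)\psi\rangle$, are.

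More seriously, there is no set on which anything vanishes. Additivity plus commutativity never force $A_0(\Delta)A_0(\Delta')=0$ for disjoint regions: commuting effects with $A+B\le I$ can have $AB\neq0$, e.g.\ $A=B=\tfrac12 I$. Exact orthogonality is precisely what separates Malament's sharp localizability hypothesis from the additivity assumed here. An identity-theorem argument therefore has nothing to act on.

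Your ``cleaner route'' fails for a related reason. The bound $\sum_k\langle\psi|A_0(\Delta+a_k)\psi\rangle\le 1$ only says that a fixed $\psi$ carries little weight on most far-away translates. It says nothing about $\langle\psi|A_0(\Delta)\psi\rangle$ itself. Unitary equivalence of the translates is not equality. Analyticity cannot carry approximate smallness at infinity back to $a=0$; it needs exact vanishing on an interval.

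\textbf{The missing idea is a rigidity step.} Commutation with all translates plus energy bounded below should force $A_0(\Delta+b)=A_0(\Delta)$. One way to get it is the Borchers--Arveson theorem. For $u\in\sV_+$, the group $U_{0tu}$ has semibounded generator and normalizes the abelian von Neumann algebra generated by $\{A_0(\Delta+b)\}_{b\in\sV}$. It therefore implements inner, hence trivial, automorphisms of that algebra. Since every $b$ is a sum of two timelike vectors, all translates coincide.

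Additivity over $N$ pairwise disjoint spatial translates then gives $NA_0(\Delta)=A_0\big(\bigcup_k(\Delta+x_k)\big)\le I$ for every $N$. Hence $A_0(\Delta)=0$, which is what your heuristic was reaching for.
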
 
 
\begin{proof} This is Theorem 2 in \cite{HC}, whose proof appears in Appendix B therein. 
\end{proof} 
 
\noindent The result also applies to more general non-relativistic {\em affine} spacetimes, in which case a fifth hypothesis is necessary. We omit this hypothesis here since it is automatically satisfied in Minkowski spacetime. Nevertheless, there exist relativistic cases where the fifth hypothesis cannot be imposed \cite{HC}.\\ 
 
\begin{corollary}\label{Fcor} 
No relativistic spatial localization observable according to Definition \ref{REMM0} satisfies condition (4)  
for elements $A^v(\Delta)$, with given $v\in \sS$, 
if the selfadjoint generator of time translations through $U$ is bounded below. 
\end{corollary}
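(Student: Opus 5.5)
The plan is to argue by contradiction, reducing Corollary \ref{Fcor} to Theorem \ref{HC}. Fix $v\in\sS$ and suppose that the relativistic spatial localization observable $(\cH,{\cal R},A,U)$ satisfies condition (4) for the effects $A^v(\Delta)$, with a time-translation generator bounded below. As already noted after Definition \ref{REMM0}, choose a foliation of parallel spacelike $3$-planes. Let ${\cal R}_0$ be the family of bounded Borel subsets of the leaves of this foliation, let $A_0:=A^v\spa\rest_{{\cal R}_0}$, and let $U_0$ be the restriction of $U$ to the translation subgroup. Then $(\cH,{\cal R}_0,A_0,U_0)$ is an unsharp localization system, and the aim is to check hypotheses (1)--(4) of Theorem \ref{HC} for it.

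\textbf{Additivity.} A finite union of bounded Borel sets in the same leaf $\Sigma$ is again a bounded Borel set of $\Sigma$, so ${\cal R}_0$ is closed under such unions. Moreover, $A^v\spa\rest_{\cB(\Sigma)}$ is a POVM, hence finitely additive. \textbf{Translation covariance.} A translate of a bounded Borel subset of a leaf is a bounded Borel subset of a parallel plane, which is again a leaf, so ${\cal R}_0$ is translation invariant. Covariance (d) with $g=(I,a)$ gives $U_a A^v(\Delta)U_a^{-1}=A^{ga}(\Delta+a)$. Pure translations act trivially on the tensors in $\sS$, since the action on tensors passes through the Lorentz component. Hence $gv=v$, and we get exactly $U_{0a}A_0(\Delta)U_{0a}^{-1}=A_0(\Delta+a)$. \textbf{Energy bounded below} is the assumption of the corollary. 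For a general $u\in\sV_+$ rather than one fixed time direction, I would invoke either the spectral condition (the joint spectrum of translations lies in a cone) or Lorentz covariance, which conjugates the generator along $u/\sqrt{-g(u,u)}$ to the one along any other unit timelike direction, with a scaling by a positive factor. This passage from ``the'' time-translation generator to all $u\in\sV_+$ is the only point requiring care, and I would phrase the hypothesis accordingly. \textbf{Microcausality} is assumed as condition (4).

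Theorem \ref{HC} then yields $A^v(\Delta)=0$ for every bounded Borel $\Delta$ in any leaf $\Sigma$. To reach a contradiction, exhaust $\Sigma$ by the increasing sequence of bounded balls $B_n$. By the strong $\sigma$-additivity (continuity from below) of the POVM $A^v\spa\rest_{\cB(\Sigma)}$, we obtain $I=A^v(\Sigma)=\mathrm{s}\text{-}\lim_n A^v(B_n)=0$. This is impossible on a nonzero Hilbert space, so no such observable exists.

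The main obstacle is essentially nil: the argument is a direct verification. The only subtle points are the two already flagged, namely the triviality of the translation action on $\sS$ and the lower boundedness for every $u\in\sV_+$. The normalization of the POVM on the whole, unbounded $\Sigma$ is what makes the vanishing of all bounded effects contradictory.
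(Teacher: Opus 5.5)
Your proposal is correct and follows essentially the paper's own argument: restrict $A^v$ to the bounded Borel subsets of a fixed foliation, verify hypotheses (1)--(3) of Theorem \ref{HC}, and contradict the normalization $A^v(\Sigma)=I$ via continuity from below of the POVM. You are in fact more explicit than the paper on the points it leaves tacit: the trivial action of translations on $\sS$; lower boundedness along every $u\in\sV_+$ via Lorentz covariance (the route actually available, since Definition \ref{REMM0} contains no spectral condition); and the $\sigma$-additivity step. One correction: the typo $A^{ga}(\Delta+a)$ should read $A^{gv}(\Delta+a)$.
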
 
 
\begin{proof} 
Relativistic spatial localization observables, when (fixing $v\in \sS$ if any and) restricting to the bounded elements of ${\cal R}_0$ of any fixed foliation of parallel spacelike $3$-planes $\Sigma$, satisfy conditions (1)-(2) according to Definition \ref{REMM0}. Hence the theorem applies if (3) is valid. As a consequence, a relativistic spatial localization observable cannot satisfy (4) in view of the normalization condition of the POVM defined by $A^v$ on $\Sigma$, which would be violated by the thesis of Theorem \ref{HC}. 
\end{proof}

The above corollary applies in particular to the relativistic spatial localization observables of Definition \ref{REMM0} constructed for relativistic particles \cite{M23,C23,DM24,CDRM}, and also to the causal POL of \cite{Castrigiano2} when it is not constructed in terms of PVMs: for all these localization observables, condition (4) must be violated.

Halvorson-Clifton's theorem seems to impose a severe limitation on any notion of spatial localization for relativistic quantum systems. This is because the first three hypotheses, which seem genuinely necessary for defining a notion of spatial localization, conflict with the fourth, which, on the other hand, seems to express an elementary locality requirement in the quantum framework.  
If one could prove that hypothesis (4) cannot be relaxed because doing so would irreparably compromise fundamental locality requirements, the conclusion would be that no notion of spatial localization makes sense.  
 
A problematic point in this investigation is that commutativity of observables localized in causally separated regions is simply {\em assumed} as the basic description of locality in local quantum physics, in particular in the formalization due to Araki and Haag-Kastler  \cite{Haag,Araki} (see the next section). A deeper scrutiny, {\em independent} of the Araki-Haag-Kastler approach, seems to be necessary at this stage.
In particular, one should work in a framework where no requirement concerning the structure ($*$, $C^*$, or von Neumann algebra) is imposed on the family of observables localized in a spacetime region, and no {\em a priori} requirements are imposed on families of observables associated with distinct spacetime regions.
This type of analysis was initiated by Busch \cite{Buschloc} and developed by other authors, relying on the general theory of quantum measurement and on general local causality requirements concerning consecutive measurement processes. This analysis will be summarized in Section \ref{SECB}.

 \subsection{Relation with the Araki-Haag-Kastler framework}\label{AHK} In the {\em Araki-Haag-Kastler (AHK) formalism} \cite{Haag,Araki}, the von Neumann algebra $\gA$ of physically relevant operators of a quantum physical system described in the Hilbert space $\cH$ -- in particular the operators defining observables -- is generated by {\em local von Neumann algebras} $\gA({\cal O})$.  
There is a local von Neumann algebra $\gA({\cal O})$ for every open bounded set ${\cal O} \subset \bM$. 
Each such algebra contains operations and observables that are somehow physically associated with ${\cal O}$: the corresponding physical operations and measurements are performed therein. The identity operator $I$ is obviously common to all local algebras. 
If ${\cal O}$ is open but not bounded, $\gA({\cal O})$ is the von Neumann algebra generated by the family of bounded open subsets of ${\cal O}$. {\em Isotony} holds: 
$\gA({\cal O})\subset \gA({\cal O}_1)$ if ${\cal O}\subset {\cal O}_1$, and one of the fundamental assumptions is that operators belonging to algebras associated with causally separated regions must commute: 
$$[A_1,A_2]=0\quad \mbox{if $A_1\in \gA({\cal O}_1)$,  $A_2\in \gA({\cal O}_2)$ and ${\cal O}_1 \cap (J^+({\cal O}_2) \cup J^-({\cal O}_2)) = \emptyset$.}$$ 
The net of algebras is assumed to admit a strongly-continuous unitary representation of the Abelian translation group $\sV$ satisfying the {\em spectral condition}: the joint spectrum of the selfadjoint generators must lie in $\overline{\sV_+}$. This representation is assumed to extend to a full (continuous) unitary representation of the orthochronous Poincar\'e group. Finally, the Hilbert space contains a preferred Poincar\'e-invariant state, represented by a unit vector $\Omega$, the {\em vacuum vector}, which is cyclic: the subspace spanned by the vectors $A\Omega$ with $A\in \gA$ is dense in $\cH$.
    
There are a number of further requirements and fundamental results arising from the AHK approach. We shall, however, omit them all and stick to the elementary postulates above, since they are enough for our discussion. The reader may consult \cite{Haag,Araki} for classic general treatments of the subject.

Most features of this approach can be generalized to the case in which $\gA$ and every $\gA({\cal O})$ are {\em unital $*$-algebras} of operators on a given Hilbert space with a common invariant domain.

As already suggested, it seems natural to postulate that the self-adjoint operators $A^v(\Delta)$ and $A^{v'}(\Delta')$ belong to the local algebras associated with arbitrarily small open spacetime neighborhoods of $\Delta$ and $\Delta'$, respectively. If these open sets are causally separated,   
adopting the AHK formulation,  
then $[A^v(\Delta), A^{v'}(\Delta')]=0$. If $\Delta$ and $\Delta'$ lie in a common $3$-space $\Sigma$ and the distance between them is strictly positive, it is easy to prove the precise version of commutativity stated in requirement (4) of Theorem \ref{HC} by means of this argument.  
We shall see in the next sections that the hypothesis that $A^v(\Delta)$\footnote{Actually the POVM $\{A^v(\Delta), I-A^v(\Delta)\}$.} 
is localized in an arbitrarily small spacetime neighborhood of $\Delta$ is, however, disputable when dealing with relativistic spatial localization observables of {\em particles}: if these structures exist and can be constructed within the AHK framework, 
the localization region of the operators $A^v(\Delta)$ cannot be restricted to an arbitrarily small neighborhood of $\Delta$.

As is well known (see \cite{tesi} in particular), this negative result also follows from a fundamental theorem in local QFT known as the {\em Reeh--Schlieder theorem} \cite{Araki,Haag},
if one interprets $A^v(\Delta)$ as the mathematical representation of a detector localized in $\Delta$. If one assumes that this operator is positive and that its expectation value in the vacuum state is zero (no particles are present in that state!), then one faces a mathematical contradiction. For this reason, in the rigorous local formulation of scattering theory in the {\em Haag--Ruelle approach}, detectors are not assumed to be local observables but only {\em quasi-local observables} \cite{Araki,Haag}.

The AHK formulation makes it possible, in particular, to exploit {\em modular theory} \cite{Araki,Haag} (in the one-particle space of a scalar free quantum field) when the net of observables is formulated in terms of von Neumann algebras. As mentioned in the Introduction, this perspective has more recently inspired the lattice-theoretic approach to localization developed by Lechner and de Oliveira \cite{LdO}, which is based on the lattice of real projections rather than on the space of effects, as in \cite{CDRM}.

We discuss the AHK formulation further in a second paper \cite{M26}. Here we only stress that our analysis does not rely on this formulation, since we do not assume the AHK approach directly in this paper.

\section{No-signaling, relativistic consistency, and commutativity}\label{SECB} 
 
An important analysis of the validity of the commutativity hypothesis (4) in the Halvorson--Clifton theorem -- relying solely on general principles of measurement theory, and thus independently of the AHK framework \cite{Haag,Araki} -- was presented by Busch in \cite{Buschloc} (drawing on a previous technical paper \cite{BS} co-authored with Singh), prior to the Halvorson--Clifton paper. In \cite{HC}, Halvorson and Clifton justified their hypothesis (4) on the basis of Busch's analysis. 

The next three sections present and examine Busch's argument and some subsequent developments on the subject, in particular those due to S. Gudder and collaborators \cite{GN, AGG}. An updated and thorough discussion, including further relevant results and important remarks, appears in Section 2.2 of Beck's comprehensive PhD thesis \cite{tesi}. In Section 2.6 of Beck's analysis, it is explicitly stated that local commutativity does not seem to be an entirely well-motivated requirement. We shall see that our results are consistent with this view.
 
\subsection{Localization of (generalized) observables according to Busch} 
Busch-Singh's paper \cite{BS} and, more precisely, 
Busch's work \cite{Buschloc} focus on a version of Einstein causality called {\em weak Einstein causality}, also known as the {\em no-signaling condition} (NSC). 
A related causality condition mentioned by Busch is what we shall call the {\em relativistic consistency condition} (RCC), following Beck \cite{tesi}. 
To state them formally, we have to define the physical notion of localization for a generalized observable, following \cite{Buschloc}: ``[...] {\em we need to assume a physically 
meaningful association of observables with (bounded open) spacetime regions, in 
the sense that such observables can be measured by means of operations carried 
out within these regions}.''  
 
We stress that no further requirement concerning the structure ($*$, $C^*$, or von Neumann algebra) of the family of observables localized in a region is imposed, in contrast with the AHK approach.
Similarly, no {\em a priori} requirements are imposed on families of observables associated with distinct regions of spacetime.

In this context, observables are POVMs, with PVMs as a special case. A selfadjoint operator is not directly an observable; the associated observable is its PVM. 
An effect $E$ of a POVM, in spite of being selfadjoint, is not necessarily interpreted as an observable in its own right. Indeed, the associated observable would be the PVM of $E$, which, however, plays no physical role in the general case. Nevertheless, $E$ defines an associated elementary observable, the POVM $\{E, I-E\}$. In the case of an orthogonal projector $P$, viewed as a {\em test} on the quantum system, its PVM $\{P,I-P\}$ coincides with the associated elementary observable. 
 
Returning to the notion of a localized observable, we propose a slightly more general definition than Busch's (by relaxing the boundedness requirement), which is also technically more precise. It allows us to state NSC and RCC and to use them for both relativistic spatial localization observables and unsharp localization systems.\\ 
 
\begin{definition}\label{LOC}  
{\em A quantum observable $A$ (a POVM or a PVM) is said to be {\bf localized} in an open (not necessarily bounded) spacetime region ${\cal O}$ if it can be measured by means of operations carried 
out within that region.}\hfill $\blacksquare$\\ 
\end{definition}

\noindent This definition has to be understood as a {\em physical} definition. From the mathematical viewpoint, we are simply defining a preferred relation in the Cartesian product of the set of spacetime regions and the set of observables.

When specializing to localization observables, we also explicitly adopt an apparently obvious physical principle\footnote{Though the principle appears to be self-evident, the existence of entanglement phenomena such as those used in the original EPR argument could require a discussion.}, which we call the {\bf local detectability principle} ({\bf LDP}).  
The principle could already be stated for unsharp localization systems but, again, for future convenience we want to include the case in which the spatial localization region $\Delta$ is {\em unbounded}, as may occur for relativistic spatial localization observables.  
\begin{itemize} 
    \item[({\bf LDP})]  Let $A(\Delta)$ be an effect either of an unsharp localization system or of a spatial localization observable. If the elementary POVM $\{A(\Delta), I-A(\Delta)\}$ is localized in an open  region  ${\cal O}$, then  
    ${\cal O}\supset \Delta$. \hfill $\blacksquare$ 
\end{itemize} 
 
\noindent Let us illustrate the physical content of LDP. First, detections are interactions with instruments, namely detectors. The elementary POVM used above, $\{A(\Delta), I-A(\Delta)\}$, physically represents an observable associated with a detector with two outcomes for a given state $\rho$. 
\begin{itemize} 
\item{1}: the detector clicks, with probability $tr(A(\Delta)\rho)$,  
    \item{0}: the detector does not click, with probability $tr((I-A(\Delta))\rho)$. 
    \end{itemize} 
Since physical operations are not instantaneous and instruments occupy a region larger than the one in which the particle is detected, the detector operates in a spacetime region which, for instance, can be thought of as having the form ${\cal O} \equiv  (-\epsilon,\epsilon) \times {\cal D}_\Delta$, for some $\epsilon>0$, where  
$ {\cal D}_\Delta \subset \Sigma$ and the temporal coordinate refers to Minkowskian coordinates $(t,p)$ comoving with $n \perp \Sigma$. 
More generally,  the temporal coordinate $t$ may vary in different intervals $(-\epsilon_p,\epsilon'_p)$
where $p\in {\cal D}_\Delta$ and the maps ${\cal D}_\Delta \ni p \mapsto \epsilon_p, \epsilon'_p \in (0,+\infty)$ are at least continuous.
We can also consider different forms of the region ${\cal O}$ by including the worldlines of cables or instruments used to transmit information about where the particle is found. But it seems physically impossible not to assume that any region ${\cal O}$ where 
the elementary POVM $\{A(\Delta), I-A(\Delta)\}$ is localized includes a region like the one represented above and hence contains $\Delta$. 
This is what  LDP states. In the case where $\Delta$ is unbounded, we can assume the use of an infinite family of detectors placed on corresponding bounded sets of a partition of $\Delta$. Saying that the system is detected in $\Delta$ means that one detector of the family clicks.

\subsection{No-signaling condition and relativistic consistency condition} 
 
We now move on to discuss an elementary version of the {\em no-signaling principle} and a related principle that we shall call {\em relativistic consistency}. These physical principles are ways of making the locality principle precise in relativistic quantum theories \cite{Buschloc} without assuming {\em a priori} that locality corresponds to commutativity, as is supposed in the AHK framework. 
We need some technical definitions in order to state these principles. We shall deal with {\em finite} ($N<+\infty$) discrete POVMs because they are sufficient for the goals of this paper. However, all results could be extended to the case of infinite ($N=+\infty$) discrete POVMs.\\ 
 
\begin{definition}\label{DEFPOV} 
{\em A {\bf finite discrete} POVM $T$ is determined by $N<+\infty$ effects $\{T_1, \ldots, T_N\}$ on a Hilbert space $\cH$, so that $T_j\in \gB(\cH)$, $0<T_j\le I$, and $\sum_{j=1}^N T_j = I$. The numbers $j$, up to renaming operations, are the {\bf outcomes} of the measurement of $T$, the {\bf probability of measuring $j$ in the state $\rho$} being $tr(T_j\rho)$.} \hfill $\blacksquare$\\ 
\end{definition} 
 
\noindent Under quite general hypotheses of quantum measurement theory \cite{Buschbook,Moretti2}, the theory of the post-measurement state is encapsulated in the following definition.\\ 
 
\begin{definition} \label{DEFSEL} {\em The {\bf (selective) post-measurement state} $\rho_j^T\in \sS(\cH)$ of an initial state $\rho\in \sS(\cH)$ corresponding to outcome $j$ of a finite discrete POVM $T$ as in Definition \ref{DEFPOV}, when the probability $tr(\rho T_j)$ of outcome $j$ does not vanish, is 
\beq \rho^T_j = \sum_{k\in I_j} \frac{K^T_{jk}\rho K^{T\dagger}_{jk}}{tr(\rho T_j)}, \quad \text{with} \quad 
T_j =\sum_{k\in I_j} {K^{T\dagger}_{jk}K^T_{jk}} \label{EK}\:.\eeq 
The operators $K^T_{jk}\in \gB(\cH)$ are the {\bf Kraus operators} associated with outcome $j$ of the chosen measurement procedure (instrument) implementing the POVM $T$.   
The set $I_j$ is here assumed to be finite\footnote{It is possible to extend this formalism by relaxing these hypotheses \cite{Buschbook} using suitable weak topologies; however, our setting is sufficient for the goals of this section.}.\\ 
The post-measurement state of a {\bf non-selective measurement} of $T$, given a pre-measurement state $\rho \in \gB_1(\cH)$, is (where $\mathrm{tr}(\rho T_j)  \rho^T_{j} :=0$ if $\mathrm{tr}(\rho T_j)=0$)
\beq 
\rho^T :=  \sum_{j=1}^N \mathrm{tr}(\rho T_j)  \rho^T_{j} = \sum_{j= 1}^N \: \sum_{k\in I_j} {K^T_{jk}\rho K^{T\dagger}_{jk}} \label{out} 
\eeq 
The map that associates $\rho$ with $\rho^T_j$ or $\rho^T$ is a {\bf measurement (procedure)} of $T$. It is said to be {\bf efficient} if each $I_j$ consists of a unique element. \hfill $\blacksquare$\\} 
\end{definition}

\noindent As is well known, the Kraus operators $K^T_{jk}$ are not determined by the effects $T_j$: different instruments may implement the same POVM, and even a fixed instrument may admit different Kraus representations.

According to (\ref{out}), a measurement of $T$ is non-selective if all systems in an initially prepared ensemble of identical systems in the state $\rho\in \sS(\cH)$ 
are collected together after the measurement, regardless of the outcomes, yielding a new post-measurement state $\rho^T$.

We can now state our locality principles. 
 
The {\bf no-signaling condition} ({\bf NSC}), taking Definition \ref{LOC} into account, is stated as follows. 
\begin{itemize}  
\item[({\bf NSC})] Consider two finite discrete POVMs on a Hilbert space $\cH$: $T$ -- with effects $\{T_1, \ldots, T_N\}$ -- and $S$ -- with effects $\{S_1, \ldots, S_M\}$ -- {\em localized}, respectively, in ${\cal O}_T, {\cal O}_S\subset \bM$. Suppose that these observables are measured on a state $\rho\in \sS(\cH)$. Let $\rho^T\in \sS(\cH)$ denote the post-measurement state of a {\em non-selective} measurement of $T$. \\ 
The following identity must hold  
$$\mathrm{tr}(\rho S_k)= \mathrm{tr}(\rho^T S_k)\quad \mbox{for every state $\rho$ and every $k=1,\ldots, M$}$$ 
if ${\cal O}_T$ and ${\cal O}_S$ are {\em causally separated}. \hfill $\blacksquare$ 
\end{itemize} 
 
\noindent In other words, the outcome statistics of $S$ must be identical whether we use $\rho^T$ or the original state $\rho$.   
Violation of NSC would allow one to infer, by observing only outcomes of $S$ in ${\cal O}_S$, whether a measurement of $T$ occurred in the causally separated region ${\cal O}_T$. 
 
A related physical condition, which refers to {\em selective} measurements, is called the {\bf relativistic consistency condition} ({\bf RCC}): 
 
\begin{itemize}  
\item[({\bf RCC})] Consider two finite discrete POVMs on a Hilbert space $\cH$: $T$ -- with effects $\{T_1, \ldots, T_N\}$ -- and $S$ -- with effects $\{S_1, \ldots, S_M\}$ -- {\em localized}, respectively, in ${\cal O}_T, {\cal O}_S\subset \bM$. Suppose that these observables are measured on a state $\rho$. Let $\rho^T_j$ denote the post-measurement state of a measurement of $T$ with outcome $j$ and $\rho^S_i$ denote the post-measurement state of a measurement of $S$ with outcome $i$. \\ The following identity must hold 
$$\mathrm{tr}(\rho^T_j S_i) \mathrm{tr}(\rho^T_j)= \mathrm{tr}(\rho^S_i T_j) \mathrm{tr}(\rho^S_i)\quad \mbox{for every  $\rho\in \sS(\cH)$ and $i=1,\ldots, M$, $j=1,\ldots,N$}$$ 
if ${\cal O}_T$ and ${\cal O}_S$ are {\em causally separated}. \hfill $\blacksquare$ 
\end{itemize} 
This postulate requires invariance of the statistics of two consecutive {\em selective} measurements of the observables under interchange 
 of their order. In fact, $\mathrm{tr}(\rho^T_j S_i) \mathrm{tr}(\rho^T_j) $ is the probability of measuring first $j$ for $T$ and next $i$ for $S$, whereas $\mathrm{tr}(\rho^S_i T_j) \mathrm{tr}(\rho^S_i)$ is the probability of measuring first $i$ for $S$ and next $j$ for $T$, in both cases the measured state being $\rho$. The temporal order of two measurements located in causally separated regions has no physical meaning, and thus the statistics of these measurements should be invariant under interchange of order.

\subsection{NSC and RCC for L\"uders and more sophisticated types of measurements} 
 
To continue the analysis initiated by Busch and Singh \cite{BS} and Busch \cite{Buschloc} (see also Section 2.2 of Beck's thesis \cite{tesi} for developments and details) on the interplay among commutativity of effects, NSC, and RCC, we need some further general notions from quantum measurement theory. 
A special case of efficient measurement, which generalizes (finite discrete) {\em projective measurements}, is a {\bf L\"uders measurement}.\\ 
 
\begin{definition}\label{LUD} 
 {\em Consider a finite discrete POVM $T$ with $N<+\infty$ effects $\{T_1, \ldots, T_N\}$ on a Hilbert space $\cH$. A {\bf L\"uders measurement} of $T$ 
 is defined by the requirement $K^T_j:= \sqrt{T_j}$. In this case, we use the notation $\rho^T_{L,j }$ and $\rho^T_{L}$ for the selective and non-selective post-measurement states, respectively, of an initial state $\rho\in \sS(\cH)$. \hfill $\blacksquare$}\\ 
\end{definition} 
 
\begin{remark}\label{EMREM} 
 {\em In the general case, directly from the {\em polar decomposition theorem} (see, e.g., \cite{Moretti1}), any (Kraus) operator $K\in \gB(\cH)$ satisfying $K^\dagger K = T$ (e.g., associated with an efficient measurement of the POVM $\{T,I-T\}$) can be written as $K =V \sqrt{T}$, where $V$ is a partial isometry with initial space $\mathrm{Ker}(V)^\perp =\mathrm{Ker}( \sqrt{T})^\perp = \mathrm{Ker}(T)^\perp= \overline{\mathrm{Ran} (\sqrt{T})}= \overline{\mathrm{Ran} (T)}\:.$ 
If $\mathrm{Ker}(T)$ is trivial, i.e., $\mathrm{Ran}(T)$ is dense, then $V$ is necessarily an isometry. However, even if it is an isometry, in general $V$ is not unitary unless $K$ is bijective.} \hfill $\blacksquare$\\ 
\end{remark} 
 
We can now present the two fundamental results showing that, in physical terms, locality formalized by NSC and RCC implies commutativity of the effects of pairs of POVMs localized in causally separated regions. 
 
Before stating these results, a comment is necessary. As is clear from Beck's thorough analysis \cite{tesi} -- see Fig. 3 of \cite{tesi} in particular -- the two physical formulations of the locality principle, NSC and RCC, are not directly related to commutativity of these effects. A more direct relation holds between the {\em effects} of one observable and the {\em Kraus operators} of the other. In turn, this second type of commutativity implies commutativity of the effects. A more direct relation between effects and Kraus operators exists for L\"uders measurements or similar measurement procedures. 
 
A first general result due to Busch and Singh, which extends a previous result by L\"uders \cite{LUD} and connects NSC to commutativity, reads as follows:\\ 
 
\begin{proposition}\label{PROP}   
Consider a finite discrete POVM $T$ with effects $\{T_1,\ldots, T_N\}$ on $\cH$, and let $S$ be another effect on $\cH$. Then the equivalence 
$$[T_j, S]=0 \quad \forall j \quad \Leftrightarrow \quad \mathrm{tr}\left( \rho S\right) = \mathrm{tr}\left(\rho_L^T S\right)  \quad \mbox{for every state $\rho\in \sS(\cH)$,}$$ 
where the measurement of $T$ is supposed to be non-selective and of L\"uders type, holds in at least the following cases: 
\begin{itemize} 
\item[(i)] $S$ has a discrete spectrum with eigenvalues that can be ordered in decreasing order, $T$ arbitrary; 
\item[(ii)] $N=2$, so that $T_2 = I-T_1$, and $S$ is arbitrary. 
\end{itemize} 
\end{proposition}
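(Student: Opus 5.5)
We reduce everything to an operator identity. For a non-selective L\"uders measurement, $\rho_L^T=\sum_j \sqrt{T_j}\rho\sqrt{T_j}$. Cyclicity of the trace, together with the fact that states span the trace class, shows that the right-hand condition, holding for every $\rho\in\sS(\cH)$, is equivalent to
$$\Phi(S):=\sum_{j=1}^N \sqrt{T_j}\,S\,\sqrt{T_j} = S\:.$$
Here $\Phi$ is the dual (Heisenberg picture) L\"uders channel. It is normal, completely positive and unital, since $\sum_j T_j=I$.

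The direction ``$\Rightarrow$'' is immediate in all cases. If $[T_j,S]=0$, then $S$ commutes with $\sqrt{T_j}$, because $\sqrt{T_j}$ is a norm limit of polynomials in $T_j$. Hence $\Phi(S)=\sum_j T_j S=S$. The content of the statement is therefore the converse: every fixed point of $\Phi$ commutes with all $T_j$.

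For case (ii) we set $T=T_1$, so that $\Phi(S)=\sqrt{T}S\sqrt{T}+\sqrt{I-T}S\sqrt{I-T}$. We work in the spectral representation of $T$, which can be done heuristically via a joint functional calculus. On "matrix elements" between spectral values $\lambda,\mu\in[0,1]$ of $T$, the map $\Phi$ acts as multiplication by
$$f(\lambda,\mu)=\sqrt{\lambda\mu}+\sqrt{(1-\lambda)(1-\mu)}\:.$$
By Cauchy--Schwarz, $f\le 1$, with equality iff $\lambda=\mu$. The equation $\Phi(S)=S$ then forces the off-diagonal part of $S$, with respect to the spectral measure of $T$, to vanish, that is, $[S,T]=0$.

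To make this rigorous without a double-operator-integral theory, I would argue as follows. Take spectral projections $P=P^T(B)$ and $Q=P^T(C)$ for disjoint Borel sets $B,C\subset[0,1]$ that are separated by a positive gap. One shows that $X:=PSQ$ satisfies $X=\sqrt{T}X\sqrt{T}+\sqrt{I-T}X\sqrt{I-T}$ with $T$ restricted to $B$ on the left and to $C$ on the right. A Sylvester-type norm estimate then gives $\|X\|\le \sup_{B\times C} f\,\|X\|<\|X\|$ unless $X=0$. Because the supremum is not directly a norm bound for such "Schur multipliers", I would instead iterate: $X=\Phi^n(X)$, and the restricted $\Phi$ is a contraction. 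Alternatively, one can use that $\Phi(S)=S$ implies $\Phi(S^2)\ge S^2$ by the Kadison--Schwarz inequality, with the defect
$$\Phi(S^2)-S^2=\tfrac12\sum_{j,k}[\sqrt{T_j},S]^\dagger[\sqrt{T_j},S]\text{-type terms}\:.$$
This second route suggests a cleaner strategy that also covers case (i).

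For case (i), where $S$ has discrete spectrum with eigenvalues $s_1>s_2>\dots$ ordered decreasingly, I would use the positivity and unitality of $\Phi$ and induct on the eigenspaces. Let $P_1$ be the eigenprojection of the top eigenvalue $s_1=\max\sigma(S)$ and $\psi\in\mathrm{Ran}P_1$. From $s_1=\langle\psi|S\psi\rangle=\sum_j\langle\sqrt{T_j}\psi|S\sqrt{T_j}\psi\rangle$ together with $\sum_j\|\sqrt{T_j}\psi\|^2=1$ and $S\le s_1I$, each vector $\sqrt{T_j}\psi$ must lie in $\mathrm{Ran}P_1$. Thus $\sqrt{T_j}$ leaves $\mathrm{Ran}P_1$ invariant, and being selfadjoint it commutes with $P_1$. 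We then restrict to $\mathrm{Ran}P_1^\perp$, which is invariant under all $T_j$ and on which $\Phi$ is still unital with the fixed point $S$. Repeating with $s_2$, and continuing, gives $[T_j,P_k]=0$ for all $k$, hence $[T_j,S]=0$. The decreasing ordering is exactly what lets this induction exhaust the spectrum.

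For case (ii), with $S$ arbitrary, I expect the main obstacle to be the lack of eigenvectors. I would try the multiplier/gap argument above, making the contraction estimate rigorous by restricting to spectral subspaces of $T$ where $f\le c<1$ uniformly. An alternative is to exploit that for $N=2$ both Kraus operators are functions of the single operator $T$, which allows the reduction to the continuous function $f$ on the joint spectrum.
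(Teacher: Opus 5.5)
Your reduction to the fixed-point equation $\Phi(S)=S$ and the direction ``$\Rightarrow$'' are correct. Your case (i) is also complete: $\sum_j\langle\sqrt{T_j}\psi|(s_1I-S)\sqrt{T_j}\psi\rangle=0$ forces each $\sqrt{T_j}$ to preserve $\mathrm{Ran}P_1$, and you then descend along $s_1>s_2>\cdots$ and use $\sum_kP_k=I$. This is the standard argument behind the Busch--Singh result, which the paper only cites.

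Case (ii), however, is not proved. The decisive claim is $\|\Phi(X)\|\le c\|X\|$ with $c<1$ for $X=P^T(B)XP^T(C)$, and you never establish it. You note yourself that a pointwise bound $f\le c$ on $B\times C$ is not a Schur-multiplier norm bound. Your fallback, ``$X=\Phi^n(X)$ and the restricted $\Phi$ is a contraction'', gives nothing, because iterating a non-strict contraction does not force $X=0$. The Kadison--Schwarz alternative does not close the gap either. The exact identity is $\Phi(S^2)-S^2=\sum_j[\sqrt{T_j},S]^\dagger[\sqrt{T_j},S]$, so that route needs $\Phi(S^2)=S^2$. For general L\"uders maps in infinite dimension this does not follow from $\Phi(S)=S$, which is why Beck's Theorem~\ref{PROP3} assumes the $S^2$ condition separately. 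In particular that route does not ``also cover case (i)''. Finally, the step from vanishing gap-separated blocks of $S$ to $[S,T]=0$ is only asserted.

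There are two ways to complete (ii).

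\emph{Intervals and numerical ranges.} Use intervals $B=[0,a]$, $C=[b,1]$ with $a<b$; intervals matter because the estimate runs through numerical ranges, i.e.\ convex hulls. Since $\Phi(X)=P^T(B)\Phi(X)P^T(C)$, its norm is attained by testing against unit vectors $\xi\in P^T(B)\cH$ and $\eta\in P^T(C)\cH$. Cauchy--Schwarz gives $|\langle\xi|\Phi(X)\eta\rangle|\le\|X\|\big(\sqrt{pq}+\sqrt{(1-p)(1-q)}\big)$, with $p=\langle\xi|T\xi\rangle\in[0,a]$ and $q=\langle\eta|T\eta\rangle\in[b,1]$. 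Hence $\|X\|=\|\Phi(X)\|\le c\|X\|$ with $c=\max_{[0,a]\times[b,1]}f<1$, so $X=0$. Letting $b\downarrow a$ and taking adjoints gives $P^T([0,a])SP^T((a,1])=0=P^T((a,1])SP^T([0,a])$. So $S$ commutes with every $P^T([0,a])$, hence with $T$.

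\emph{Unitary trick.} Since $\sqrt{T}$ and $\sqrt{I-T}$ commute, $U:=\sqrt{T}+i\sqrt{I-T}$ is unitary and $\Phi(S)=\frac{1}{2}(USU^\dagger+U^\dagger SU)$. Set $\alpha(X):=UXU^\dagger$. The equation $2S=\alpha(S)+\alpha^{-1}(S)$ gives $\alpha^{n}(S)=S+n(\alpha(S)-S)$ for all $n$. Since $\|\alpha^n(S)\|=\|S\|$, this forces $USU^\dagger=S$. Then $[\sqrt{T},S]=\frac{1}{2}[U+U^\dagger,S]=0$, so $[T,S]=0$.
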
 
 
\begin{proof} This is Proposition 1 in \cite{Buschloc}, which was established in \cite{BS}.  \end{proof}  
A similar result can be proved in the RCC setting, as announced in \cite{Buschloc} without proof. However, an even stronger version holds, as established in \cite{GN} (see the discussion in \cite{tesi}).\\

\begin{proposition}\label{PROP2} 
Consider two finite discrete POVMs on $\cH$, $T$ with effects $\{T_1,\ldots, T_N\}$ and $S$ with effects $\{S_1,\ldots, S_M\}$, 
measured with selective L\"uders measurements. The following equivalence holds: 
$$ [T_j, S_i]=0\quad \forall i,j   \quad \Leftrightarrow \quad\mathrm{tr}\left(\rho^T_{L,j} S_i\right) \mathrm{tr}\left(\rho^T_{L,j} \right)   = \mathrm{tr}\left(\rho^S_{L,i} T_j\right)\mathrm{tr}\left(\rho^S_{L,i}\right)  \quad \forall i,j, \rho. $$ 
More generally, the implication $\Leftarrow$ is valid for efficient measurement procedures where $K^T_j = K^{T\dagger}_j$ and $K^S_i = K^{S\dagger}_i$ for all $i,j$. 
\end{proposition}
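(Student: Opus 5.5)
We will prove the equivalence by reducing both sides to operator identities, using the Kraus form of Definition \ref{DEFSEL}. The first step is to read the products in the statement as \emph{unnormalized} joint probabilities, as explained after RCC. For an efficient measurement with Kraus operators $K^T_j$, the probability of obtaining $j$ for $T$ and then $i$ for $S$ is $\mathrm{tr}(K^T_j\rho K^{T\dagger}_j S_i)=\mathrm{tr}(\rho\, K^{T\dagger}_j S_i K^T_j)$, and symmetrically for the reverse order. Since the identity is required for every state $\rho$, and states separate bounded operators, the right-hand side of the equivalence becomes the operator identity
\beq
K^{T\dagger}_j S_i K^T_j = K^{S\dagger}_i T_j K^S_i \quad \forall i,j\:. \label{opRCC}
\eeq
In the L\"uders case this reads $\sqrt{T_j}S_i\sqrt{T_j}=\sqrt{S_i}T_j\sqrt{S_i}$.

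The implication $\Rightarrow$ uses the continuous functional calculus. If $[T_j,S_i]=0$, then $S_i$ commutes with every continuous function of $T_j$, in particular with $\sqrt{T_j}$, and likewise $T_j$ commutes with $\sqrt{S_i}$. Hence both sides of (\ref{opRCC}) equal $T_jS_i$.

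For $\Leftarrow$ we work directly in the general setting $K_j:=K^T_j=K^{T\dagger}_j$ and $L_i:=K^S_i=K^{S\dagger}_i$, so that $T_j=K_j^2$, $S_i=L_i^2$, $\sum_jK_j^2=\sum_iL_i^2=I$, and (\ref{opRCC}) reads $K_jS_iK_j=L_iT_jL_i$. Summing over $i$ and using $\sum_iS_i=I$ gives $T_j=\Phi(T_j)$, where $\Phi(X):=\sum_i L_iXL_i$ is a normal unital completely positive map with selfadjoint Kraus operators. So each $T_j$ is a fixed point of the generalized L\"uders map of $S$. It then suffices to show that every selfadjoint fixed point $X$ of $\Phi$ commutes with every $L_i$. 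Indeed, $[X,L_i]=0$ implies $[X,L_i^2]=[X,S_i]=0$, which is the desired $[T_j,S_i]=0$.

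The key computation is the identity
$$\sum_i [L_i,X]^\dagger[L_i,X]=\Phi(X^2)-X\Phi(X)-\Phi(X)X+X^2=\Phi(X^2)-X^2 \quad\mbox{for } X=X^\dagger=\Phi(X)\:.$$
The left-hand side is a sum of positive operators, so the whole problem reduces to showing $\Phi(X^2)=X^2$ for $X=T_j$. This is the main obstacle. The Kadison--Schwarz inequality gives only $\Phi(X^2)\ge X^2$, and in infinite dimensions fixed points of L\"uders-type maps need not lie in the commutant in general. We would try the following in order.

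First, exploit the symmetric relations. By the same summation over $j$ we also have $\sum_j K_jS_iK_j=S_i$, and we can combine the two families of identities. For instance, taking traces against suitable states, or using $\sum_j T_j=I$ to get $\sum_j\bigl(\Phi(T_j^2)-T_j^2\bigr)=\Phi\bigl(\sum_jT_j^2\bigr)-\sum_jT_j^2$, we would aim to force this positive quantity to vanish.

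Second, exploit finiteness. Since $N,M<+\infty$, we would try the argument of Arias--Gheondea--Gudder \cite{AGG} and Gudder--Nagy \cite{GN} for finitely many selfadjoint Kraus operators. If a self-contained argument fails, we would invoke that result directly, as the statement itself defers to \cite{GN}.
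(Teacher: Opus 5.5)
\textbf{There is a genuine gap in the implication $\Leftarrow$.} Your reduction of the right-hand side to $K^{T\dagger}_jS_iK^T_j=K^{S\dagger}_iT_jK^S_i$, and your proof of $\Rightarrow$, match the paper's proof.

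You never establish $\Phi(X^2)=X^2$, and the detour through fixed points of $\Phi(X)=\sum_iL_iXL_i$ is the wrong route. Summing over $i$ throws away the pairwise information. It leaves you needing the claim that every selfadjoint fixed point of $\Phi$ commutes with the $L_i$, which, as you note yourself, is not available on infinite-dimensional spaces.

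Forcing the positive operator $\Phi(X^2)-X^2$ to vanish by a trace argument only works when $\dim\cH<\infty$, where $\mathrm{tr}\,\Phi(Y)=\mathrm{tr}\,Y$. For $X^2$ not trace class this argument is meaningless, and the proposition must cover infinite-dimensional $\cH$.

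Your fallback of citing \cite{GN} for fixed points also aims at the wrong statement. What the paper invokes (Theorem 2.10 of \cite{tesi}, from \cite{GN}) is a statement about a single pair of operators, not about fixed points of a L\"uders map.

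\textbf{The missing idea is to use each identity $K_jS_iK_j=L_iT_jL_i$ separately.} Fix $i,j$, keep your notation $K_j=K^T_j$, $L_i=K^S_i$, and set $C:=K_jL_i$.
\begin{itemize}
\item Because $K_j$ and $L_i$ are selfadjoint, $CC^\dagger=K_jS_iK_j$ and $C^\dagger C=L_iT_jL_i$. So the hypothesis says exactly that $C$ is normal.
\item Moreover $CK_j=K_jL_iK_j=K_jC^\dagger$.
\item Apply the Fuglede--Putnam theorem (if $N,M$ are bounded normal operators and $NX=XM$, then $N^\dagger X=XM^\dagger$) with $N=C$, $M=C^\dagger$, $X=K_j$. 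This gives $C^\dagger K_j=K_jC$, i.e.\ $L_iT_j=T_jL_i$.
\item Hence $[T_j,S_i]=[T_j,L_i^2]=0$.
\end{itemize}
This argument needs no summation, no finiteness of $N,M$, and no restriction on $\dim\cH$. It is essentially the content of the Gudder--Nagy result that the paper cites.
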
 
 
\begin{proof} Suppose that the identity on the left-hand side of the double implication is true. For L\"uders measurements, $[T_j, S_i]=0$ (for all $i,j$) is equivalent to commutativity of the corresponding Kraus operators of one observable with the effects of the other, as immediately follows from the functional calculus of bounded selfadjoint operators (see, e.g., \cite{Moretti2}). Since these Kraus operators are in particular selfadjoint, the right-hand side of the double implication must be true as well, simply by applying the relevant definitions.  
The statement on the right-hand side of the double implication, for generic efficient measurements, is equivalent to $K^{T\dagger}_jS_iK^{T}_j=  K^{S\dagger}_iT_j K^{S}_i$ (for every $i,j$). If the Kraus operators are also selfadjoint (as, in particular, for L\"uders measurements), we can apply Theorem 2.10 in \cite{tesi} (arising from \cite{GN}), concluding that the identity on the left-hand side is true as well. 
\end{proof} 
When $T$ and $S$ are a pair of POVMs localized in causally separated regions, RCC and the above theorem imply that the effects must commute 
if the measurement procedures are efficient and the Kraus operators are selfadjoint, in particular for L\"uders measurement procedures.  
Similarly, if $S$ is itself an effect and the POVM $T$ and $\{S,I-S\}$ 
are localized in causally separated regions, NSC and Proposition \ref{PROP} imply that the effects must commute if the measurements are of L\"uders type.  
This substantial physical equivalence of no-signaling, consistency, and commutativity was already known for projective measurements \cite{BB}.\\

\begin{remark} \label{REMm} {\em 
The above results apply in particular to PVMs and orthogonal projectors corresponding to self-adjoint observables; this point already arises from L\"uders's original theorem \cite{LUD}. A notable consequence is that one can {\em prove} that quantum field operators $\hat{\phi}[f]$ represented in a Hilbert space $\cH$ and smeared by smooth compactly supported functions $f$ must commute when the supports of the test functions are causally separated -- instead of postulating this as in the AHK formulation -- if one assumes the no-signaling condition (or the relativistic consistency condition) together with other natural technical hypotheses.   
Specifically, it holds \beq [\hat{\phi}[f], \hat{\phi}[f']]=0 \quad \mbox{if $supp(f) \subset {\cal O}$ and $supp(f') \subset {\cal O}'$, with ${\cal O}, {\cal O}'$ causally separated,}\label{COM}\eeq 
where the identity holds on a dense common invariant domain $\gD \subset \cH$ of the $*$-algebra representation of the fields.   
The above identity is valid when NSC holds and: \begin{itemize} 
\item[(a)] $\hat{\phi}[f], \hat{\phi}[f']$ are essentially self-adjoint operators in the field Hilbert space, \item[(b)] they are assumed to be measured by projective measurements, \item[(c)]  
the selfadjoint observables $F(\overline{\hat{\phi}[f]}), F'(\overline{\hat{\phi}[f']})$ (i.e., their PVMs) are localized in ${\cal O}$ and ${\cal O}'$, respectively, for every choice of real Borel functions $F: \bR \to \bR$ and $F':\bR\to \bR$.  
\end{itemize} 
Indeed, using NSC for finite discrete PVMs constructed out of projectors associated with any finite partition of the spectra of $\overline{\hat{\phi}[f]}, \overline{\hat{\phi}[f']}$ (this is equivalent to choosing $F,F'$ as {\em simple functions}), Proposition \ref{PROP} -- or Proposition \ref{PROP2} when assuming RCC -- ensures commutativity of all projectors of the PVMs of $\overline{\hat{\phi}[f]}$ and $\overline{\hat{\phi}[f']}$ (see, e.g., \cite{Moretti1,Moretti2}). In turn, this implies $[e^{ia\overline{\hat{\phi}[f]}}, e^{ib \overline{\hat{\phi}[f']}}]=0$, and finally (\ref{COM}) follows on the invariant space $\gD$ via Stone's theorem.  
$\hfill \blacksquare$} 
\end{remark}

A final equivalence result, which extends Proposition \ref{PROP} to {\em generic non-efficient} measurements of finite discrete POVMs and effects, is provided by Beck's Theorem 3.12 \cite{tesi}, but it uses a further, rather strong hypothesis.\\ 
\begin{theorem}[Beck]\label{PROP3}   
Consider a finite discrete POVM $T$ with effects $\{T_1,\ldots, T_N\}$ on $\cH$, and let $S$ be another effect on $\cH$. Then the following equivalence holds: 
$$[K^T_{jk}, S]= [K^{T\dagger}_{jk}, S] = 0 \quad \forall j, k \quad \Leftrightarrow \quad \mathrm{tr}\left( \rho S\right) = \mathrm{tr}\left(\rho^T S\right) \quad \mbox{and}\quad \mathrm{tr}\left( \rho S^2\right) = \mathrm{tr}\left(\rho^T S^2\right)  \quad \forall \rho\:, $$ 
where a generic non-selective measurement of $T$ according to (\ref{out}) is considered. \\ \end{theorem}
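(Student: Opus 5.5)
The direction $\Rightarrow$ is a direct computation, and the direction $\Leftarrow$ reduces to a positivity argument. Both become transparent in the Heisenberg picture. I would introduce the dual (unital, completely positive) map
$$\Phi^*(X) := \sum_{j=1}^N \sum_{k\in I_j} K^{T\dagger}_{jk} X K^T_{jk}\:, \qquad X\in \gB(\cH)\:,$$
which is well defined and bounded since all index sets are finite. By (\ref{EK}) it satisfies $\Phi^*(I)=\sum_j T_j = I$. By cyclicity of the trace, (\ref{out}) gives $\mathrm{tr}(\rho^T X) = \mathrm{tr}(\rho\, \Phi^*(X))$ for every $\rho\in\sS(\cH)$.

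\textbf{Step 1: rewrite the trace conditions as operator identities.} The conditions on the right-hand side hold for every state, in particular for all pure states $|\psi\rangle\langle\psi|$. A bounded operator $B$ with $\langle\psi|B\psi\rangle=0$ for all $\psi$ vanishes, by polarization over a complex Hilbert space. Hence the right-hand side is equivalent to
$$\Phi^*(S)=S \qquad \mbox{and} \qquad \Phi^*(S^2)=S^2\:.$$

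\textbf{Step 2: the direction $\Rightarrow$.} If every $K^T_{jk}$ commutes with $S$, then it also commutes with $S^2$. Therefore $\Phi^*(S^n)=S^n\sum_{j,k}K^{T\dagger}_{jk}K^T_{jk}=S^n$ for $n=1,2$, and Step 1 gives the trace identities. Only $[K^T_{jk},S]=0$ is needed here.

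\textbf{Step 3: the direction $\Leftarrow$.} This is the crux, and I would handle it with a Kadison--Schwarz-type expansion. Consider the positive operator
$$\sum_{j,k} [K^T_{jk},S]^\dagger [K^T_{jk},S] = \sum_{j,k}\bigl(SK^{T\dagger}_{jk}-K^{T\dagger}_{jk}S\bigr)\bigl(K^T_{jk}S-SK^T_{jk}\bigr)\:.$$
Expanding it and using that $S$ is selfadjoint, the sum becomes
$$S\,\Phi^*(I)\,S - S\,\Phi^*(S) - \Phi^*(S)\,S + \Phi^*(S^2)\:.$$
By $\Phi^*(I)=I$ and Step 1, this equals $S^2-S^2-S^2+S^2=0$. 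A finite sum of positive operators that vanishes forces each term to vanish. So $\|[K^T_{jk},S]\psi\|^2=0$ for all $\psi$, that is $[K^T_{jk},S]=0$ for all $j,k$. Finally, $[K^{T\dagger}_{jk},S] = -[K^T_{jk},S]^\dagger = 0$, again because $S=S^\dagger$.

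The only real obstacle is identifying the right positive combination in Step 3. Once it is written down, the cancellation needs exactly the two hypotheses $\Phi^*(S)=S$ and $\Phi^*(S^2)=S^2$ together with normalization $\Phi^*(I)=I$. This also explains why the second-moment condition is required for non-efficient measurements, in contrast with Proposition \ref{PROP}.
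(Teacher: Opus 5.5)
Your proof is correct and complete. The paper gives no argument of its own for this statement and simply quotes Beck's Theorem 3.12. Your Step 3 is the standard fixed-point computation for unital completely positive maps (in the spirit of Arias--Gheondea--Gudder) on which that result rests: expand $\sum_{j,k}[K^T_{jk},S]^\dagger[K^T_{jk},S]$ through $\Phi^*$ and use $\Phi^*(I)=I$, $\Phi^*(S)=S$, $\Phi^*(S^2)=S^2$. As you implicitly show, only $S=S^\dagger$ is needed, not $0\le S\le I$.
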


\begin{remark}  \label{Remc} {\em $[K^T_{jk}, S]= [K^{T\dagger}_{jk}, S] = 0$ immediately implies $[T_j, S]=0$ from the latter identity in (\ref{EK}). \hfill $\blacksquare$}\end{remark}

\section{Commutativity requirement and relativistic locality} 
We are now in a position to study the relationship between Propositions \ref{PROP}, \ref{PROP2} and the validity of the commutativity hypothesis (4) in Halvorson-Clifton's theorem for relativistic spatial localization observables, including the comparison of effects with different tensorial indices $v$. 
First, we prove that, in principle, these results -- in which NSC/RCC and L\"uders measurements play a crucial role -- imply commutativity for the effects of an unsharp localization system associated with causally separated regions, provided some apparently natural hypotheses are fulfilled.  
Next, we prove that the causal-separation hypotheses are never satisfied, simply because of the physical nature of a particle and its propensity to localize in a unique place. Hence the commutativity assumption does {\em not} arise by this route. In this case, failure of commutativity does {\em not} conflict with locality principles such as NSC and RCC. 
 
\subsection{NSC, RCC and commutativity of effects} 
Busch's crucial result about condition (4) in the Halvorson-Clifton theorem can now be stated precisely, also in a more general version that takes Proposition \ref{PROP2} into account.\\ 
 
\begin{theorem}\label{B}   
Consider two regions $\Delta, \Delta' \in {\cal R}$ for either a relativistic spatial localization observable or an unsharp localization system $(\cH, {\cal R}, A, U)$ (in this case the specifications $v,v'$ below are not necessary), 
and assume there exist open regions of spacetime ${\cal O}_\Delta \supset \Delta$ and ${\cal O}_{\Delta'} \supset \Delta'$ such that  
\begin{itemize}  
\item[(i)] the POVM $\{A^v(\Delta), I-A^v(\Delta)\}$ is localized in ${\cal O}_\Delta$; 
\item[(ii)] the POVM $\{A^{v'}(\Delta'), I-A^{v'}(\Delta')\}$ is localized in ${\cal O}_{\Delta'}$. 
\end{itemize} 
If, moreover, 
\begin{itemize} 
\item[(a)] measurements of these POVMs are performed using L\"uders procedures,  
\item[(b)] ${\cal O}_\Delta$ and ${\cal O}_{\Delta'}$ are causally separated, 
\item[(c)] NSC or RCC holds (in the latter case the measurement procedures in (a) can be weakened to efficient measurements with selfadjoint Kraus operators), 
\end{itemize} 
then 
$[A^v(\Delta), A^{v'}(\Delta')]=0.$ 
\end{theorem}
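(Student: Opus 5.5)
The proof is a direct application of Propositions \ref{PROP} and \ref{PROP2} to the two elementary POVMs. Set $T_1 := A^v(\Delta)$, $T_2 := I - A^v(\Delta)$, $S_1 := A^{v'}(\Delta')$ and $S_2 := I - A^{v'}(\Delta')$. These are finite discrete POVMs with $N=M=2$.

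A degenerate point must be handled first. Definition \ref{DEFPOV} requires $0<T_j$. If $A^v(\Delta)=0$ or $A^v(\Delta)=I$ (and similarly for $\Delta'$), the commutator is trivially zero. So we may assume all four effects are nonzero and the elementary POVMs are genuine two-outcome observables.

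By hypotheses (i), (ii) and (b), $T$ and $S$ are localized in the causally separated regions ${\cal O}_\Delta$ and ${\cal O}_{\Delta'}$. This is exactly the setting in which NSC and RCC apply.

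\emph{NSC case.} NSC applied to a non-selective L\"uders measurement of $T$ (hypothesis (a)) gives $\mathrm{tr}(\rho S_1) = \mathrm{tr}(\rho^T_L S_1)$ for every $\rho\in\sS(\cH)$. Since $N=2$ and $T_2 = I - T_1$, case (ii) of Proposition \ref{PROP} applies with $S := S_1$ arbitrary. It yields $[T_j, S_1]=0$ for $j=1,2$, and in particular $[A^v(\Delta), A^{v'}(\Delta')]=0$. No spectral assumption on $A^{v'}(\Delta')$ is needed, which is why case (ii) is the one to use.

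\emph{RCC case.} RCC applied to selective measurements of $T$ and $S$ gives $\mathrm{tr}(\rho^T_{j} S_i)\,\mathrm{tr}(\rho^T_{j}) = \mathrm{tr}(\rho^S_{i} T_j)\,\mathrm{tr}(\rho^S_{i})$ for all $i,j$ and all $\rho$. Here the selective states are understood unnormalized, i.e. as $K^T_j\rho K^{T\dagger}_j$, so the identity also makes sense when an outcome has zero probability. Suppose the measurements are L\"uders, or more generally efficient with selfadjoint Kraus operators. Then the implication $\Leftarrow$ of Proposition \ref{PROP2} gives $[T_j,S_i]=0$ for all $i,j$. Taking $i=j=1$ gives the claim.

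The main subtlety is bookkeeping rather than mathematics. One must check that the operational hypotheses of NSC and RCC are met. These are: localization of both POVMs in the sense of Definition \ref{LOC}; causal separation of the localization regions, not merely of $\Delta$ and $\Delta'$; and the measurement-procedure assumption matching the relevant proposition. Note also that the conditions ${\cal O}_\Delta\supset\Delta$ and ${\cal O}_{\Delta'}\supset\Delta'$ play no role in the derivation. They are included only for consistency with LDP, so that the theorem applies exactly in the situation LDP describes.
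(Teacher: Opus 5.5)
Your proof is correct and coincides with the paper's. For NSC you use case (ii) of Proposition \ref{PROP} with $T=\{A^v(\Delta),I-A^v(\Delta)\}$ and $S=A^{v'}(\Delta')$; for RCC you use the implication $\Leftarrow$ of Proposition \ref{PROP2} with $S=\{A^{v'}(\Delta'),I-A^{v'}(\Delta')\}$. Your treatment of the degenerate cases $A^v(\Delta)\in\{0,I\}$ is a harmless addition.

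One small slip in a side remark: if the selective states are read unnormalized, as $K^T_j\rho K^{T\dagger}_j$, then $\mathrm{tr}(\rho^T_jS_i)$ is already the sequential joint probability, so the factor $\mathrm{tr}(\rho^T_j)=\mathrm{tr}(\rho T_j)$ must be dropped. The identity that actually enters Proposition \ref{PROP2} (see its proof) is $\mathrm{tr}(K^T_j\rho K^{T\dagger}_jS_i)=\mathrm{tr}(K^S_i\rho K^{S\dagger}_iT_j)$; this does not affect your argument.
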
 
 
\begin{proof}  
This immediately follows from case (ii) of Proposition \ref{PROP}, with the POVM $T$ described by $\{A^v(\Delta), I-A^v(\Delta)\}$ and the effect $S$ given by $A^{v'}(\Delta')$. Alternatively, it immediately follows from Proposition \ref{PROP2}, with $S$ interpreted as the POVM $\{A^{v'}(\Delta'), I-A^{v'}(\Delta')\}$. 
\end{proof}

\begin{remark}\label{REMBECK} {\em An alternative result can be proved by taking advantage of Theorem \ref{PROP3} and Remark \ref{Remc}. 
Compared with Theorem \ref{B}, this approach permits much more general measurement procedures, which may be non-efficient and defined in terms of non-selfadjoint operators. As in Theorem \ref{B}, hypotheses (i), (ii), (a), and (b) should be assumed, as well as (c) in the case of NSC. However, to exploit the implication $\Leftarrow$ of Theorem \ref{PROP3} together with Remark \ref{Remc}, one should assume, for every $\rho\in \sS(\cH)$ (where we omit the possible specifications $v,v'$), 
 $$\mathrm{tr}\left( \rho A(\Delta')^2\right) = \mathrm{tr}\left(\rho^T A(\Delta')^2\right)\quad \mbox{with $T$ given by $\{A(\Delta), I-A(\Delta)\}$}$$ 
which, in principle, might again be justified as a consequence of NSC. 
However, in the identity above, $A(\Delta')^2$ may not be viewed as an elementary effect embedded in the elementary POVM $\{A(\Delta'), I-A(\Delta')\}$, but rather as a selfadjoint observable in its own right (with its own PVM). In particular, according to Definition \ref{LOC}, we should also assume that {\em there is a further physical procedure to measure the PVM of the selfadjoint operator $A(\Delta')^2$} 
and that this PVM is localized in a region causally separated from ${\cal O_{\Delta}}$. 
In this case, using the standard elementary formulation of quantum theory (see, e.g., \cite{Moretti2}), $\mathrm{tr}\left( \rho A(\Delta')^2\right)$ is interpreted as the expectation value $\langle A(\Delta')^2 \rangle_\rho$ of the selfadjoint observable $A(\Delta')^2$ in the state $\rho$. 
In the AHK framework, adopting this additional hypothesis also seems fairly natural, but the use of the AHK setting is exactly what we are trying to avoid. An alternative possibility is to introduce another elementary POVM $\{A(\Delta')^2, I-A(\Delta')^2\}$ and assume that it is localized in a region causally separate from ${\cal O}_\Delta$.  
From a general perspective, the fact that the mathematical expression of the no-signaling condition may hold for an effect $S$: 
$\mathrm{tr}\left( \rho S\right) = \mathrm{tr}\left(\rho^TS\right)$ for every $\rho\in \sS(\cH)$, 
but not for $S^2$: 
$\mathrm{tr}\left( \rho S^2\right) \neq  \mathrm{tr}\left(\rho^T S^2\right)$ for some $\rho\in \sS(\cH)$, is a concrete possibility. A counterexample due to Heinosaari and Wolf \cite{HW} is discussed by Beck in Section 2.5 of \cite{tesi}.} \hfill $\blacksquare$\\ 
\end{remark} 
 
In view of Theorem \ref{B}, it is then straightforward to see that requirement (4) in Halvorson-Clifton's theorem \ref{HC} holds under some further reasonable physical assumptions: 
\begin{itemize} 
\item[(i)] all measurements of the POVMs $\{A(\Delta), I-A(\Delta)\}$, for $\Delta \in {\cal R}$, are performed using the L\"uders procedure assuming NSC -- or with efficient selfadjoint Kraus operators and assuming RCC; 
\item[(ii)] for any $\Delta \in {\cal R}$, $\{A(\Delta), I-A(\Delta)\}$ can be localized in open regions ${\cal O}^\epsilon_\Delta\supset  \Delta$ that shrink around $\Delta$ as $\epsilon\to 0^+$: ${\cal O}^{\epsilon}_\Delta$ is the causal completion of a covering of $\Delta$ by open balls of radius $\epsilon>0$ within the $3$-plane $\Sigma \ni \Delta$. 
\end{itemize} 
If $\Delta, \Delta'\subset \Sigma$ belong to ${\cal R}$ and their distance within $\Sigma$ is $\epsilon>0$, then ${\cal O}^{\epsilon/4}_\Delta$ and ${\cal O}^{\epsilon/4}_{\Delta'}$ are clearly causally separated, and their  
intersections with $\Sigma$ 
have distance at least $\epsilon/2$.   
If ${a}\in \sV_+$, then $\Delta$ and $\Delta'+ t {a}$ remain causally separated for sufficiently small $|t|<\delta$.   
Thus, Busch-Singh's Proposition \ref{PROP} implies that $A(\Delta)$ and $A(\Delta'+ t{a})$ commute, confirming that condition (4) in Halvorson-Clifton's theorem \ref{HC} is satisfied. 
 
\subsection{Commutativity and Relativistic Spatial Localization Observables}\label{NCn} 
We want to study the validity of hypothesis (4) in the case where the set of effects $A^v(\Delta)$ is organized into, or extends to, families of POVMs forming a {\em relativistic spatial localization observable} according to Definition \ref{REMM0}. 
However, this structure may be considerably weakened by using an extension of the notion of unsharp localization system where (1) unbounded sets $\Delta$ are allowed and (2) $\Sigma \setminus \Delta\in {\cal R}$ if $\Delta\in {\cal R}$ and $\Delta \subset \Sigma$. Yet, requirements (c) and (d) in Definition \ref{REMM0} do not play any role here.\\

\begin{proposition} \label{PROP22} Let $(\cH, {\cal R}, A, U)$ be a relativistic spatial localization observable, let $\Sigma$ be a rest space of a reference frame, and suppose that LDP holds. If, for $\Delta\in {\cal R}$ with $\Delta \subset \Sigma$, the elementary POVM $\{A^v(\Delta), I-A^v(\Delta)\}$ is localized in the open set ${\cal O} \subset \bM$, then ${\cal O} \supset \Sigma$ necessarily. 
\end{proposition}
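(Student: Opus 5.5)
The argument exploits the fact that the elementary observable for $\Delta$ carries exactly the same information as the elementary observable for the complementary region $\Sigma\setminus\Delta$. Normalization of the POVM $A^v\rest_{\cB(\Sigma)}$ and additivity give
$$A^v(\Sigma\setminus\Delta) = A^v(\Sigma)-A^v(\Delta) = I - A^v(\Delta)\:,$$
where $\Sigma\setminus\Delta\in\cB(\Sigma)\subset{\cal R}$ because $\cB(\Sigma)$ is a $\sigma$-algebra. Hence the two elementary POVMs $\{A^v(\Delta), I-A^v(\Delta)\}$ and $\{A^v(\Sigma\setminus\Delta), I-A^v(\Sigma\setminus\Delta)\}$ have the same two effects, with the outcomes merely relabeled ($1\leftrightarrow 0$).

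The first step is to argue that, as observables in the sense of Definition \ref{LOC}, these two POVMs coincide up to renaming of outcomes. Definition \ref{DEFPOV} explicitly says outcomes are defined up to renaming. Any measurement procedure carried out in ${\cal O}$ that measures the first POVM therefore also measures the second: one reads ``click'' for $\Delta$ as ``no click'' for $\Sigma\setminus\Delta$. Consequently the second POVM is localized in ${\cal O}$ as well.

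The second step is to apply LDP twice. Applied to $\Delta$, it gives ${\cal O}\supset\Delta$. Applied to the effect $A^v(\Sigma\setminus\Delta)$, which is an effect of the spatial localization observable, so that LDP applies (LDP was deliberately formulated to allow unbounded regions), it gives ${\cal O}\supset\Sigma\setminus\Delta$. Taking the union, ${\cal O}\supset\Delta\cup(\Sigma\setminus\Delta)=\Sigma$.

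The main point needing care is the first step. One must justify that localization in the sense of Definition \ref{LOC} is a property of the observable as a set of effects, invariant under relabeling of outcomes, rather than of the labeled pair. I would defend this using the stated convention that outcomes are determined only up to renaming, together with the physical reading: an instrument deciding whether the particle is in $\Delta$ decides equally whether it is in $\Sigma\setminus\Delta$. This is also where the particle character enters, since the identity $I-A^v(\Delta)=A^v(\Sigma\setminus\Delta)$ is precisely the normalization on the whole rest space. Beyond this, no covariance assumptions ((c),(d) of Definition \ref{REMM0}) are used, consistent with the remark preceding the statement.
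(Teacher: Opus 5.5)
Your proposal is correct and matches the paper's proof: it also writes $\{A^v(\Delta), I-A^v(\Delta)\}=\{I-A^v(\Delta^c), A^v(\Delta^c)\}$ with $\Delta^c=\Sigma\setminus\Delta\in{\cal R}$, applies LDP to both $\Delta$ and $\Delta^c$, and concludes ${\cal O}\supset\Delta\cup\Delta^c=\Sigma$. You spell out that localization does not depend on how the outcomes are labeled; the paper uses this step too but leaves it implicit.
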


\begin{proof}  Using the definition of POVM, $\{A^v(\Delta), I-A^v(\Delta)\}= \{I-A^v(\Delta^c), A^v(\Delta^c)\}$ where $\Delta^c:= \Sigma \setminus \Delta\in {\cal R}$ due to the very definition of relativistic spatial localization observable. At this point, since $\{A^v(\Delta^c), I-A^v(\Delta^c)\}$ is localized in ${\cal O}$, LDP implies that $\Delta^c \subset {\cal O}$. In summary, ${\cal O}\supset \Delta \cup \Delta^c = \Sigma$. 
\end{proof}

\begin{corollary} \label{CORR22} Let $(\cH, {\cal R}, A, U)$ be a relativistic spatial localization observable, suppose that LDP holds, and consider $\Delta, \Delta' \in {\cal R}$. If the corresponding elementary POVMs 
$\{A^v(\Delta), I-A^v(\Delta)\}$ and $\{A^{v'}(\Delta'), I-A^{v'}(\Delta')\}$ are localized, respectively, in ${\cal O}_{\Delta}$ and ${\cal O}_{\Delta'}$, then  
these sets cannot be causally separated (even if $\Delta$ and $\Delta'$ are causally separated). 
\end{corollary}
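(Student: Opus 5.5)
The plan is to reduce the corollary to Proposition \ref{PROP22}, applied once to each of the two elementary POVMs, and then to use a purely geometric fact about spacelike $3$-planes in $\bM$. Let $\Sigma\ni\Delta$ and $\Sigma'\ni\Delta'$ be the spacelike $3$-planes containing the two regions; for a relativistic spatial localization observable each element of ${\cal R}$ lies in some such plane, possibly $\Sigma=\Sigma'$. Since LDP holds, Proposition \ref{PROP22} gives ${\cal O}_\Delta\supset\Sigma$ and ${\cal O}_{\Delta'}\supset\Sigma'$. Because causal separation is inherited by subsets, it will suffice to show that $\Sigma$ and $\Sigma'$ are never causally separated, i.e.\ that $(J^+(\Sigma)\cup J^-(\Sigma))\cap\Sigma'\neq\emptyset$.

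The geometric step should be the only nontrivial point. I would use that every spacelike affine $3$-plane $\Sigma$ is a Cauchy surface of $\bM$. Concretely, I pick an adapted frame $u\perp\Sigma$ and Minkowskian coordinates $(t,\vec x)$ with $\Sigma=\{t=0\}$. Every point $p=(t_p,\vec x_p)$ then lies on the timelike straight line $s\mapsto (s,\vec x_p)$, which meets $\Sigma$ at $q=(0,\vec x_p)$. Hence $p-q=t_p u$ is either $0$ or a future- or past-directed timelike vector. This gives $p\in J^+(\Sigma)\cup J^-(\Sigma)$, since $q\in \Sigma$ and $p-q\in\overline{\sV_+}$ or $q-p\in\overline{\sV_+}$. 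So $J^+(\Sigma)\cup J^-(\Sigma)=\bM$, and in particular it meets $\Sigma'$, which is nonempty. Therefore $J^+({\cal O}_\Delta)\cup J^-({\cal O}_\Delta)\supset J^+(\Sigma)\cup J^-(\Sigma)=\bM$, which meets ${\cal O}_{\Delta'}$ (nonempty, as it contains $\Sigma'$). The two regions are therefore not causally separated.

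Finally I would address the parenthetical remark. The hypothesis that $\Delta,\Delta'$ themselves are causally separated is never used: the obstruction comes entirely from the enlargement ${\cal O}\supset\Sigma$ forced by normalization of the POVM on $\Sigma$ through $\Delta^c=\Sigma\setminus\Delta$ together with LDP. The argument is also insensitive to whether $v=v'$ or whether $\Sigma=\Sigma'$. I expect no real obstacle; the only care needed is to verify the Cauchy property directly from the definitions of $J^\pm$ given in the notation section, including the degenerate case $p\in\Sigma$ where $p-q=0$. That case is covered because $J^\pm(A)\supset A$ via $0\in\overline{\sV_+}$.
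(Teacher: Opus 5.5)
Your proof is correct and follows the paper's route. Proposition \ref{PROP22} gives ${\cal O}_\Delta\supset\Sigma$, and since $\Sigma$ is a Cauchy surface, the causal shadow of the nonempty set ${\cal O}_{\Delta'}$ must meet $\Sigma\subset{\cal O}_\Delta$. The only differences are that you check the Cauchy property explicitly in adapted coordinates, in the equivalent form $J^+(\Sigma)\cup J^-(\Sigma)=\bM$, and that you apply Proposition \ref{PROP22} to $\Delta'$ as well, which serves only to guarantee ${\cal O}_{\Delta'}\neq\emptyset$.
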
 
 
\begin{proof} As ${\cal O}_{\Delta}\supset \Sigma$ and the latter is a {\em spacelike Cauchy surface}, $J^+({\cal O}_{\Delta'})\cup J^-({\cal O}_{\Delta'})$ must meet $\Sigma$ -- and thus ${\cal O}_{\Delta}$ -- somewhere.  
\end{proof} 
 
Consequently, $\{A^v(\Delta), I-A^v(\Delta)\}$ and  
$\{A^{v'}(\Delta'), I-A^{v'}(\Delta')\}$ do not satisfy the basic hypothesis (b) of Theorem \ref{B}. We conclude that commutativity, i.e., the natural generalization of hypothesis (4) in Halvorson-Clifton's theorem, is not physically mandatory for {\em relativistic spatial localization observables} and for {\em unsharp localization systems} that are extendable to the former type of structure. 
This result is actually quite general. Even establishing a more general version of Theorem \ref{B}, also taking Remark \ref{REMBECK} into account and considering measurement procedures much more general than L\"uders(-like) ones, would not remove the obstacle to deriving commutativity from locality. Corollary \ref{CORR22} shows that, if a relativistic spatial localization observable exists, the region where elementary observables are localized is the whole physical space at the corresponding instant of time, even if the observables refer to bounded detection regions. Two localization regions, unlike the detection regions, are {\em never} causally separated, and this precludes the use of locality arguments relying on NSC or RCC.  
The only apparent alternative is to abandon from the outset the physically natural principle LDP, which played a crucial role in the proof of Proposition \ref{PROP22}.

In summary, what we have established with Proposition \ref{PROP22} and Corollary \ref{CORR22} has a clear physical meaning. We are considering systems that localize in 
a unique region, namely at a {\em single point} of space in ideal detection procedures, and not in {\em several} regions {\em simultaneously}: in other words, we are considering a {\em particle}. A particle has the propensity to localize in a {\em single} place, unlike extended systems\footnote{Since localization arises from  a physical interaction, it is nevertheless questionable whether localization in a region smaller than the particle's Compton wavelength is physically meaningful, in view of the possibility of pair creation.}. 
We are simply exploring the consequences in measurement theory of this fundamental propensity characteristic of particles. Using a POVM on $\Sigma$ of a relativistic spatial localization observable means that we have filled the entire rest space $\Sigma$ of an inertial reference frame with infinitely many detectors. We turn them on at the instant $t$ -- which defines $\Sigma$ -- and turn them off immediately afterwards. At that instant, the particle must necessarily be detected in a unique detector. If we focus our attention on an arbitrarily chosen region $\Delta\subset \Sigma$, the system may or may not be found there at time $t$. In the first case, however, we are sure that it is simultaneously detected nowhere in $\Delta^c$; in the second case, we are sure that it is simultaneously detected somewhere in $\Delta^c$. By observing what happens in $\Delta$, we therefore have information about what is happening in the whole of $\Sigma$ at the instant $t$. In this sense, it seems natural that  
$\{A^v(\Delta), I-A^v(\Delta)\}$  
should be thought of as localized {\em everywhere} in $\Sigma$, and there is no contradiction with locality because, as explained above, we are using the propensity of the quantum system under consideration, namely a particle, to localize somewhere and in a unique place if its position is measured.  
If we admit that the system is allowed to localize ``simultaneously'' in several places, i.e., in causally separated regions, we are no longer dealing with the notion of a particle, and the notion of a relativistic spatial localization observable is not appropriate for such a system.
Whether or not particles exist in the above sense is simply a matter for experimental investigation. In this work, we assume that they do exist. In \cite{M26}, we make the stronger assumption that elementary particles in Wigner's sense are of this sort, at least for massive scalar bosons.

This analysis can also be applied  
to the case where the regions $\Delta$ and $\Delta'$ belong to {\em different} rest spaces of {\em different} reference frames, provided these regions are causally separated. This follows from the fact that at least some relativistic spatial localization observables can be extended to the family of generic smooth spacelike Cauchy surfaces \cite{DM24}. Furthermore, for sufficiently regular spacelike-separated subsets $\Delta$ and $\Delta'$, there is a spacelike Cauchy surface $S$ that contains both \cite{BS06}. We can use $S$ as our notion of rest space in which to detect the particle, and we can recast the previous argument, this time assuming that this generalized rest space $S$ is filled with detectors. The result can be generalized further by taking $\Delta$ and $\Delta'$ of minimal regularity (Lipschitz achronal Borel sets) and exploiting a maximal achronal set $S$ that includes both \cite{CDRM}. 
Finally, even in the situation where we switch on {\em only two} detectors in causally separated regions $\Delta$ and $\Delta'$ of a given rest space $\Sigma$, the above reasoning 
{\em could}  
hold in a weaker version. If the particle is found in $\Delta$, it cannot be found in $\Delta'$ simply in view of its propensity to localize in a unique place. Unlike the previous case, however, there is now the possibility that the particle is localized in neither of the switched-on detectors, since they do not exhaust all possibilities. However, this situation is more delicate to analyze, and we do not address it here. 
 
As a consequence of the discussion above, from an AHK perspective, we expect that every $A^v(\Delta)$ of a relativistic spatial localization observable, if any, does not belong to a local algebra of observables 
$\gA({\cal O})$ 
with ${\cal O}$ an arbitrarily small neighborhood of $\Delta$. Otherwise, commutativity should hold for sufficiently causally separated $\Delta$ and $\Delta'$. This ${\cal O}$ has to be taken as a neighborhood of the whole $\Sigma$ and cannot be restricted further.

Finally, it is important to emphasize that we are not claiming that hypothesis (4) of the Halvorson--Clifton theorem \ref{HC} cannot hold in general. Rather, we argue that, from a physical standpoint, commutativity of effects {\em of localization observables}, if any, is not a consequence of NSC or RCC (through Propositions \ref{PROP},\ref{PROP2}, and \ref{B} in particular). Indeed, there exist in the literature relativistic spatial localization observables for a particle-antiparticle fermion system \cite{Castrigiano2,DRCM26}, described by families of PVMs, such that, for every $\Delta,\Delta'\subset \Sigma$, one has 
$[A(\Delta),A(\Delta')] = 0$ -- and more precisely $A(\Delta)A(\Delta')= A(\Delta')A(\Delta)=A(\Delta\cap \Delta')= 0$ if $\Delta\cap \Delta' =\emptyset$. {\em This is true on account of the very definition of PVM and not as a consequence of Busch's theorem}. For these families of PVMs, hypothesis (4) also holds as a consequence of standard properties of orthogonal projectors and CC (see a proof in Section \ref{SECFC}).  
For these relativistic spatial localization observables concerning a fermion \cite{Castrigiano2}, however, the conclusion of Halvorson-Clifton's Theorem \ref{HC} does not hold, because in general $A(\Delta) \neq 0$. The reason is that the energy-boundedness requirement (3) in Theorem \ref{HC} is not satisfied. Furthermore, upon restriction to the particle space, namely the positive-energy sector, commutativity ceases to hold, and the observables are therefore described by POVMs.

Nevertheless, if one insists on maintaining the energy-boundedness requirement together with additivity and translation covariance as well, then commutativity cannot hold simply because of the Halvorson-Clifton result. However, for a quantum particle, understood as above, this fact does not seem to pose a problem.

\section{Possible commutativity recovery via local conditional POVMs}\label{secFC} 
In this section, we propose a different viewpoint on the localization of quantum systems, in which measurements are explicitly performed in finite regions of space, namely laboratories. By construction, the observables we are going to define should in principle be causally independent when associated with causally separated spatial regions. 
We focus on the probabilities of detecting the system in one of the detectors of a laboratory, given that it is already known that the system is found in the laboratory. 
Hence the physical interpretation of the procedure we are about to introduce involves {\em conditional probabilities}. 
 
\subsection{Localization in laboratories} Realistic localization experiments in $\bM$ are performed in {\em laboratories}, which occupy finite spatial regions. We shall also assume that they are {\em causally complete} in their spacetime description: everything that may happen in such a spacetime region should be determined by physical actions performed therein, at least at a macroscopic level.  
Let $\Sigma$ be a flat $3$-dimensional plane. We define a {\bf laboratory} $L(\Delta_0)\subset \bM$, with {\bf space} given by a bounded set $\Delta_0\in \cB(\Sigma)$, as the causal completion\footnote{We warn the reader that this orthogonality relation is not the one used in \cite{CDRM}, where {\em achronal separation} was instead considered to describe the orthomodular lattice of the so-called {\em causal logic} of Minkowski spacetime.} defined in (\ref{CComp}):
\beq L(\Delta_0) := (\Delta_0^\pperp)^\pperp\:. \label{LAB}\eeq 
In particular, if $\Delta_0$ is open in the relative topology, the resulting open set $L(\Delta_0)$, equipped with the Lorentzian manifold structure induced by $\bM$, is a globally hyperbolic spacetime in its own right, with $\Delta_0$ as a possible smooth spacelike Cauchy surface. $L(\Delta_0)$ also contains other {\em curved} smooth spacelike Cauchy surfaces, though we shall focus only on the bounded flat 3-space $\Delta_0$ of the laboratory $L(\Delta_0)$. 
 
Given a laboratory $L(\Delta_0)$, we want to localize a quantum system in the subregions $\Delta \subset \Delta_0$ of its flat bounded portion of rest space $\Delta_0$, even if the system can generally be detected in the whole spacetime. For the moment, we once more assume that the entire space is filled with detectors represented by a relativistic spatial localization observable $A$. In Section \ref{CFS} we shall try to get rid of this unrealistic perspective. The idea is to define a POVM referring only to the laboratory, starting from a notion of localization $A$ defined on the whole space $\Sigma$. This ``restricted'' POVM should describe in which subregion $\Delta \in \cB(\Delta_0)$ of the laboratory the system is detected {\em if the system is detected in $\Delta_0$}. Notice that the POVM we are looking for should be normalized on $\Delta_0$ and not on the whole $\Sigma$. It concerns the fraction of events in $\Delta$ over the events in $\Delta_0$ (and not over those in the whole $\Sigma$). 
If we denote the effects of this POVM by $B_{\Delta_0}(\Delta)$, then  
$tr(\rho B_{\Delta_0}(\Delta))$ is {\em the probability of finding the system in $\Delta$ if it is found in $\Delta_0$}. In particular, $tr(\rho B_{\Delta_0}(\Delta_0))=1$ because it is the probability of finding the system in $\Delta_0$ if it is found in $\Delta_0$! 
 
\subsection{Localization in a laboratory from a relativistic spatial localization observable}\label{secfc1}

Suppose we have at our disposal a full relativistic spatial localization observable denoted by $A^v: {\cal R}\to \gB(\cH)$ or just a POVM $A$ on a specific spacelike $3$-plane $\Sigma$.  
Take $\Delta_0\in \cB(\Sigma)$. We intend to re-normalize $A$ on this $\Delta_0$.   
The idea is to consider families of operators of the form 
\beq B_{\Delta_0}(\Delta)= \frac{1}{\sqrt{A(\Delta_0)}}A(\Delta)\frac{1}{\sqrt{A(\Delta_0)}}\quad \mbox{where $\Delta \in \cB(\Delta_0)$.}\label{NEwW} \eeq 
This is because, if these operators can be constructed, they define a normalized POVM on $\Delta_0$. However, some technical issues arise immediately in connection with the proposed definition, in particular the existence of the inverse operators written there. Let us examine the less obvious ingredients of the formula above. 
By functional calculus (and selfadjointness), it is easy to prove that $Ker(\sqrt{A(\Delta_0)}) = Ker(A(\Delta_0))$ simply because $\gB(\cH)\ni A(\Delta_0)\geq 0$. Therefore, if $Ker(A(\Delta_0))=\{0\}$,  
then a  
selfadjoint operator $A(\Delta_0)^{-1/2} : Ran(\sqrt{A(\Delta_0)})\to \cH$ is well defined. 
Notice that the domain is dense because 
$\overline{Ran(\sqrt{A(\Delta_0)})}= Ker(\sqrt{A(\Delta_0)})^\perp = \cH$. 
At this point, an important technical result concerning the condition $Ker(A(\Delta_0))=\{0\}$ follows from even stronger results due to Castrigiano\footnote{I am grateful to Castrigiano for some clarifications about this issue.} \cite{Castrigiano2}, which extend previous results by Hegerfeldt. 
In the next proposition, the whole structure of a relativistic spatial localization observable $(\cH, A,{\cal R}, U)$ is not necessary. \\

\begin{proposition}\label{TEOCASTR} 
Let $(\cH, A,{\cal R}, U)$ be a weakened version of a scalar relativistic spatial localization observable as in Definition \ref{REMM0}, with the following modifications: 
${\cal R}$ in (a) consists of Borel regions in the rest spaces orthogonal to a preferred Minkowskian reference frame $u \in \sT_+$, and covariance as in (d) is required only for the subgroups of spatial and temporal translations in $u$.  
Assume the following. 
\begin{itemize} 
\item[(1)] Referring to the above Minkowskian reference frame $u$, the square-mass operator $$M^2:= P_0^2 - \sum_{k=1}^3 P_k^2$$ of the strongly-continuous unitary representation $U$ of $\widetilde{ISO(1,3)_+}$ (defined on the {\em G{\aa}rding} domain\footnote{It is essentially selfadjoint on this domain, as is every central element of the universal enveloping algebra of the Lie algebra of $\widetilde{ISO(1,3)_+}$ represented in terms of generators of $U$. This condition does not depend on the chosen Minkowskian reference frame.}) is non-negative. 
\item[(2)] Energy is positive: the selfadjoint generator $H^u= -P_0 = P^0$ of time evolution along $u$ through $U$ (i.e., a past-directed temporal translation along $u$) has non-negative spectrum. 
\item[(3)] CT is satisfied for the rest spaces $\Sigma_u$ normal to $u$. 
\end{itemize} 
Let $\Delta_0\subset \Sigma_u$ be an element of ${\cal R}$. Then, 
$$Ker(A(\Delta_0))\neq \{0\} \quad \Rightarrow 
\quad \overline{\Sigma_u \setminus \Delta_0} = \Sigma_u\:.$$ 
If $(\cH, A,{\cal R}, U)$ is a relativistic spatial localization observable that satisfies CC and conditions (1) and (2) (which do not depend on the reference frame), then the conclusion is true for every rest space $\Sigma$ and every $A^v(\Delta)$ with $\Delta \subset \Sigma$. 
\end{proposition}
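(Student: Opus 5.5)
The plan is to prove the contrapositive: if $\Sigma_u\setminus\Delta_0$ is not dense in $\Sigma_u$, then $Ker(A(\Delta_0))=\{0\}$. If the complement is not dense, $\Delta_0$ contains a nonempty open ball $B\subset\Sigma_u$ of radius $r>0$ centred at some $x_0$. Suppose, by contradiction, that $\psi\neq 0$ satisfies $A(\Delta_0)\psi=0$.

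\textbf{Step 1 (from $\Delta_0$ to $B$).} Additivity and positivity of the POVM give $0\le A(B)\le A(\Delta_0)$. Hence $\langle\psi|A(B)\psi\rangle=0$, which forces $\sqrt{A(B)}\psi=0$ and so $A(B)\psi=0$.

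\textbf{Step 2 (vanishing on an open set of translations, via CT).} Let $D\subset\Sigma_u$ be the ball of radius $r/4$ centred at $x_0$, let $a=tu+x$ with $|t|<r/4$ and $x$ a spatial vector along $\Sigma_u$ with $|x|<r/4$, and set $D_a:=D+a$, a ball in the rest space $\Sigma_u+tu$. Its causal shadow on $\Sigma_u$, namely $\Sigma_u\cap(J^+(D_a)\cup J^-(D_a))$, is contained in $B$. CT then gives $A(D_a)\le A(B)$, where $A(D_a)$ is the effect of the POVM on the plane $\Sigma_u+tu$. Covariance under translations gives $A(D_a)=U_aA(D)U_a^{-1}$. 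Therefore
$$F(a):=\sqrt{A(D)}\,U_a^{-1}\psi=0\quad\mbox{for every $a$ in the open set } N:=\{tu+x:\ |t|<r/4,\ |x|<r/4\}\subset\sV .$$

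\textbf{Step 3 (analytic continuation; the main obstacle).} Hypotheses (1) and (2) together say that the joint spectrum of the energy-momentum generators lies in the closed forward cone $\overline{\sV_+}$. A non-negative square mass and non-negative energy give exactly this cone, and only the translation subgroup is actually used here. By the SNAG theorem, $U_a^{-1}=\int e^{\pm i\,g(p,a)}\,dE(p)$ with $E$ supported in $\overline{\sV_+}$, the sign being fixed by the convention $H^u=-P_0$. Consequently, for every $\phi\in\cH$ the bounded continuous function $a\mapsto\langle\phi|F(a)\rangle$ is the boundary value of a holomorphic function on the forward (or backward) tube $\sV\pm i\sV_+$. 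By Step 2 this boundary value vanishes on the open set $N$. The uniqueness theorem for boundary values of tube-holomorphic functions (the distributional edge-of-the-wedge theorem, or Riesz-type uniqueness applied line by line) then implies that it vanishes identically on $\sV$. This is where I expect the real work to lie: the boundary-value uniqueness statement must be invoked in the correct distributional or bounded-function form. A cheaper variant, which I would try first, is to apply the one-variable argument along every timelike direction $w$ close to $u$. Each such $w$ has $g(w,\cdot)$ non-negative on the spectrum, so the one-parameter group generated along $w$ has a generator of fixed sign, and vanishing of the corresponding function on an interval implies vanishing on all of $\bR$ by the F.\ and M.\ Riesz theorem for $H^2$ boundary values. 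Since such directions through points of $N$ sweep out all of $\sV$, this gives $F\equiv 0$.

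\textbf{Step 4 (contradiction).} From $F\equiv0$ we get $\langle\psi|U_aA(D)U_a^{-1}\psi\rangle=\langle\psi|A(D+a)\psi\rangle=0$ for every spatial $a$. Cover $\Sigma_u$ by countably many translates of $D$ and refine them to a countable Borel partition $\{\Delta_k\}$ of $\Sigma_u$, each piece contained in some translate of $D$. Monotonicity gives $\langle\psi|A(\Delta_k)\psi\rangle=0$ for every $k$, and $\sigma$-additivity then gives $\langle\psi|A(\Sigma_u)\psi\rangle=0$. This contradicts $A(\Sigma_u)=I$ with $\psi\neq0$. For the final claim, fix any rest space $\Sigma$, choose $u$ normal to it, and note that CC implies CT for every frame, while (1)--(2) are frame independent. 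The same argument then applies verbatim to $A^v$ restricted to $\Sigma$.
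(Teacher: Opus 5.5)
Your argument is correct, but it takes a genuinely different route from the paper. The paper does not argue directly. It rewrites $A(\Delta_0)\psi=0$ as $A(\Sigma_u\setminus\Delta_0)\psi=\psi$ and quotes Castrigiano's Theorem (8). That theorem says such a set must be essentially dense, $\overline{(\Sigma_u\setminus\Delta_0)\setminus N}=\Sigma_u$ for every Lebesgue-null $N$, and taking $N=\emptyset$ gives the claim.

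You instead reconstruct, self-containedly, the Hegerfeldt/Reeh--Schlieder mechanism behind that theorem. CT and translation covariance turn $A(B)\psi=0$ into the vanishing of $a\mapsto\sqrt{A(D)}\,U_a^{-1}\psi$ on an open set of translations. Conditions (1) and (2) place the joint energy-momentum spectrum in $\overline{\sV_+}$, so tube analyticity spreads the vanishing to all of $\sV$. Normalization $A(\Sigma_u)=I$ then gives the contradiction. Your version shows exactly what is used: the translation subrepresentation, CT only for balls between parallel planes normal to $u$, monotonicity, and normalization. The citation gives the stronger essential-density statement, but you can recover it as well. Use $A(N)=0$ for null $N$ (item (1) of the Remark after Definition \ref{REMM0444}); then $B\subset\Delta_0\cup N$ still gives $A(B)\le A(\Delta_0)$ in Step 1.

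One point in your ``cheaper variant'' of Step 3 needs fixing. Lines through points of the small set $N$ in timelike directions near $u$ do not cover $\sV$ in one pass, because far spacelike translations are missed. You need one iteration. After the first pass the function vanishes on the open set $N+\bR K$, where $K$ is the set of admissible directions. A second pass then gives $N+\bR K+\bR K=\sV$, since $s_1(u+\eta e)+s_2(u-\eta e)$ reaches every vector in the plane spanned by $u$ and a spatial unit vector $e$. Along each line the relevant function is a bounded holomorphic function on a half-plane, continuous up to the real axis. Its vanishing on an interval already forces it to vanish identically by Schwarz reflection and the identity theorem, so no $H^2$ or F.\ and M.\ Riesz input is needed.
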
 
 
\begin{proof}  
The condition $Ker(A(\Delta_0))\neq \{0\}$ is equivalent to $A(\Delta_0)\psi =0$ for some $\psi \in \cH\setminus\{0\}$, which, in turn, is equivalent to  
$A(\Delta_0^c)\psi =\psi$, where $\Delta_0^c:= \Sigma_u \setminus \Delta_0$.  
The condition $A(\Delta_0^c)\psi =\psi$ for $\psi \neq 0$ implies, in fact, that $\Delta_0^c$ is {\em essentially dense} in $\Sigma_u$: $\overline{\Delta^c_0\setminus N} = \Sigma_u$  
for every Lebesgue-null set $N\subset \Sigma_u$ as a consequence of Theorem (8) in \cite{Castrigiano2}, also taking into account the comment following Definition (5) in the same reference. Using $N=\emptyset$, we obtain $\overline{\Sigma_u \setminus \Delta_0}= \Sigma_u$, which proves the claim. The last statement is an immediate consequence of what has been proved and of the definition of a relativistic spatial localization observable. 
\end{proof}

\begin{remark} {\em This type of result is consistent with the idea that a relativistic quantum system cannot be sharply localized -- with probability $1$ -- in a non-dense spatial region, in particular a bounded one: the probability of finding it arbitrarily far away in space is always positive, even if it tends to vanish, when time evolution is causal (and energy and the square-mass operator are positive, as expected for {\em relativistic systems}). An important issue is whether, conversely, localized probability distributions can be approximated arbitrarily well by admissible states of this type. A general discussion of these issues, as well as concrete examples for single fermions, appears in \cite{Castrigiano2}. For single bosons, some results are presented in \cite{M23,C23}.} \hfill $\blacksquare$\\ 
\end{remark} 
 
\noindent We are now in a position to give a rigorous technical meaning to the right-hand side of (\ref{NEwW}), even in a slightly more general setting. \\ 
 
\begin{lemma}\label{LEMMAB} Let $A: \cB(\Sigma)\to \gB(\cH)$ be a POVM, where $\Sigma\subset \bM$ is a spacelike $3$-plane. Choose  
\begin{itemize} 
\item[(1)] $\Delta_0 \in \cB(\Sigma)$ such that $Ker(A(\Delta_0))=  \{0\}$\footnote{This is in particular true if $\Delta_0$ is bounded with $Int(\Delta_0)\neq \emptyset$, and (1),(2),(3) of Proposition \ref{TEOCASTR} hold.}; 
\item[(2)] an isometry $V: \cH \to \cH$ whose final space is $Ran(V) (= \overline{Ran(V)})=:\cH_V$. 
\end{itemize} 
For every $\Delta\in \cB(\Delta_0)$ 
there is a unique operator $B^V_{\Delta_0}(\Delta)\in \gB(\cH_V)$ such that  
\beq \label{BD1} 
\left\langle  \frac{1}{\sqrt{A(\Delta_0)}} V^\dagger\psi \left|A(\Delta)\frac{1}{\sqrt{A(\Delta_0)}}V^\dagger\phi\right.\right\rangle = \langle \psi| B^V_{\Delta_0}(\Delta) \phi\rangle \quad \mbox{if $\psi, \phi \in V(Ran (\sqrt{A(\Delta_0)}))$.} 
\eeq 
In addition, $B^V_{\Delta_0}(\Delta)$ is the unique bounded extension to $\cH_V$ of either of the equivalent compositions under the overlines in 
\beq B^V_{\Delta_0}(\Delta)=\overline{\left(\frac{1}{\sqrt{A(\Delta_0)}}V^\dagger\right)^\dagger A(\Delta)\frac{1}{\sqrt{A(\Delta_0)}}V^\dagger}=\overline{V\frac{1}{\sqrt{A(\Delta_0)}}A(\Delta)\frac{1}{\sqrt{A(\Delta_0)}}V^\dagger}\:. \label{BD1ext}\eeq 
%To specify its natural domain, set $A_0:=A(\Delta_0)$ and define 
%$$C:D(C)\to\cH,\qquad C:=\frac{1}{\sqrt{A_0}}V^\dagger,\qquad D(C):=V(Ran(\sqrt{A_0}))\subset\cH_V.$$ 
%Then $C^\dagger:D(C^\dagger)\to\cH_V$, where $D(C^\dagger)=Ran(\sqrt{A_0})$ and $C^\dagger=V\frac{1}{\sqrt{A_0}}$. The natural domain of the composition is 
%$$D(C^\dagger A(\Delta)C)=\{\phi\in D(C)\:|\:A(\Delta)C\phi\in D(C^\dagger)\}=D(C)=V(Ran(\sqrt{A_0})).$$ 
%Thus, the bar in (\ref{BD1ext}) denotes the unique bounded extension of $C^\dagger A(\Delta)C$ from this domain to $\cH_V$. 
Furthermore, $B^V_{\Delta_0}(\Delta)$ is an effect with $B^V_{\Delta_0}(\Delta_0)= VV^\dagger$, the orthogonal projector onto the final space of $V$. 
\end{lemma}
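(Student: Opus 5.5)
The key observation is that the sesquilinear form in (\ref{BD1}) is bounded, and the Riesz lemma then produces $B^V_{\Delta_0}(\Delta)$. Write $A_0:=A(\Delta_0)$; by hypothesis (1), $A_0^{-1/2}:Ran(\sqrt{A_0})\to\cH$ is a well-defined selfadjoint operator with dense domain.

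The domain $D:=V(Ran(\sqrt{A_0}))$ is dense in $\cH_V$. Indeed, $V$ is a unitary map from $\cH$ onto $\cH_V$, and $Ran(\sqrt{A_0})$ is dense in $\cH$ because $Ker(\sqrt{A_0})=Ker(A_0)=\{0\}$. For $\psi\in D$ we have $V^\dagger\psi\in Ran(\sqrt{A_0})$, since $V^\dagger V=I$, so $V^\dagger$ inverts $V$ on $\cH_V$. Set $C\psi:=A_0^{-1/2}V^\dagger\psi$.

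The central estimate is as follows. For $\Delta\in\cB(\Delta_0)$, additivity and positivity of the POVM give $0\le A(\Delta)\le A(\Delta_0)=A_0$, hence
$$0\le\langle C\psi|A(\Delta)C\psi\rangle\le\langle C\psi|A_0C\psi\rangle=\|\sqrt{A_0}C\psi\|^2=\|V^\dagger\psi\|^2=\|\psi\|^2,$$
because $\sqrt{A_0}A_0^{-1/2}$ is the identity on $Ran(\sqrt{A_0})$ and $V^\dagger$ is isometric on $\cH_V$. So the Hermitian form $q(\psi,\phi):=\langle C\psi|A(\Delta)C\phi\rangle$ on $D\times D$ is positive and bounded on the diagonal by $\|\psi\|^2$. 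By Cauchy--Schwarz for positive forms, $|q(\psi,\phi)|\le\|\psi\|\|\phi\|$. Thus $q$ extends uniquely, by density, to a bounded form on $\cH_V$, and the Riesz representation for bounded sesquilinear forms yields a unique $B^V_{\Delta_0}(\Delta)\in\gB(\cH_V)$ satisfying (\ref{BD1}). Uniqueness holds because $D$ is dense. The same inequalities pass to the limit and give $0\le B^V_{\Delta_0}(\Delta)\le I$, so $B^V_{\Delta_0}(\Delta)$ is an effect.

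For (\ref{BD1ext}), note that $C^\dagger\supset VA_0^{-1/2}$ on $Ran(\sqrt{A_0})$: for $\xi\in Ran(\sqrt{A_0})$ and $\psi\in D$ we have $\langle\xi|C\psi\rangle=\langle VA_0^{-1/2}\xi|\psi\rangle$ by selfadjointness of $A_0^{-1/2}$. On $D$, the composition $VA_0^{-1/2}A(\Delta)A_0^{-1/2}V^\dagger$ makes sense only if $A(\Delta)C\phi\in Ran(\sqrt{A_0})$. I expect this domain issue to be the main technical obstacle, since it need not hold in general.

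My plan there is to avoid needing it pointwise. The form identity shows that whenever the composition is defined it agrees with $B^V_{\Delta_0}(\Delta)$, so the bar is to be read as the unique bounded (closure) extension of the form-defined operator. If necessary, I will use the Douglas factorization: $A(\Delta)\le A_0$ gives $\sqrt{A(\Delta)}=W\sqrt{A_0}$ with $\|W\|\le1$. Then $A_0^{-1/2}A(\Delta)A_0^{-1/2}$ equals $W^\dagger W$ on $Ran(\sqrt{A_0})$, in the form sense, which makes the bounded extension explicit as $B^V_{\Delta_0}(\Delta)=VW^\dagger WV^\dagger$ on $\cH_V$.

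Finally, take $\Delta=\Delta_0$. Then $q(\psi,\phi)=\langle\sqrt{A_0}C\psi|\sqrt{A_0}C\phi\rangle=\langle V^\dagger\psi|V^\dagger\phi\rangle=\langle\psi|VV^\dagger\phi\rangle$, so by uniqueness $B^V_{\Delta_0}(\Delta_0)=VV^\dagger$. This is the orthogonal projector onto $\cH_V$, which is the identity of $\gB(\cH_V)$.
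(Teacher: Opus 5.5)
Your treatment of (\ref{BD1}) (bounded form plus Riesz representation), of the effect property, and of $B^V_{\Delta_0}(\Delta_0)=VV^\dagger$ is correct, and it rests on the same estimate as the paper. The gap is in (\ref{BD1ext}). You claim that $A(\Delta)C\phi\in Ran(\sqrt{A_0})$ ``need not hold in general'', and you propose to read the overline as the closure of a form-defined operator. That does not prove what the lemma asserts. The lemma says $B^V_{\Delta_0}(\Delta)$ is the \emph{unique} bounded extension of the compositions taken on their standard domains. Uniqueness requires those domains to be dense in $\cH_V$, and your plan leaves this open; you even suggest it may fail. Knowing that $B^V_{\Delta_0}(\Delta)$ agrees with the composition wherever the latter is defined gives neither density nor the equality of the two compositions in (\ref{BD1ext}). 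For that equality you would also need $C^\dagger=VA_0^{-1/2}$ with $D(C^\dagger)=Ran(\sqrt{A_0})$, whereas you only record the inclusion $C^\dagger\supset VA_0^{-1/2}$.

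In fact the domain condition holds for every $\phi\in D$, and you already have what is needed. Your bound $|\langle C\psi|A(\Delta)C\phi\rangle|\le\|\psi\|\,\|\phi\|$ says that, for fixed $\phi\in D$, the functional $\psi\mapsto\langle C\psi|A(\Delta)C\phi\rangle$ is bounded on $D=D(C)$. By the very definition of the adjoint, this means $A(\Delta)C\phi\in D(C^\dagger)$. Since $V^\dagger:\cH_V\to\cH$ is unitary and $A_0^{-1/2}$ is selfadjoint, $C^\dagger=VA_0^{-1/2}$ with $D(C^\dagger)=Ran(\sqrt{A_0})$ exactly. Hence both compositions are defined on all of $D$, coincide there, and equal $B^V_{\Delta_0}(\Delta)$ on $D$ by (\ref{BD1}); this is precisely the paper's argument.

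Alternatively, your Douglas factorization settles the point as an operator identity, not merely ``in the form sense''. From $A(\Delta)=\sqrt{A_0}\,W^\dagger W\sqrt{A_0}$ you get $A(\Delta)A_0^{-1/2}\chi=\sqrt{A_0}\,W^\dagger W\chi\in Ran(\sqrt{A_0})$ for every $\chi\in Ran(\sqrt{A_0})$, that is, $Ran(A(\Delta))\subset Ran(\sqrt{A_0})$. So $A_0^{-1/2}A(\Delta)A_0^{-1/2}=W^\dagger W$ as operators on $Ran(\sqrt{A_0})$, and $B^V_{\Delta_0}(\Delta)$ is the restriction of $VW^\dagger WV^\dagger$ to $\cH_V$. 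This is a slightly more explicit route than the paper's.
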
 
 
\begin{proof} Use the notation introduced in the statement and set $A_0:=A(\Delta_0)$ and $A_\Delta:=A(\Delta)$. 
Since $Ker(A_0)=\{0\}$, $Ran(\sqrt{A_0})$ is dense in $\cH$. Moreover, $A_0^{-1/2}$ is selfadjoint with $D(A_0^{-1/2})=Ran(\sqrt{A_0})$, and $V^\dagger:\cH_V\to\cH$ is unitary. Therefore, $C$ is closed and densely defined. Since $\Delta\subset\Delta_0$, we have 
$0\leq A_\Delta\leq A_0\leq I$ and 
consequently, for every $\psi\in D(C)$, 
$$0\leq \langle C\psi|A_\Delta C\psi\rangle\leq \langle C\psi|A_0C\psi\rangle=||V^\dagger\psi||^2=||\psi||^2.$$ 
The Cauchy--Schwarz inequality for the positive sesquilinear form defined by $A_\Delta$ thus gives 
$$|\langle C\psi|A_\Delta C\phi\rangle|\leq ||\psi||\:||\phi||\quad \mbox{for every $\psi,\phi\in D(C)$.}$$ 
For a fixed $\phi\in D(C)$, the map 
$$D(C)\ni\psi\mapsto\langle C\psi|A_\Delta C\phi\rangle$$ 
is therefore bounded with respect to the norm of $\cH_V$. By the definition of the adjoint operator, this means that $A_\Delta C\phi\in D(C^\dagger)$ and 
$$\langle C\psi|A_\Delta C\phi\rangle=\langle\psi|C^\dagger A_\Delta C\phi\rangle\quad \mbox{for every $\psi,\phi\in D(C)$.}$$ 
Since this holds for every $\phi\in D(C)$, $D(C^\dagger A_\Delta C)=D(C)$. The densely defined operator $C^\dagger A_\Delta C$ is thus bounded on $D(C)$, with norm not larger than $1$, and admits a unique bounded extension to $\cH_V$. Denote this extension by $B^V_{\Delta_0}(\Delta)$. This proves (\ref{BD1}), and the quadratic estimate above implies, by continuity, that $0\leq B^V_{\Delta_0}(\Delta)\leq I$. Since $V^\dagger:\cH_V\to\cH$ is unitary, we also have 
$$D(C^\dagger)=Ran(\sqrt{A_0}),\qquad C^\dagger=V\frac{1}{\sqrt{A_0}},$$ 
where the equality of operators includes equality of their domains. Therefore, on $D(C)$, 
$$C^\dagger A_\Delta C=\left(\frac{1}{\sqrt{A_0}}V^\dagger\right)^\dagger A_\Delta\frac{1}{\sqrt{A_0}}V^\dagger=V\frac{1}{\sqrt{A_0}}A_\Delta\frac{1}{\sqrt{A_0}}V^\dagger.$$ 
Taking the unique bounded extensions to $\cH_V$ proves both equalities in (\ref{BD1ext}). Finally, if $\Delta=\Delta_0$, then 
$$\langle C\psi|A_0C\phi\rangle=\langle V^\dagger\psi|V^\dagger\phi\rangle=\langle\psi|\phi\rangle\quad \mbox{for every $\psi,\phi\in D(C)$},$$ 
and therefore 
 \beq 0\leq B^V_{\Delta_0}(\Delta) \leq I\quad \mbox{and} \quad B^V_{\Delta_0}(\Delta_0)=VV^\dagger. \label{BD2}\eeq 
\end{proof}

\begin{remark} {\em The proof shows, without assuming a strictly positive lower bound for $A(\Delta_0)$, that $B^V_{\Delta_0}(\Delta)$ is the unique bounded extension of 
$$V\frac{1}{\sqrt{A(\Delta_0)}}A(\Delta)\frac{1}{\sqrt{A(\Delta_0)}}V^\dagger$$ 
from $V(Ran(\sqrt{A(\Delta_0)}))$ to $\cH_V$. If, in addition, $A(\Delta_0)\geq cI$ for some $c>0$, then $Ran(A(\Delta_0))=\cH$ and $\frac{1}{\sqrt{A(\Delta_0)}}\in\gB(\cH)$. In this situation the operator above is already bounded and everywhere defined on $\cH_V$, so that no extension is necessary and 
\beq B^V_{\Delta_0}(\Delta)=V  \frac{1}{\sqrt{A(\Delta_0)}}A(\Delta)\frac{1}{\sqrt{A(\Delta_0)}}V^\dagger \label{BD222} 
\eeq 
holds directly.} \hfill $\blacksquare$\\ 
\end{remark} 
 
\noindent The effects $B^V_{\Delta_0}(\Delta)$ are our candidates for defining a notion of conditional spatial localization in the rest space $\Delta_0$ of a laboratory.  
To avoid issues with normalization of the POVM we want to define, it is convenient to assume that $VV^\dagger=I$, which is equivalent to requiring that $V$ be unitary. 
The final result can now be stated and proved in the general case.\\

\begin{proposition}\label{PROPB} 
Let $A$ be a relativistic spatial localization observable and $V: \cH \to \cH$ a unitary operator.  
Assume that $\Delta_0\in \cB(\Sigma)$, with $\Sigma \subset \bM$ a spacelike $3$-plane, satisfies $0<|\Delta_0|<+\infty$ and $Ker(A^v(\Delta_0))=\{0\}$ for every $v\in \sS$. 
Then 
\begin{itemize} 
\item[(a)] $\cB(\Delta_0) \ni \Delta \mapsto B^{v,V}_{\Delta_0}(\Delta) \in \gB(\cH)$, with $B^{v,V}_{\Delta_0}(\Delta)$ defined in (\ref{BD1}) for $A^v$ in place of $A$, is also a normalized POVM and is absolutely continuous with respect to the Lebesgue measure on $\Sigma$; 
\item[(b)] If the unitary representation of the universal covering of the proper orthochronous Poincar\'e group used in the definition of $A$ is $U: \widetilde{\cal P}_+ \to \gB(\cH)$, the covariance relation holds 
$$U_g B^{v,V}_{\Delta_0}(\Delta) U_g^{-1} = B^{gv, U_gVU_g^{-1}}_{g\Delta_0}(g\Delta)\quad \mbox{for every $g\in  \widetilde{\cal P}_+$, $\Delta \in \cB(\Delta_0)$.} $$ 
\end{itemize} 
\end{proposition}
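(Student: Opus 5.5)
Part (a) rests on Lemma \ref{LEMMAB} and a weak-limit argument. Part (b) rests on uniqueness of the bounded extension together with covariance of $A$.

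\textbf{Part (a).} Fix $v$ and write $A_0:=A^v(\Delta_0)$ and $C:=A_0^{-1/2}V^\dagger$, with $D(C)=V(Ran(\sqrt{A_0}))$. Since $V$ is unitary, $\cH_V=\cH$ and $D(C)$ is dense. Lemma \ref{LEMMAB} gives effects $B(\Delta):=B^{v,V}_{\Delta_0}(\Delta)$ satisfying $0\le B(\Delta)\le I$ and $B(\Delta_0)=VV^\dagger=I$. This is the normalization.

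For finite additivity, take disjoint $\Delta_1,\Delta_2\in\cB(\Delta_0)$. For $\psi,\phi\in D(C)$ we have $\langle C\psi|A^v(\Delta_1\cup\Delta_2)C\phi\rangle=\sum_i\langle C\psi|A^v(\Delta_i)C\phi\rangle$. By (\ref{BD1}) the bounded operators $B(\Delta_1\cup\Delta_2)$ and $B(\Delta_1)+B(\Delta_2)$ then agree as sesquilinear forms on a dense set, so they are equal.

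For $\sigma$-additivity, it suffices to prove weak $\sigma$-additivity, since positivity upgrades it to strong. Take a disjoint sequence $\Delta_n$ with union $\Delta$ and set $\psi\in D(C)$. Then $\langle\psi|B(\Delta)\psi\rangle=\langle C\psi|A^v(\Delta)C\psi\rangle=\sum_n\langle C\psi|A^v(\Delta_n)C\psi\rangle=\sum_n\langle\psi|B(\Delta_n)\psi\rangle$, by $\sigma$-additivity of $A^v$ evaluated at the vector $C\psi$. To pass to a general $\psi\in\cH$, I would use that the partial sums $S_N:=\sum_{n\le N}B(\Delta_n)$ are increasing and bounded by $B(\Delta)\le I$. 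By Vigna's theorem (the monotone convergence theorem for bounded operators) they therefore converge strongly to some $S\le I$. Since $\langle\psi|S\psi\rangle=\langle\psi|B(\Delta)\psi\rangle$ on the dense set $D(C)$ and both operators are bounded, polarization gives $S=B(\Delta)$.

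For absolute continuity, note that $|\Delta|=0$ gives $A^v(\Delta)=0$ by the Remark (1) after Definition \ref{REMM0}. Then (\ref{BD1}) makes $B(\Delta)$ vanish on a dense set, so $B(\Delta)=0$. This in turn makes each $\langle\psi|B(\cdot)\psi\rangle$ absolutely continuous.

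\textbf{Part (b).} Here I use $U_gA^v(\Delta)U_g^{-1}=A^{gv}(g\Delta)$ and write $V':=U_gVU_g^{-1}$. By functional calculus, $U_gA_0^{-1/2}U_g^{-1}=A^{gv}(g\Delta_0)^{-1/2}$, with $U_g$ mapping $Ran(\sqrt{A_0})$ onto $Ran(\sqrt{A^{gv}(g\Delta_0)})$. Also $\operatorname{Ker}A^{gv}(g\Delta_0)=U_g\operatorname{Ker}A_0=\{0\}$, so Lemma \ref{LEMMAB} applies at $g\Delta_0$. Hence $U_gD(C)=D(C')$ for the operator $C':=A^{gv}(g\Delta_0)^{-1/2}V'^\dagger$, and $C'U_g=U_gC$ on $D(C)$.

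For $\psi,\phi\in D(C)$ this yields $\langle U_g\psi|B^{gv,V'}_{g\Delta_0}(g\Delta)U_g\phi\rangle=\langle C\psi|A^v(\Delta)C\phi\rangle=\langle\psi|B(\Delta)\phi\rangle$. The two bounded operators are then equal by density. (With the covering group, $g\Delta$ means $\pi(g)\Delta$.)

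\textbf{Main obstacle.} The delicate step is $\sigma$-additivity. The domain $D(C)$ is not invariant, so weak convergence proven only on that dense set must be extended to all of $\cH$. The monotone-bounded argument above handles this, and a uniform bound $\|S_N\|\le1$ plus density would serve equally well. Everything else is careful bookkeeping of domains under unbounded $A_0^{-1/2}$, which Lemma \ref{LEMMAB} already settles.
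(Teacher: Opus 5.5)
Your proof is correct. Part (b) is essentially the paper's argument: covariance (d), the functional-calculus identity $U_g f(A^v(\Delta_0))U_g^{-1}=f(U_gA^v(\Delta_0)U_g^{-1})$, and uniqueness of the bounded extension defined by (\ref{BD1}). You simply make the domain bookkeeping $U_gD(C)=D(C')$, $C'U_g=U_gC$ explicit.

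In part (a), however, you take a different route at the one delicate step: transferring $\sigma$-additivity from the dense subspace $V(Ran(\sqrt{A^v(\Delta_0)}))$ to all of $\cH$.

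The paper approximates the \emph{vector}. It takes $\psi_n\to\psi$ in the dense subspace and notes that the measures $\mu_{\psi_n}(\cdot)=\langle\psi_n|B(\cdot)\psi_n\rangle$ converge setwise to $\mu_\psi$; these measures are $\sigma$-additive and absolutely continuous by construction. It then applies the Vitali--Hahn--Saks theorem, which gives $\sigma$-additivity and absolute continuity of $\mu_\psi$ at once. That theorem requires a finite reference measure, and this is precisely where $|\Delta_0|<+\infty$ is used.

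You instead fix a disjoint sequence of \emph{sets} and work at the operator level. You get finite additivity by density, then monotone convergence of $S_N\le B(\Delta)$, and you identify the limit because the quadratic forms agree on a dense subspace. Absolute continuity comes separately, from the operator identity $B(\Delta)=0$ for $|\Delta|=0$ via Remark (1).

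Your argument is more elementary and yields strong $\sigma$-additivity directly. It also never uses $|\Delta_0|<+\infty$, while $|\Delta_0|>0$ already follows from $Ker(A^v(\Delta_0))=\{0\}$ and Remark (1). So it shows that the finiteness hypothesis is not needed for (a). The paper's version is shorter, at the price of invoking a heavier measure-theoretic theorem.
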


\begin{proof} (a) We omit the tensorial index $v$ for notational simplicity. In view of (\ref{BD2}), the claim holds if, for every $\psi\in \cH$, the map  
$$\cB(\Delta_0) \ni \Delta \mapsto \langle \psi | B^V_{\Delta_0}(\Delta)\psi \rangle =: \mu_\psi(\Delta)\in [0,+\infty)$$ 
is a finite positive Borel measure absolutely continuous with respect to the Lebesgue measure. It remains to prove that the map  
is $\sigma$-additive for every given $\psi\in \cH$ and satisfies the stated absolute continuity property. This is true if $\psi$ belongs to the dense subspace $V(Ran(\sqrt{A(\Delta_0)}))$ of $\cH$, by construction and in view of (\ref{BD1}). In particular, every positive finite measure $\mu_\psi(\Delta):=\langle \psi | B^V_{\Delta_0}(\Delta)\psi \rangle$ is absolutely continuous with respect to the Lebesgue measure for $\psi \in V(Ran(\sqrt{A(\Delta_0)}))$, and if $\psi \in \cH_V$, there is a sequence $V(Ran(\sqrt{A(\Delta_0)}))\ni \psi_n \to \psi$, so that $\mu_{\psi_n}(\Delta)\to \mu_\psi(\Delta)$ for every $\Delta \in \cB(\Delta_0)$ by (\ref{BD1}). 
Since the Lebesgue measure of $\Delta_0$ is finite, the {\em Vitali-Hahn-Saks} theorem \cite{Yosida} implies that $\mu_\psi$ is $\sigma$-additive -- so that it is a finite positive Borel measure on $\Delta_0$ -- and is also absolutely continuous with respect to the Lebesgue measure since each $\mu_{\psi_n}$ has that property.\\ 
(b) This follows directly from (d) of Definition \ref{REMM0}, equation (\ref{BD1}), and the spectral functional calculus for the selfadjoint operator  
$A^v(\Delta_0)$, using the fact that (d) of Definition \ref{REMM0} implies: $U f(A^v(\Delta)) U^{-1} = f(UA^v(\Delta)U^{-1})$ for every Borel measurable $f: \bR \to \bC$ and every unitary $U\in \gB(\cH)$ \cite{Moretti2}. 
\end{proof}

\subsection{$B_{\Delta_0}(\Delta)$ as conditional POVM: the role of the {\em gentle measurement lemma}}\label{secfc11} 
Suppose that the initial state is $\rho$ and that we are measuring a scalar relativistic spatial localization observable $A$. For simplicity, we assume that $A$ is scalar, since the argument below immediately generalizes.   
As said previously, we are interested in the probability of finding the system -- prepared in the state $\rho$ -- in the Borel subsets $\Delta \subset \Delta_0$ {\em when we know that the system  
is found in $\Delta_0$}. Here $\Delta_0\subset \Sigma$ has finite Lebesgue measure and, for this purpose, we may assume that $\Delta_0$ is bounded. We also explicitly suppose that $Ker(A(\Delta_0))=\{0\}$, in order to define the new POVM $B^V_{\Delta_0}: \cB(\Delta_0) \to \gB(\cH)$ upon a choice of the unitary $V$. 
 
To prove that $B^V_{\Delta_0}$ is a good candidate for representing this type of conditional probability, at least for some choice of $V$, we adopt a heuristic approach. If we know that the system is found in $\Delta_0$, assuming that $A$ was measured with a measurement procedure described by Kraus operators 
as in Remark \ref{EMREM}, we expect the state to have the form 
\beq {\rho}^V_{\Delta_0} :=\frac{V\sqrt{A(\Delta_0)}{\rho} \sqrt{A(\Delta_0)}V^\dagger}{tr(A(\Delta_0) {\rho})}\in\sS(\cH)\:,\label{rhoV}\eeq 
for some state ${{\rho}}\in \sS(\cH)$ and some partial isometry $V$, which we explicitly suppose to be unitary and, for the moment, arbitrary. Notice that $tr(A(\Delta_0) {\rho})\neq 0$ as a consequence of our hypothesis $Ker(A(\Delta_0))=\{0\}$ and ${\rho} \in \sS(\cH)$. 
At this point, a straightforward computation based on the cyclic property of the trace and (\ref{rhoV}) yields, for every $\rho \in \sS(\cH)$, 
\beq\label{EqP} 
\frac{tr(\rho A(\Delta))}{tr(\rho A(\Delta_0))} =  tr\left(\rho^V_{\Delta_0} B^V_{\Delta_0}(\Delta) \right) \:. \eeq 
We have therefore found that the probability of localizing the system in $\Delta\subset \Delta_0$, given that we know it was found in the laboratory rest space $\Delta_0\subset \Sigma$, is precisely described by the POVM $B^V_{\Delta_0}: \cB(\Delta_0) \to \gB(\cH)$.  
This heuristic reasoning, however, has two related shortcomings: we assumed that the initially prepared state was $\rho$ and not $\rho^V_{\Delta_0}$, which in the discussion above plays only an intermediate role. Furthermore, the above heuristic interpretation refers to {\em two measurements}: one that transforms ${\rho}$ into ${\rho}^V_{\Delta_0}$ and a subsequent one  
that detects the system in some $\Delta\subset \Delta_0$. Instead, we want to study a situation in which {\em only one measurement takes place on $\rho$: we only consider the events occurring in $\Delta_0$, disregarding those outside $\Delta_0$, and, without performing further measurements, we focus on the subset of such events occurring in the sets $\Delta\subset \Delta_0$}. 
To address these issues, taking advantage of a suitable version of the so-called {\em gentle measurement lemma}, we are about to show that we can in fact replace ${\rho}^V_{\Delta_0}$ 
with $\rho$ in (\ref{EqP}), provided the probability of finding the system in $\Delta_0$ for the initial state $\rho$ is close to $1$. This requires $V=I$. (We shall return to the case $V\neq I$ at the end of the section.)
We use an improved version of the result originally due to A. Winter \cite{W99}. The term ``gentle measurement lemma'' is more recent and now also refers to an even tighter estimate in the finite-dimensional case. Winter proved the crucial inequality used in the following theorem in the finite-dimensional case but, as is well known and as is clear from the proof itself \cite{W99}, finite dimensionality plays no essential role. \\ 
 
\begin{theorem}[Gentle measurement lemma]\label{GMT} 
Let $T \in \gB(\cH)$ be a positive operator on the Hilbert space $\cH$, and consider a state $\rho \in \sS(\cH)$ such that $tr(\rho T)\geq ||T||(1-\delta)>0$. The following bound holds: 
\beq 
\left|\left|\rho -\frac{\sqrt{T} \rho \sqrt{T}}{tr(T\rho)}\right|\right|_1 \leq 2 \sqrt{\delta} + \delta\:. 
\eeq 
\end{theorem}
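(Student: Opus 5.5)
The plan is to reduce to the normalized case and split the error into a ``normalization'' part and a ``disturbance'' part. Both $\sqrt{T}\rho\sqrt{T}$ and $tr(T\rho)$ scale linearly under $T\mapsto T/||T||$, so the operator $\sqrt{T}\rho\sqrt{T}/tr(T\rho)$ is invariant under this rescaling. The hypothesis $tr(\rho T)\geq ||T||(1-\delta)$ becomes $p:=tr(\rho T)\geq 1-\delta$. Hence we may assume $0\leq T\leq I$ and $||T||=1$.

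By the triangle inequality,
$$\left|\left|\rho-\frac{\sqrt{T}\rho\sqrt{T}}{p}\right|\right|_1\leq ||\rho-\sqrt{T}\rho\sqrt{T}||_1+\left(\frac1p-1\right)||\sqrt{T}\rho\sqrt{T}||_1 .$$
Since $\sqrt{T}\rho\sqrt{T}\geq 0$, its trace norm equals its trace $p$. The second term is therefore $1-p\leq\delta$. It remains to show $||\rho-\sqrt{T}\rho\sqrt{T}||_1\leq 2\sqrt{\delta}$.

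For the first term I would reduce to pure states. Write $\rho=\sum_k\lambda_k|\psi_k\rangle\langle\psi_k|$, with the series converging in trace norm, which holds also in infinite dimension. Set $\phi_k:=\sqrt{T}\psi_k$ and $b_k:=||\phi_k||^2=\langle\psi_k|T\psi_k\rangle$. By the triangle inequality,
$$||\rho-\sqrt{T}\rho\sqrt{T}||_1\leq\sum_k\lambda_k\,\big|\big|\,|\psi_k\rangle\langle\psi_k|-|\phi_k\rangle\langle\phi_k|\,\big|\big|_1 .$$

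Each summand is a selfadjoint operator of rank at most two, supported on ${\rm span}\{\psi_k,\phi_k\}$. The step I expect to need the most care is computing its trace norm. Restricted to that span, the operator has trace $1-b_k$ and determinant $-(b_k-|c_k|^2)$, where $c_k=\langle\psi_k|\phi_k\rangle$; the determinant is $\leq 0$ by Cauchy--Schwarz, so the two eigenvalues have opposite signs. The sum of their absolute values is therefore
$$\sqrt{({\rm tr})^2-4\det}=\sqrt{(1+b_k)^2-4|c_k|^2}.$$

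Next, $0\leq T\leq I$ gives $\sqrt{T}\geq T$ by functional calculus. Hence $c_k=\langle\psi_k|\sqrt{T}\psi_k\rangle\geq b_k$, and the trace norm is at most
$$\sqrt{(1+b_k)^2-4b_k^2}=\sqrt{(1-b_k)(1+3b_k)}\leq 2\sqrt{1-b_k}.$$

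Finally, concavity of the square root (Jensen) gives
$$\sum_k\lambda_k\,2\sqrt{1-b_k}\leq 2\sqrt{1-\sum_k\lambda_kb_k}=2\sqrt{1-p}\leq 2\sqrt{\delta}.$$
Combined with the $\delta$ from the normalization term, this yields the bound $2\sqrt{\delta}+\delta$. Only the spectral decomposition of the trace-class $\rho$ and bounded functional calculus are used, so the argument carries over from finite dimension unchanged, as anticipated in the remark preceding the statement.
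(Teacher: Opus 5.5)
Your proof is correct. It shares the outer structure of the paper's proof: rescale to $A=T/\|T\|$, then split off the normalization error $\|S-S/p\|_1=1-p$, with $S=\sqrt{A}\rho\sqrt{A}$. You establish the core estimate $\|\rho-S\|_1\le 2\sqrt{1-p}$ by a different route, however.

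The paper never diagonalizes $\rho$. It writes $\rho-S=(I-\sqrt{A})\rho+\sqrt{A}\rho(I-\sqrt{A})$, factors each term through $\sqrt{\rho}$, and applies H\"older's inequality $\|XY\|_1\le\|X\|_2\|Y\|_2$. Using $(I-\sqrt{A})^2\le I-A$, it obtains $(1+\sqrt{p})\sqrt{1-p}\le 2\sqrt{1-p}$ directly for the mixed state.

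You instead reduce to pure states by convexity of the trace norm. You then compute the trace norm of the rank-two operator $|\psi\rangle\langle\psi|-|\phi\rangle\langle\phi|$ exactly from its trace and determinant, and reassemble with Jensen's inequality.

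Both arguments hinge on the same scalar fact, $\sqrt{t}\ge t$ on $[0,1]$, since $(I-\sqrt{A})^2\le I-A$ is equivalent to $\sqrt{A}\ge A$. The paper's version is shorter and basis-free, needing only Schatten-class H\"older. Yours is more elementary: a $2\times 2$ eigenvalue computation plus concavity.

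Yours is also slightly sharper at the intermediate stage. For a pure state you get $\sqrt{(1-b)(1+3b)}$ rather than the paper's $(1+\sqrt{b})\sqrt{1-b}$. Since $x\mapsto\sqrt{(1-x)(1+3x)}$ is concave on $[0,1]$, you could keep $\|\rho-S\|_1\le\sqrt{(1-p)(1+3p)}$ for mixed states as well. Near $p=1$, however, both bounds behave like $2\sqrt{1-p}$, so the stated bound $2\sqrt{\delta}+\delta$ is unaffected.
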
 
 
\begin{proof}  See the Appendix.
\end{proof} 
 
\noindent It is easy to see that the statement is generally false if one replaces $\sqrt{T}$ with $V\sqrt{A}$ for an effect $A$ and a unitary map $V$ which does not commute with $\rho$.  

As a consequence of the above result, we assume for now that $V=I$ and, in applying the theorem, we henceforth use the following notation for this special case:
\beq 
B^v_{\Delta_0}(\Delta) := B^{v,I}_{\Delta_0}(\Delta) \quad \mbox{and, for the scalar case} \quad B_{\Delta_0}(\Delta) := B^{I}_{\Delta_0}(\Delta)\:. 
\eeq 
 
\begin{proposition}\label{TAPP}  Consider a quantum system described in the Hilbert space $\cH$ and a relativistic spatial localization observable with effects $A^v: {\cal R} \to \gB(\cH)$. Let $\Delta_0 \in \cB(\Sigma)$, where $\Sigma\subset\bM$ is a spacelike $3$-plane, be such that  
$0<|\Delta_0|<+\infty$ and $Ker(A^v(\Delta_0))=\{0\}$\footnote{This is in particular true if $\Delta_0$ is bounded with $Int(\Delta_0)\neq \emptyset$ and (1),(2),(3) of Proposition \ref{TEOCASTR} hold.}. \\ If $\rho \in \sS(\cH)$ is such that the probability of finding the system in $\Delta_0$ satisfies $$tr(\rho A^v(\Delta_0)) \geq  1-\delta >0\:,$$ then \beq\left|tr(\rho B^v_{\Delta_0}(\Delta))  -  \frac{tr(\rho A^v(\Delta))}{tr(\rho A^v(\Delta_0))}\right| \leq 2 \sqrt{\delta} + \delta \quad \mbox{for every measurable $\Delta \subset \Delta_0$.} \label{3ee2}\eeq 
More generally,  
\beq|tr({\rho}^I_{\Delta_0} B)-  tr(\rho B)| \leq  (2 \sqrt{\delta} + \delta)  ||B|| \quad \mbox{for every $B\in \gB(\cH)$.} \label{3ee}\eeq 
\end{proposition}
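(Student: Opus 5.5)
The plan is to apply Theorem \ref{GMT} with $T:=A^v(\Delta_0)$. This will give (\ref{3ee}) at once, and (\ref{3ee2}) will then follow by taking $B:=B^v_{\Delta_0}(\Delta)$ and using identity (\ref{EqP}) with $V=I$.

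\emph{Step 1: checking the hypotheses of Theorem \ref{GMT}.} Since $0\le A^v(\Delta_0)\le I$, we have $\|T\|\le 1$. On the other hand, $\|T\|\ge tr(\rho T)\ge 1-\delta$. Hence $tr(\rho T)\ge 1-\delta\ge \|T\|(1-\delta)$, and this quantity is strictly positive. Theorem \ref{GMT} therefore yields
\[
\left\|\rho-\rho^I_{\Delta_0}\right\|_1\le 2\sqrt\delta+\delta, \qquad \rho^I_{\Delta_0}=\frac{\sqrt{A^v(\Delta_0)}\,\rho\,\sqrt{A^v(\Delta_0)}}{tr(\rho A^v(\Delta_0))},
\]
where $\rho^I_{\Delta_0}$ is (\ref{rhoV}) with $V=I$.

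\emph{Step 2: proof of (\ref{3ee}).} For every $B\in\gB(\cH)$ we use the standard trace-class estimate $|tr(XB)|\le\|X\|_1\|B\|$, applied to $X=\rho^I_{\Delta_0}-\rho$. This gives (\ref{3ee}) directly.

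\emph{Step 3: proof of (\ref{3ee2}).} Take $B=B^v_{\Delta_0}(\Delta)$. By Lemma \ref{LEMMAB} it is an effect, so $\|B\|\le 1$. Then (\ref{3ee}) gives $|tr(\rho^I_{\Delta_0}B)-tr(\rho B)|\le 2\sqrt\delta+\delta$. It remains to identify $tr(\rho^I_{\Delta_0}B^v_{\Delta_0}(\Delta))$ with $tr(\rho A^v(\Delta))/tr(\rho A^v(\Delta_0))$, which is (\ref{EqP}) for $V=I$.

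This identification is the only point that needs care, and I expect it to be the main obstacle. The difficulty is that $A^v(\Delta_0)^{-1/2}$ may be unbounded, so the cyclicity argument behind (\ref{EqP}) cannot be applied naively. The plan is to decompose $\rho=\sum_k p_k|\phi_k\rangle\langle\phi_k|$ spectrally. Set $\psi_k:=\sqrt{A^v(\Delta_0)}\,\phi_k$, which lies in $Ran(\sqrt{A^v(\Delta_0)})$. By (\ref{BD1}) with $V=I$ we get $\langle\psi_k|B\psi_k\rangle=\langle\phi_k|A^v(\Delta)\phi_k\rangle$, because $A^v(\Delta_0)^{-1/2}\psi_k=\phi_k$ by injectivity. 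The trace $tr(\rho^I_{\Delta_0}B)$ equals $\sum_k p_k\langle\psi_k|B\psi_k\rangle/tr(\rho A^v(\Delta_0))$. The sum converges absolutely since $B$ is bounded and $\sum_k p_k\|\psi_k\|^2\le 1$. Summing over $k$ then gives exactly $tr(\rho A^v(\Delta))/tr(\rho A^v(\Delta_0))$, which completes the proof.
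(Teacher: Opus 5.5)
Your proof is correct and follows the paper's argument: the gentle measurement lemma with $T=A^v(\Delta_0)$ plus $|tr(XB)|\le\|X\|_1\|B\|$ gives (\ref{3ee}), and specializing to $B=B^v_{\Delta_0}(\Delta)$, with $\|B\|\le 1$ and (\ref{EqP}) for $V=I$, gives (\ref{3ee2}). Your direct check that $tr(\rho T)\ge\|T\|(1-\delta)$ using $\|T\|\le 1$ is cleaner than the paper's normalization $\|T\|^{-1}T=A(\Delta_0)$, and your spectral-decomposition derivation of (\ref{EqP}) via (\ref{BD1}) rigorously handles the possibly unbounded $A^v(\Delta_0)^{-1/2}$, which the paper settles only by appeal to cyclicity of the trace.
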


\begin{proof} We shall write $A,B$ respectively in place of $A^v, B^v$ for brevity. 
Let us start with the latter inequality. We have  
$|tr({\rho}^I_{\Delta_0} B)-  tr(\rho  B)| = |tr(({\rho}^I_{\Delta_0} -\rho) B)| \leq || \rho^I_{\Delta_0}- \rho||_1  ||B||$. The gentle measurement lemma, applied with $||T||^{-1} T = A(\Delta_0)$, yields the second inequality in the statement. 
The first inequality then follows immediately from the second by taking $B= B_{\Delta_0}(\Delta)$, using identity (\ref{EqP}), and observing that $||B_{\Delta_0}(\Delta)|| \leq 1$. 
\end{proof} 
 
\noindent We conclude that, for states whose probability of finding the system in $\Delta_0$ is close to $1$, $B_{\Delta_0}(\Delta)$ measures the {\em  probability of finding the system in $\Delta$ if we know that it was found in $\Delta_0$.} Statistically speaking, 
\beq 
  tr\left({\rho} B_{\Delta_0}(\Delta) \right) \simeq  \frac{tr(\rho A(\Delta))}{tr(\rho A(\Delta_0))}\qquad \mbox{where $\Delta \subset \Delta_0 \quad \mbox{and} \quad\Delta,\Delta_0 \in {\cal R}$,}\label{fractio} 
\eeq 
is the ratio of the number of detections in $\Delta$ to the total number of detections {\em observed in the laboratory} $\Delta_0$ when an ensemble of identical systems is prepared in the initial state ${\rho}$. 
 
We finally consider  the general case of $B^{v,V}_{\Delta_0}(\Delta)$ with $V\neq I$. Suppose that $V:\cH \to \cH$ is unitary and, informally speaking,  
\beq B^{v,V}_{\Delta_0}(\Delta) = V \frac{1}{\sqrt{A^v(\Delta_0)}} A^v(\Delta)  \frac{1}{\sqrt{A^v(\Delta_0)}} V^\dagger\label{QQWQW}\eeq 
whose rigorous interpretation is (\ref{BD1}) if we only know that $Ker(A^v(\Delta_0))=\{0\}$ without strictly positive lower bounds for $A^v(\Delta_0)$. The spectral functional calculus permits us to rewrite the identity above equivalently as 
$$B^{v,V}_{\Delta_0}(\Delta) =  \frac{1}{\sqrt{V A^v(\Delta_0)V^\dagger}} V A^v(\Delta)V^\dagger  \frac{1}{\sqrt{VA^v(\Delta_0)V^\dagger}} = B'^{v}_{\Delta_0}(\Delta)$$ 
where, clearly, 
$$B'^{v}_{\Delta_0}(\Delta) = \frac{1}{\sqrt{A'^v(\Delta_0)}} A'^v(\Delta)  \frac{1}{\sqrt{A'^v(\Delta_0)}}\quad\mbox{with} \quad  A'^v(\Delta_0) := V A^v(\Delta_0) V^\dagger\:.$$ 
Now $(\cH, {\cal R}, A', U')$ is a (causal) relativistic spatial localization observable if $(\cH, {\cal R}, A, U)$ is one -- where $U'_g := VU_gV^\dagger$. 
Furthermore, the selfadjoint generators of time translations are bounded from below for the relativistic spatial localization observable $A'$ if and only if those for $A$ are bounded from below.
Hence the apparently more general  
POVM $B^{v,V}_{\Delta_0}(\Delta)$ is still a conditional POVM and simply reduces to a redefinition of the initially used relativistic spatial localization observable. In spite of this straightforward mathematical result, we stress that this redefinition may have physical relevance.

\subsection{Counterfactual interpretation of $B_{\Delta_0}(\Delta)$ and generalization}\label{CFS} 
In establishing the fundamental estimate (\ref{fractio}), we assumed that measurements are performed with a relativistic localization observable $A$ {\em whose detectors are also distributed outside $\Delta_0$}, on the whole rest space $\Sigma\supset \Delta_0$, and not only in $\Delta_0$.  
However, it seems physically plausible that  
$tr(\rho B_{\Delta_0}(\Delta))$ coincides with the said conditional probabilities, 
also in experiments where the detectors of $A$ {\em are switched off outside} $\Delta_0$ or are not even placed on $\Sigma\setminus \Delta_0$. 
This assumption is similar to other {\em counterfactual} interpretations used also in elementary physics\footnote{Think of the operational definition of the electric field $\vec{E}$ in terms of its action on a test charge which is not present.}. We stress that we are not saying that $A$ does not exist; we are merely saying that the corresponding detectors are placed only in $\Delta_0$, so that detection events are registered there only. 
In support of this counterfactual interpretation of $B_{\Delta_0}$, we stress that the fraction on the right-hand side of (\ref{fractio}) can be experimentally computed by referring only to the frequencies of events detected in $\Delta_0$ (even though, if $\Sigma$ is filled with detectors, there are also events {\em outside} $\Delta_0$).  
Furthermore, if two relativistic spatial localization observables $A$ and $A'$ coincide on $\Delta_0$, they give rise to the same fraction (\ref{fractio}) and the same trace $tr(B_{\Delta_0}(\Delta) \rho_{\Delta_0}^I)$ for any given state $\rho$.  
There is another point in support of this interpretation. As we proved in \cite{DM24}, localization observables can be extended to generic non-flat Cauchy surfaces $S$ (and in \cite{CDRM} this result has been further generalized to maximal achronal sets). However, if $\Delta_0$ is common to a pair of Cauchy surfaces $S,S'\supset \Delta_0$, the corresponding effects coincide: $A_S(\Delta_0) = A_{S'}(\Delta_0)$, and thus so do the corresponding fractions (\ref{fractio}). Furthermore, from the definitions of $B_{\Delta_0}(\Delta)$, this coherence property makes $tr(B_{\Delta_0}(\Delta) \rho)$ independent of $S$ and $S'$ when, obviously, also $\Delta \subset \Delta_0 \subset S\cap S'$.  
In summary, the above counterfactual interpretation can be rephrased by stating that  
the POVM $B_{\Delta_0}$ for $\Delta \subset \Delta_0$ measures the probability of detecting the system in $\Delta$ if we know that the system is found in the laboratory based on $\Delta_0$\footnote{This is at least reasonably valid for states with probability of finding the particle in $\Delta_0$ close to $1$, but this interpretation could be valid in general and could be experimentally tested.}. This is true independently of (a) the existence of other detectors outside the laboratory and (b) the choice of other instruments outside $\Delta_0$ (in its causal completion) that may give rise to a complete relativistic spatial localization throughout spacetime and agree with the instruments in $\Delta_0$.

If one accepts the viewpoint outlined above, we can even generalize the idea illustrated above for the POVM $B_{\Delta_0}$. It is not necessary to start from a complete relativistic spatial localization observable $(\cH, {\cal R}, A, U)$ in order to construct it. In particular, the normalization $A^v(\Sigma)=I$ is not required. It is sufficient to have an $IO(1,3)_+$-covariant family 
of positive operators (non-normalized POVMs) ${\cal R} \ni \Delta \mapsto T^v(\Delta)\in \gB(\cH)$ that is $\sigma$-additive on every $\Sigma$ and has the property that $Ker (T^v(\Delta_0)) =\{0\}$ if $\Delta_0$ is bounded with non-empty interior, typically the closure of a bounded open set included in a rest space $\Sigma$ that can be used as the space of a laboratory.  
It is not even necessary that these operators be available throughout spacetime. It is sufficient that ${\cal R}$ include only the Borel sets on planes $\Sigma$ that lie in a spacetime neighborhood of a given $\Delta_0$, which can be interpreted as the rest space of a laboratory. 
In this case, a normalized (and covariant) POVM localized in every such $\Delta_0$ can be defined by 
\beq \Delta_0 \supset \Delta \mapsto B^v_{\Delta_0}(\Delta):=\frac{1}{\sqrt{T^v(\Delta_0)}} T^v(\Delta) \frac{1}{\sqrt{T^v(\Delta_0)}}\:,\label{GENgen}\eeq 
whose rigorous meaning is the one given in (\ref{BD1}) if $0\in \sigma(T^v(\Delta_0))$. 
We stress in particular that (\ref{fractio}) is still valid if one refers to the {\em effects} $A^v(\Delta)= ||T^v(\Delta_0)||^{-1}T^v(\Delta)$ by direct application of the gentle measurement lemma. Here, the physical condition required to interpret $tr(\rho B^v_{\Delta_0}(\Delta))$ as a conditional localization probability is that $\frac{tr(\rho T^v(\Delta_0))}{||T^v(\Delta_0)||} \geq  1-\delta$ with $0\leq \delta <\sp<1$.

 The appearance of unitary operators $V$ 
\beq \Delta_0 \supset \Delta \mapsto B^v_{\Delta_0}(\Delta):=V\frac{1}{\sqrt{T^v(\Delta_0)}} T^v(\Delta) \frac{1}{\sqrt{T^v(\Delta_0)}}V^\dagger\:,\label{GENgen2}\eeq 
does not change the structure, since (\ref{GENgen2}) can be rewritten as in (\ref{GENgen}) simply by referring to the non-normalized POVM 
${\cal R} \ni \Delta \mapsto T'^v(\Delta) := VT^v(\Delta)V^\dagger \in \gB(\cH)$.
 
\subsection{Possible commutativity of local conditional effects} 
We restrict attention to the scalar case, without loss of generality. 
Consider a pair of laboratories determined by corresponding non-empty Borel sets $\Delta_0\subset \Sigma$ 
and $\Delta_0' \subset \Sigma'$, which are assumed to be open and bounded. We also suppose we have two  
conditional POVMs as above, $\cB(\Delta_0)\ni \Delta \mapsto  B_{\Delta_0}(\Delta) \in \gB(\cH)$ and $\cB(\Delta'_0)\ni \Delta' \mapsto  B_{\Delta'_0}(\Delta')\in \gB(\cH)$. 
It seems natural to assume that LDP is also valid for the POVMs under consideration. 
At this point, the same argument leading to Proposition \ref{PROP22}, {\em noting that the POVMs are now normalized on the respective finite sets $\Delta_0$ and $\Delta_0'$}, shows that, for $\Delta \subset \Delta_0$, if $\{ B_{\Delta_0}(\Delta), I- B_{\Delta_0}(\Delta)\}$ is localized in ${\cal O}_\Delta$, then ${\cal O}_\Delta\supset \Delta_0$, {\em instead of the much more demanding condition ${\cal O}_\Delta\supset \Sigma$ found when dealing with the effects $A(\Delta)$}. Similarly, if $\Delta' \subset \Delta'_0$ and $\{B_{\Delta'_0}(\Delta'), I- B_{\Delta'_0}(\Delta')\}$ is localized in ${\cal O}_{\Delta'}$, then ${\cal O}_{\Delta'}\supset \Delta'_0$. 
{\em In the absence of obstructions of a different nature which may arise from the specific structure of the considered operators}, if $\Delta_0$ 
and $\Delta_0'$ are sufficiently far apart and causally separated, it should be possible to choose the open sets ${\cal O}_\Delta$ and ${\cal O}_{\Delta'}$ causally separated. At this point,  
NSC or RCC would imply that $[B_{\Delta_0}(\Delta), B_{\Delta'_0}(\Delta')]=0$ through Theorem \ref{B}. With a more sophisticated discussion relying upon Remark \ref{REMBECK}, the same result could be established 
by exploiting Beck's Theorem \ref{PROP3}. 

We stress that this is not a proof of commutativity, since we did not prove that ${\cal O}_\Delta$ and ${\cal O}_{\Delta'}$ can be chosen causally separated! Different kinds of issues may arise.  
However, the problem leading to Corollary \ref{CORR22} does not occur here, because the (normalized) POVMs are now defined only on $\Delta_0$ and $\Delta'_0$  
and they measure probabilities of {\em conditional statements}. For instance, $B_{\Delta_0}(\Delta)$ measures the probability of finding the particle in $\Delta \subset \Delta_0$, conditioned on the fact that we already know that the particle was found in $\Delta_0$ and thus {\em not} in $\Delta'_0$. Hence it is not related to the analogous probabilities of conditional events in $\Delta'\subset \Delta'_0$ measured by $B_{\Delta'_0}(\Delta')$, which considers a disjoint family of events.

Another technical point in favor of possible commutativity of the effects  
$B_{\Delta_0}(\Delta)$ and $B_{\Delta'_0}(\Delta')$ (for $\Delta_0$ and $\Delta'_0$ causally separated) is the following observation. If we explicitly start from a relativistic spatial localization observable $A$, the family of effects $B_{\Delta_0}(\Delta)$, with $\Delta_0$ varying over the bounded Borel subsets of all spatial $3$-planes $\Sigma$ for which $Ker(A(\Delta_0))=\{0\}$ and $\Delta\in \cB(\Delta_0)$, does {\em not} satisfy the hypotheses of Halvorson-Clifton's Theorem \cite{HC}, even if the energy-boundedness requirement is satisfied. This is because the {\em additivity condition} 1 fails: additivity is not valid across different laboratories. A more complicated relation holds. If $\Delta_0 \cap \Delta_0'=\emptyset$ and $\Delta_0,\Delta'_0\subset \Sigma$, 
$$B_{\Delta_0 \cup \Delta_0'}(\Delta) \psi = \frac{1}{\sqrt{A(\Delta_0\cup\Delta_0' )}} \left(\sqrt{A(\Delta_0)} B_{\Delta_0}(\Delta \cap \Delta_0) \sqrt{A(\Delta_0)} \right. $$ 
$$ +  \left. \sqrt{A(\Delta'_0)} B_{\Delta'_0}(\Delta\cap \Delta'_0) \sqrt{A(\Delta'_0)}   \right)   \frac{1}{\sqrt{A(\Delta_0\cup\Delta_0' )}}\psi $$ 
for $\Delta \subset \Delta_0\cup \Delta_0'$ and for $\psi$ in the standard domain of the operator composition on the right-hand side. 
 
In a second paper \cite{M26}, we show that conditional POVMs describing spatial localization in laboratories exist and satisfy the commutativity requirement stated above. The construction is presented in the spirit of the rigorous Araki-Haag-Kastler formulation of (free) QFT in Minkowski spacetime for a massive real scalar field. 

We emphasize that, if $B_{\Delta_0}(\Delta)$ belongs to a local algebra in the sense of AHK, then $B_{\Delta_0}(\Delta)$ should arise as the restriction to the one-particle space of an operator defined on the full Hilbert space, typically a Fock space of a local quantum field theory on Minkowski spacetime with vacuum state $\Omega$. As already observed, the {\em Reeh--Schlieder theorem} then implies that positivity of $B_{\Delta_0}(\Delta)$ on the full Hilbert space is incompatible with the requirement that $\langle \Omega|B_{\Delta_0}(\Delta)\Omega \rangle$ vanish; in other words, a {\em dark-count} phenomenon occurs entirely because of the properties of the vacuum state. This is a well-known issue in local quantum physics, recently re-examined in a quantitative framework in \cite{FC26} by considering a general detector model and deriving a theoretical inequality that compares detector efficiency with the dark-count phenomenon. In that work, the authors also raise the question of whether the spacetime localization region of the detector represented by $B_{\Delta_0}(\Delta)$ is genuinely concentrated around $\Delta$ or should instead be understood as a larger region encompassing the whole laboratory, as suggested by our perspective.

In the scattering-theory literature based on local QFT, by contrast, a different standpoint is usually adopted: detector operators are assumed to have zero vacuum expectation value:
$
\langle \Omega |B\Omega\rangle = 0
$,
and operators $B$ are therefore {\em not} local (but \emph{quasi-local}, as in Haag--Ruelle scattering theory \cite{Haag,Araki}), in contrast to the operators described above.

\section{Conclusions and outlook}  
In this work, we studied the issue of commutativity in the representation of relativistic locality within the description of localization observables for relativistic quantum systems. The celebrated no-go theorem of Halvorson and Clifton shows that the commutativity of causally separated localization effects is incompatible with other apparently natural assumptions concerning the notion of spatial localization. This implies, in particular, that localization operators -- elements of suitable families of POVMs associated with the rest spaces of inertial observers -- cannot belong to local algebras of observables in the sense of Araki--Haag--Kastler. 
Adopting Busch's analysis of commutativity, this paper established, in particular, that the commutativity requirement is not, in this context, a necessary consequence of more elementary principles such as no-signaling or relativistic consistency. This holds for elementary systems, i.e., particles, as soon as one assumes that, in highly idealized experiments in which a position measurement is performed throughout the whole rest space of a reference frame at a given time, they have the propensity to localize in a unique place. In this sense, in the case under consideration, the failure of commutativity does not represent a problem for causality. 
  
On the other hand, we proved that, in principle, {\em both commutativity and localization} can be restored when one refers to a more sophisticated, and perhaps more realistic, notion of {\em conditional localization}. Here, one attempts to define POVMs normalized on a restricted region of space, namely a laboratory. These POVMs correspond to the conditional probability of finding a particle in a subregion of the laboratory, given that it has been detected there. A formal expression for this type of POVM, for states of a particle with small probability of being detected outside the laboratory, has been suggested in terms of general localization observables associated with full rest spaces. It is argued that effects belonging to pairs of these conditional POVMs, associated with causally separated laboratories, may in principle commute. A crucial technical result in this interpretation is the so-called {\em gentle measurement lemma}. 
  
It would be interesting to explore possible connections between these ideas and the modular-theoretic approach developed in \cite{LdO}. This issue will be investigated elsewhere.

As a final remark, we observe that, in light of the results presented here -- and also considering the failure to describe localization by means of a self-adjoint operator -- we may conclude that the notion of position for a physical object in a fully developed relativistic quantum theory has a rather peculiar status. In contrast with non-relativistic quantum theory, there does not appear to be a preferred notion of position observable.
Although one expects that, in the non-relativistic regime, all such notions converge to the naive one -- namely, the position operator (as shown, in particular, for the relativistic localization observables constructed in \cite{M23,DM24}) -- this is no longer the case in the fully relativistic setting.
Furthermore, any notion of spatial localization must necessarily be unsharp. Position in a given reference frame cannot be identified with the joint spectrum of a triple of self-adjoint operators. From this perspective, unsharp observables assume a foundational role and should not be regarded merely as approximate or effective versions of sharp ones. It is plausible that sharp observables, i.e., self-adjoint operators, are instead to be interpreted primarily as generators of symmetries.

A second paper \cite{M26} constructs explicit examples of general localization observables and conditional local position observables using the local AHK approach to QFT. In particular, it is shown that this type of localization observable can be constructed from the (smeared) normally ordered stress--energy operator, taking advantage of some known results on quantum energy inequalities. 

\section*{Acknowledgments} The author is grateful to D. Castrigiano and C. De Rosa for useful discussions about the issues presented in this work. The author is also grateful to the two anonymous referees for their constructive comments, which helped improve the exposition. This work has been written within the activities of INdAM-GNFM.

\section*{Declaration statements}
%{\bf Author contributions}: All authors equally contributed to produce this work. All authors read and approved the final manuscript.\\
{\bf Conflict of Interest}: The author declares no conflict of interest.\\
{\bf Ethical Statement}: This work does not involve human participants, animals, or sensitive data, and therefore no ethical approval was required.\\
{\bf Informed Consent}: Not applicable.\\
{\bf Data Availability}: No datasets were generated or analyzed in this study. All relevant information is contained within the article.\\
{\bf Funding}: This research received no external funding.

\section*{Appendix} 
\appendix

\section{Appendix: Validity of H-C hypothesis (4) for some fermionic PVMs}  \label{SECFC} According to CC, $A(\Delta)\leq A(J^+(\Delta) \cap (\Sigma+ t {a})) \quad \mbox{if $t>0$.}$ Since the effects are actually orthogonal projectors,  
this inequality is equivalent to (see, e.g., \cite{Moretti2})  $$A(\Delta)A(J^+(\Delta)\cap (\Sigma+ t { a}))=A(J^+(\Delta)\cap (\Sigma+ t {a})) A(\Delta)= A(\Delta).$$ On the other hand, if the distance in $\Sigma$ between $\Delta$ and $\Delta'$ is strictly positive, then  $$(\Delta'+ t{a})\cap J^+(\Delta) = \emptyset \quad\mbox{for all sufficiently small $t>0$,}$$  so that $A(J^+(\Delta)\cap (\Sigma+ t {a}))A(\Delta'+ t{a})=0$. Finally,  
$$A(\Delta)A(\Delta'+ t{ a}) = (A(J^+(\Delta)\cap (\Sigma+ t { a}))A(\Delta))A(\Delta'+ t{ a})  $$ $$=A(\Delta) A(J^+(\Delta)\cap (\Sigma+ t {a}))A(\Delta'+ t{a})=0\:.$$ 
In particular, $[A(\Delta),A(\Delta'+ t{a})]= A(\Delta)A(\Delta'+ t{a})- A(\Delta'+ t{a})A(\Delta) $ $$= A(\Delta)A(\Delta'+ t{a})- (A(\Delta)A(\Delta'+ t{a}))^\dagger=0.$$ The case $t<0$ is analogous. 

\section{Proof of Theorem \ref{GMT}}
 Set $A:=||T||^{-1}T$, $p:=tr(\rho A)$, and $S:=\sqrt{A}\rho\sqrt{A}$. Thus, $p\geq 1-\delta$. Lemma I.4 in \cite{W99} gives Winter's original estimate
$$||\rho-S||_1\leq \sqrt{8(1-p)}\leq \sqrt{8\delta},$$
which is the estimate obtained directly from Winter's lemma. The constant in our statement follows by refining the same argument using Hilbert--Schmidt operators. Since $\sqrt{\rho}$ is Hilbert--Schmidt and $||\sqrt{\rho}||_2=1$, we have
$$\rho-S=(I-\sqrt{A})\rho+\sqrt{A}\rho(I-\sqrt{A}).$$
H\"older's inequality for Schatten classes, $||XY||_1\leq ||X||_2||Y||_2$, therefore yields
$$||\rho-S||_1\leq ||(I-\sqrt{A})\sqrt{\rho}||_2||\sqrt{\rho}||_2 +||\sqrt{A}\sqrt{\rho}||_2||\sqrt{\rho}(I-\sqrt{A})||_2.$$
Now
$||\sqrt{A}\sqrt{\rho}||_2^2=tr(\rho A)=p\leq 1$
and
$||(I-\sqrt{A})\sqrt{\rho}||_2^2=||\sqrt{\rho}(I-\sqrt{A})||_2^2=tr\big(\rho(I-\sqrt{A})^2\big).$
Since $(I-\sqrt{A})^2\leq I-A$, it follows that
$tr\big(\rho(I-\sqrt{A})^2\big)\leq tr\big(\rho(I-A)\big)=1-p.$
Consequently,
$||\rho-S||_1\leq 2\sqrt{1-p}.$
Moreover, since $S\geq 0$ and $tr(S)=p$,
$$\left\|S-\frac{S}{p}\right\|_1=1-p.$$
Hence, by the triangle inequality,
$$\left\|\rho-\frac{S}{p}\right\|_1\leq 2\sqrt{1-p}+1-p\leq 2\sqrt{\delta}+\delta.$$
Finally, $S/p=\sqrt{T}\rho\sqrt{T}/tr(T\rho)$, which proves the claim.  $\hfill \Box$

    \end{document}